\documentclass[12pt]{article}

\usepackage{lmodern}
\usepackage[T1]{fontenc}

\usepackage{amsmath,amssymb,amsthm,fullpage,charter,verbatim,calc}
\usepackage{pifont,amsfonts,amsmath,amstext,amsthm,amssymb}
\usepackage{graphicx,color}
\usepackage{vmargin}
\usepackage{mathtools}
 \usepackage{MnSymbol}
\usepackage{abstract}
\setpapersize{A4}
\setmarginsrb{30mm}{20mm}{30mm}{20mm}{12pt}{11pt}{0pt}{11mm}
\setlength\absleftindent{0pt}
\setlength\absrightindent{0pt}

\usepackage{epic,eepic}
\newtheorem{theorem}{Theorem}

\newtheorem{corollary}{Corollary}

\newtheorem{proposition}{Proposition}
\newtheorem{definition}{Definition}

\newcommand*{\clNP}{$ \mathrm{\bf NP} $}
\newcommand*{\cH}{\mathcal{H}}
\newcommand*{\cC}{\mathcal{C}}
\newcommand*{\clAPX}{$\mathrm{\bf APX}$}
\sloppy
\setcounter{secnumdepth}{2}
\title{\bf On Approximation Lower Bounds for TSP with Bounded Metrics}
\author{
Marek Karpinski\thanks{Dept. of Computer Science and the Hausdorff
    Center for Mathematics, University of Bonn.
    Supported in part by DFG grants and the Hausdorff Center grant EXC59-1.
    Email:~\texttt{marek@cs.uni-bonn.de}}
\and    
    Richard Schmied\thanks{Dept. of Computer Science, University of Bonn.
    Work supported by Hausdorff Doctoral Fellowship.
    Email:~\texttt{schmied@cs.uni-bonn.de}}
}
\date{}

\begin{document}
\maketitle

\begin{abstract}
We develop a new method for proving explicit approximation lower bounds for TSP
problems with bounded metrics improving on the best up to now known bounds. 
They almost match the best known bounds for unbounded metric TSP problems.
In particular, we prove the best known lower bound for TSP with bounded metrics 
 for the metric bound equal to $4$.
\end{abstract}

\section{Introduction}
We give first the basic definitions and an overview of the known 
results.

\subsubsection{Traveling Salesperson (TSP) Problem}
 We are given
 a metric space $(V,d)$ and 
 the task consists of  constructing a shortest tour visiting
 each vertex exactly once.

The TSP problem in metric spaces is one of the most fundamental \clNP-hard optimization
problems. 
 The decision version
of this problem was shown early to be \clNP-complete by Karp~\cite{K72}.
 Christofides~\cite{C76} gave 
 an  algorithm approximating the TSP problem within $3/2$, i.e., an algorithm
that produces a tour with length being
at most a factor $3/2$ from the optimum.
  
As for lower bounds, a reduction due to Papadimitriou
and Yannakakis~\cite{PY93}
and the PCP Theorem~\cite{ALM$^+$98}  together imply that there exists some constant, not better
than $1+ 10^{-6}$,  such that it is \clNP-hard
to approximate the TSP problem with distances  either one or two. 
For discussion of bounded metrics TSP, see also \cite{T00}.
This
hardness result was improved by Engebretsen~\cite{E03}, who proved that it is
 \clNP-hard to approximate the TSP problem restricted to distances one and two with an approximation
 factor better than  $5381/5380~(1.00018)$.
B\"ockenhauer and Seibert~\cite{BS00} studied the TSP problem with distances one, two
and three, and obtained an approximation lower bound of $3813/3812~(1.00026)$. 
Then, 
Papadimitriou and Vempala~\cite{PV06} showed that approximating  the general problem  
 with a constant approximation factor 
better than $220/219~(1.00456) $ is \clNP-hard.  Recently, this  bound was improved to
$ 185/184~(1.00543)$ by Lampis~\cite{L12}.

The restricted version of the TSP problem, in which the distance function takes values in 
$\{1,\ldots,B\} $, is referred to as the \emph{$(1,B)$--TSP problem}. 
 
The $(1,2)$--TSP problem can be approximated in polynomial time  with an approximation factor 
$8/7$ due to Berman and Karpinski \cite{BK06}.
 
 On the other hand, Engebretsen and Karpinski~\cite{EK06} proved that it
is \clNP-hard to approximate the $(1,B)$--TSP problem 
with an approximation factor less than  $741/740~(1.00135)$ for $B=2$ 
and $389/388~(1.00257)$ for $B=8$.

In this paper, we prove that the $(1,2)$--TSP and the $(1,4)$--TSP problem are \clNP-hard to approximate 
with an approximation factor less than $535/534$
and $337/336$, respectively.
\subsubsection{Asymmetric Traveling Salesperson (ATSP) Problem}
 We are given
 an asymmetric metric space $(V,d)$, i.e., $d$ is not necessarily symmetric, 
 and we would like to construct a shortest tour visiting
 every vertex exactly once.
 
 The best known algorithm for the ATSP problem 
 approximates the solution within $O(\,\log n/ \log \log n)$, where $n$ is the number of
vertices in the metric space~\cite{AGM$^+$10}.

On the other hand, Papadimitriou and Vempala~\cite{PV06}
proved that the ATSP problem is \clNP-hard to approximate with an approximation factor
less than $117/116~(1.00862) $.

It is conceivable that the special cases with bounded metric are easier to approximate
than the cases when the distance between two points grows  with the size of the 
instance. Clearly, the $(1,B)$--ATSP problem, in which  the distance function is taking
 values in the set $\{1,\ldots, B\}$, can be
approximated within $B$ by just picking any tour as the solution.

When we restrict the problem to distances one and two, it    
 can be approximated within $5/4$ due to Bl{\"a}ser \cite{B04}. 
 
 Furthermore, it is \clNP-hard to approximate this problem with an approximation factor 
 better than $321/320$ 
 ~ \cite{BK06}.
 
For the case $B=8$,  Engebretsen and Karpinski~\cite{EK06} constructed a reduction yielding 
the approximation lower bound $135/134$ for the  $(1,8)$--ATSP problem.

 In this paper, we prove that it is \clNP-hard to approximate the $(1,2)$--ATSP and the $(1,4)$--ATSP
 problem with an approximation factor less 
 than $207/206$ and $141/140$, respectively.

 
%
\subsubsection{Maximum Asymmetric Traveling Salesperson (MAX-ATSP) Problem}
We are given a complete directed graph $G$  and a weight function
$w$ assigning each edge of $G$ a nonnegative weight. The task is to find a 
tour of maximum weight visiting every vertex of $G$ exactly once .

This problem is  well-known and motivated by several applications (cf. \cite{BGS02}).
A good approximation algorithm for the  MAX--ATSP problem
yields a good approximation algorithm for many other optimization problems
such as the  Shortest Superstring problem, the Maximum Compression problem
and the $(1,2)$--ATSP problem. 

The MAX--$(0,1)$--ATSP problem is the restricted version of the MAX-ATSP problem, 
in which the weight function $w$ takes  values in the set $\{0,1\}$.

Vishwanathan~\cite{V92} constructed an approximation preserving reduction proving
 that any $(1/\alpha)$--approximation algorithm for the MAX--$(0,1)$--ATSP problem 
problem transforms in a $(2-\alpha)$-- approximation algorithm for the $(1,2)$--ATSP problem.
Due to this reduction, all negative results concerning the approximation of the $(1,2)$--ATSP
problem imply hardness results for the MAX--$(0,1)$--ATSP problem. 

Since the $(1,2)$--ATSP problem  is \clAPX-hard~\cite{PY93}, there is little hope for polynomial time
approximation algorithms with arbitrary good precision for the MAX--$(0,1)$--ATSP problem.
Due to  the explicit approximation lower bound for the $(1,2)$--ATSP problem given in~\cite{EK06},
 it is \clNP-hard to approximate the MAX--$(0,1)$--ATSP problem  with an approximation factor
less than 
$320/319$.

The best
known approximation algorithm for the
restricted version of this problem is due to Bl\"aser~\cite{B04} and 
achieves an  approximation ratio $5/4$.

For the general problem, Kaplan et al.~\cite{KLS$^+$05}  
designed an algorithm for the MAX--ATSP problem
  yielding the best known approximation upper bound of $3/2$.
  
On the approximation hardness side, Karpinski and Schmied~\cite{KS11}
constructed a reduction yielding the approximation lower bound $208/207$
for the 
MAX--ATSP problem.

In this paper, we prove that
approximating the MAX--$(0,1)$--ATSP problem with an approximation ratio less than 
$206/205$ is \clNP-hard.

\subsubsection{Overview of Known Explicit Approximation Lower Bounds and Our Results  }

\begin{figure}[h]
\begin{center}
\begin{tabular}{||c|c|c|c|c||}
\hline
\hline
 \centering \textbf{ $(1,B)$--ATSP problem}  &  $B=2$   & $B=4$   & $B=8$    &  unbounded    \\
  \hline
 Previously known  & $321/320$  &  $321/320$  
 & $135/134$ &  $117/116$\\
 results & $(1.00312)$  &  $(1.00312)$ 
 & $(1.00746)$ &  $(1.00862)$  \\
  & \cite{EK06}  &  \cite{EK06}  & \cite{EK06}  & \cite{PV06}   \\
 &     &       &     &           \\
 \hline
 Our results             & $207/206$ &  $141/140$ &  &   \\
              & $(1.00485)$ &  $(1.00714)$ &  &   \\
 \hline 
 \centering \textbf{ $(1,B)$--TSP problem}  &  $B=2$   & $B=4$   & $B=8$    &  unbounded    \\
 \hline
  Previously known  & $741/740$ &  $741/740$ & $389/388$ 
  &  $220/219 $  \\
  results & $(1.00135)$ &  $(1.00135)$ & $(1.00257)$ 
  &  $(1.00456) $  \\
   & \cite{EK06}  &  \cite{EK06}  & \cite{EK06}  &  \cite{PV06}  \\
 &     &       &     &           \\
 \hline
 Our results             & $535/534$ &  $337/336$ 
 & $337/336$ &   \\
              & $(1.00187)$ &  $(1.00297)$ 
 & $(1.00297)$ &   \\ 
 \hline 
  \hline
\end{tabular}
\end{center}
\caption{Known explicit approximation lower bounds and the new results.}
\label{fig:atspprobgadgets}
\end{figure}

 \begin{figure}[h]
\begin{center}
\begin{tabular}{||c|c|c|c|c||}
\hline
\hline
         &     \multicolumn{2}{c|}{\centering \textbf{MAX--$(0,1)$--ATSP problem}} 
         & \multicolumn{2}{c||}{\centering \textbf{MAX--ATSP problem} } \\
         \hline
Previously known  &   \multicolumn{2}{c|}{ $320/319$  }  & 
\multicolumn{2}{c||}{ $208/207$ } \\
 results &   \multicolumn{2}{c|}{ $(1.00314 )$  }  & 
\multicolumn{2}{c||}{ $(1.00483)$ } \\
 &     \multicolumn{2}{c|}{ \cite{EK06} }     &     \multicolumn{2}{c||}{ \cite{KS11} }          \\
 &     \multicolumn{2}{c|}{  }     &     \multicolumn{2}{c||}{  }          \\
 \hline
 Our results            &     \multicolumn{2}{c|}{ $206/205$ }   
 &   \multicolumn{2}{c||}{ $206/205$ }          \\
            &     \multicolumn{2}{c|}{ $(1.00487)$ }   
 &   \multicolumn{2}{c||}{ $(1.00487)$ }          \\
 \hline
 \hline
\end{tabular}
\end{center}
\caption{ Known explicit approximation lower bounds and the new results.}
\label{fig:maxatspresults}
\end{figure}

\section{Preliminaries}
In this section, we define the abbreviations and notations used in this paper.\\
\\ 
\noindent
Given a natural number $k$ and a finite set $V$, we use the abbreviation  $[k]$ for  $\{1,\ldots, k\}$
and ${V \choose 2}$ for the set $\{~S\subseteq V \mid |S|=2\}$.

Given an asymmetric metric space $(V,d)$ 
with $V=\{v_1,\ldots, v_n\}$ and $d:V\times V \rightarrow \mathbb{R}_{\geq0}$,
a \emph{Hamiltonian cycle} in $V$ or a \emph{tour} in $V$ 
is a cycle visiting each vertex $v_i$ in $V$ exactly once.

In the remainder, we specify a tour $\sigma$ in $V$ by 
$\sigma=\{(v_1,v_{i_1}), \ldots , (v_{i_{n-1}}, v_1)\}\subseteq V\times V$
or  explicitly by $\sigma=v_1\longrightarrow v_{i_1} \longrightarrow \cdots \longrightarrow 
v_{i_{n-1}} \longrightarrow v_1$.

Given a tour $\sigma$ in $(V,d)$, the length of a tour $\ell(\sigma)$ is defined 
by $$\ell(\sigma)=\sum\limits_{a\in \sigma}d(a).$$ 

Analogously, given a metric space $(V,d)$ with $V=\{v_1,\ldots, v_n\}$ and 
$d: {V \choose 2} \rightarrow \mathbb{R}_{>0}$, we specify a tour $\sigma$ in $V$
by $\sigma=\{~\{v_1,v_{i_1}\}, \ldots , \{v_{i_{n-1}}, v_1\}~\}\subseteq {V \choose 2}$
or 
explicitly by $\sigma=v_1 - v_{i_1} - \cdots - v_{i_{n-1}} - v_1$.

In order to specify, an instance $(V,d)$ of the $(1,2)$--ATSP problem, it suffices to 
identify the arcs $a\in V\times V$ with weight one. 
The same instance is specified by a directed graph $D=(V,A)$, where
$a\in A$ if and only if $d(a)=1$. Analogously, in the $(1,2)$--TSP problem, an instance is
completely specified by a graph $G=(V,E)$. 

In the remainder, we refer to an arc and an edge with weight $x\in \mathbb{N}$ as a \emph{$x$-arc}
and  \emph{$x$-edge}, respectively.

\section{Related Work}

\subsection{Hybrid Problem}

Berman and Karpinski~\cite{BK99}, see also \cite{BK01} and \cite{BK03}  introduced the following
Hybrid problem and proved that this problem is \clNP-hard to approximate
with some constant. 

\begin{definition}[Hybrid problem] 
Given a system of linear
equations mod 2 containing n variables, $m_2$ equations with exactly two variables, and $m_3$
equations with exactly three variables, find an assignment to the variables that satisfies as
many equations as possible.
\end{definition}
\noindent
In~\cite{BK99}, Berman and Karpinski constructed special instances of the Hybrid problem
with bounded occurrences of variables, for which 
 they  proved the following hardness result.
\begin{figure}[h]
\begin{center}
\fbox{ 
\input{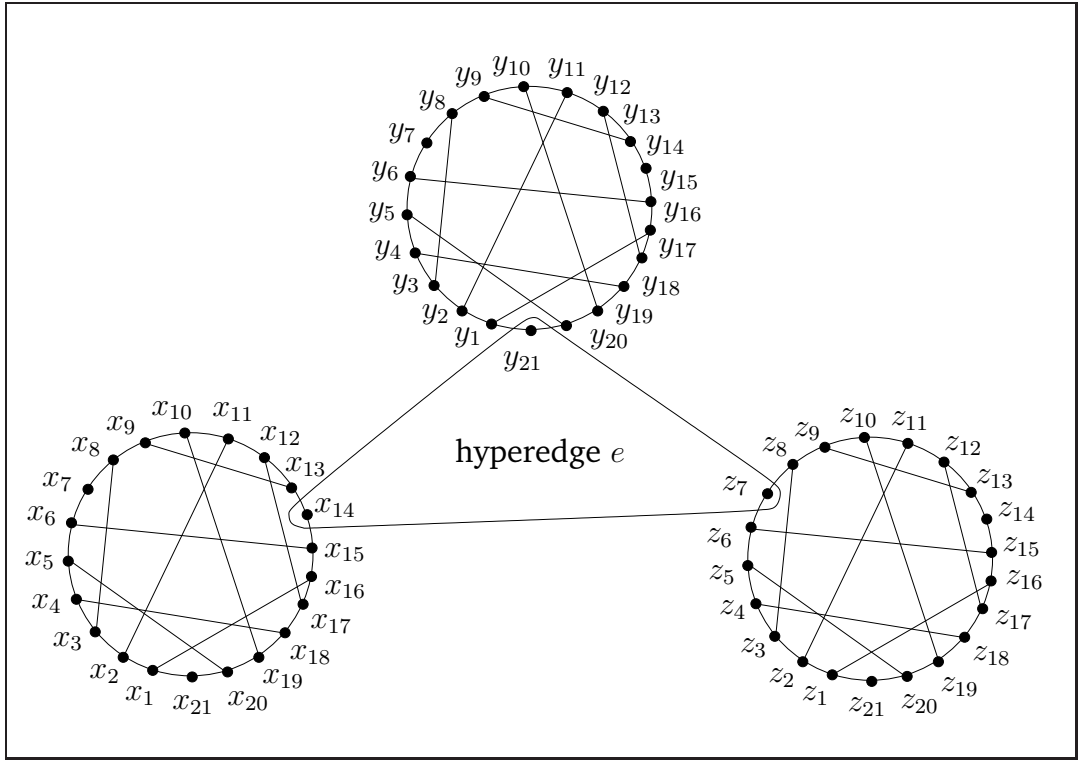}
}
\end{center}
\caption{ An example  of a Hybrid instance with circles 
$C^x$, $C^y$, $C^z$, and hyperedge $e=\{z_7,y_{21},x_{14}\}$.}
\label{fig:hybridssp}
\end{figure} 
\begin{theorem}[\cite{BK99}]\label{ssphybridsatz}
For any constant $\epsilon > 0$, there exists instances of the Hybrid problem with $42\nu$
variables, $60\nu$ equations with exactly two variables, and $2\nu$ equations with exactly three
variables such that:
\begin{enumerate}
\item[$(i)$] Each variable occurs exactly three times.
\item[$(ii)$] Either there is an assignment to the variables that leaves at most $\epsilon\nu$ equations unsatisfied,
or else every assignment to the variables leaves at least $(1-\epsilon)\nu$ equations
unsatisfied.
\item[$(iii)$] It is \clNP-hard to decide which of the two cases in item $(ii)$ above holds.
\end{enumerate}
\end{theorem}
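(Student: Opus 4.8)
The plan is to reduce from H{\aa}stad's optimal inapproximability result for systems of linear equations mod~$2$ with exactly three variables per equation. That result provides, for every $\delta>0$, a family of such systems with $2\nu$ equations for which it is \clNP-hard to distinguish the case that some assignment satisfies at least a $(1-\delta)$-fraction of them from the case that no assignment satisfies more than a $(1/2+\delta)$-fraction. Measured in \emph{unsatisfied} equations this is already a gap between $2\delta\nu$ and $(1-2\delta)\nu$, which has exactly the shape we want; the only defect is that a variable may occur in arbitrarily many equations, whereas the Hybrid instance must have each variable occurring exactly three times. The whole reduction is therefore an occurrence-reduction: the three-variable equations of the source system survive as the three-variable part of the Hybrid instance, and each variable is split into many degree-three copies tied together by two-variable equality equations.

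The device doing the splitting is a \emph{wheel amplifier}, one per source variable $v$ (the circles $C^x,C^y,C^z$ of Figure~\ref{fig:hybridssp}). If $v$ occurs $k$ times, its wheel is a cycle on $7k$ copies, of which $k$ are designated \emph{contacts}; the remaining $6k$ copies are paired by a perfect matching. Every cycle edge and every matching edge becomes a two-variable equation $y_i+y_j=0$, and the $k$ occurrences of $v$ are attached to the $k$ distinct contacts, so that each source equation $v_a+v_b+v_c=c$ becomes a three-variable hyperedge on one contact from each of the three wheels. A contact then lies in two cycle edges and one hyperedge, while a non-contact lies in two cycle edges and one matching edge, so \emph{every} copy occurs in exactly three equations, giving property~$(i)$. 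The cardinalities are forced by the ratio $7{:}1$: summing over all wheels, the number of copies is $7\sum_v k=7\cdot 6\nu=42\nu$, the number of equality equations is $\sum_v\bigl(7k+3k\bigr)=10\cdot 6\nu=60\nu$, and the number of hyperedges is the number $2\nu$ of source equations, which are exactly the three counts claimed.

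For the gap in property~$(ii)$, completeness is immediate: given a source assignment violating at most $2\delta\nu$ equations, set every copy in the wheel of $v$ to the value of $v$; then all $60\nu$ equality equations hold and at most $2\delta\nu\le\epsilon\nu$ hyperedges fail. For soundness one rounds an arbitrary Hybrid assignment $\phi$ wheel by wheel: within the wheel of $v$ let $b_v$ be the majority value among its contacts and let $\phi'$ recolor every copy to $b_v$. Writing $S$ for the set of copies on which $\phi$ disagrees with $b_v$, the equality equations of the wheel that $\phi$ violates are exactly the edges crossing the cut $(S,\bar S)$, and the amplifier property guarantees that this number is at least the number of contacts inside $S$, i.e. the contacts on which $\phi$ disagrees with the majority. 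Recoloring to $\phi'$ changes only those contacts, each flipping at most one hyperedge from satisfied to unsatisfied. Summing over all wheels, the total number of equations unsatisfied by $\phi$ is therefore at least the number of hyperedges unsatisfied by $\phi'$; but $\phi'$ is consistent and hence induces the source assignment $v\mapsto b_v$, which in the soundness case leaves at least $(1-2\delta)\nu\ge(1-\epsilon)\nu$ equations unsatisfied. Taking $\delta=\epsilon/2$ yields the stated dichotomy, and property~$(iii)$ is inherited from the \clNP-hardness of the source gap problem.

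The real work, and the main obstacle, is the existence of wheel amplifiers meeting all constraints simultaneously: on $7k$ vertices with $k$ prescribed contacts, three-regular once each contact absorbs its hyperedge, and with the cut inequality ``edges crossed $\ge$ contacts on the smaller side'' holding for \emph{every} bipartition. The linear size $7k$ and the exact degree three are precisely what pin down the constants $42$, $60$ and $2$, so neither can be relaxed to buy expansion; conversely it is the cut inequality that makes the soundness rounding lossless. Establishing that inequality for an explicit family of wheels — Berman and Karpinski combine a fixed cyclic wiring with a probabilistic argument showing that a random matching on the non-contacts is an amplifier with high probability — is the delicate combinatorial core on which the whole theorem rests.
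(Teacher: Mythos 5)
Your proposal is correct and follows essentially the same route as the source of this theorem: the paper itself imports Theorem~\ref{ssphybridsatz} from \cite{BK99} without proof, and the instance structure it describes immediately afterwards (circles of $7t_x$ copies per variable, contacts at every seventh position, cycle plus matching equations on the checkers, hyperedges coming from the MAX--E3--LIN equations of Theorem~\ref{thm:max-e3-lin}) is exactly the wheel-amplifier reduction you reconstruct. Your parameter counting ($42\nu$, $60\nu$, $2\nu$), the majority-rounding soundness argument via the cut inequality, and your identification of the probabilistic existence proof for the amplifiers as the technical core all match Berman and Karpinski's original argument.
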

\noindent
The instances of the Hybrid problem produced in Theorem~\ref{ssphybridsatz}
 have an even more special structure, which we are going to describe. 
 
 The equations
containing three variables are of the form $x \oplus y \oplus z = \{0, 1\}$. 
These equations  stem from the Theorem of H{\aa}stad~\cite{H01} dealing with the hardness
of approximating equations with exactly three variables. We refer to it as the  MAX--E3--LIN problem,
which can be seen as a special instance of the Hybrid problem.

\begin{theorem}[H{\aa}stad~\cite{H01}]\label{thm:max-e3-lin} 
For any constant $\delta\in (0,\frac{1}{2})$, there exists systems of linear equations
mod 2 with 2m equations and exactly three unknowns in each equation such that: 
\begin{enumerate}
\item[$(i)$]
 Each
variable in the instance occurs a constant number of times, half of them negated and half of
them unnegated. This constant grows as $\Omega(2^{1/\delta})$. 
\item[$(ii)$]
 Either there is an assignment satisfying
all but at most $\delta \cdot m$ equations, or every assignment leaves at least $(1-\delta)m$ equations
unsatisfied. 
\item[$(iii)$] It is \clNP-hard to distinguish between these two cases.
\end{enumerate}
\end{theorem}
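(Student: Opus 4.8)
The plan is to realize this statement as an instance of Håstad's optimal inapproximability theorem for linear equations modulo $2$, obtained by composing a hard \emph{Label Cover} instance with the \emph{long code} and a three-query linearity test. First I would invoke the PCP theorem together with the parallel repetition theorem to fix, for a Label Cover soundness parameter $\gamma$ to be chosen later, an instance $\mathcal{L}$ over a bipartite constraint graph with projection constraints $\pi_e\colon\Sigma_L\to\Sigma_R$, for which it is \clNP-hard to distinguish the case where every constraint is simultaneously satisfiable from the case where at most a $\gamma$ fraction of constraints can be satisfied. Here $\gamma$ can be driven to $0$ at the cost of enlarging the label sets, and this blow-up is precisely what will later force the occurrence constant to grow.

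Next I would write a PCP verifier whose proof consists, for every Label Cover vertex, of the \emph{folded} long code of its intended label, i.e. a table $A\colon\{0,1\}^{\Sigma}\to\{0,1\}$ obeying the folding convention $A(x\oplus\mathbf{1})=A(x)\oplus 1$. For a uniformly random edge $e=(u,v)$ the verifier samples $f,g$ uniformly and a noise vector $\mu$ whose coordinates are independently $1$ with probability $\delta$, and it accepts iff the single linear equation
\[
A_u(f)\;\oplus\;A_v(g)\;\oplus\;A_v\bigl(f\!\circ\!\pi_e \oplus g \oplus \mu\bigr)=0
\]
holds. Each accepted query pattern is literally an equation in three unknowns, namely the three addressed table entries, and the folding guarantees that every variable is queried equally often in negated and unnegated form, yielding exactly the homogeneous-versus-inhomogeneous balance recorded in item $(i)$ and the counting into $2m$ equations.

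For completeness I would check that the honest long codes of a perfectly satisfying labeling pass the test with probability $1-\delta$, so that at most a $\delta$ fraction of the equations fail; this gives the $\delta m$ bound in the first alternative of item $(ii)$. The heart of the argument, and the step I expect to be the main obstacle, is the soundness analysis. Assuming the verifier accepts with probability $\tfrac12+\eps$, I would expand the acceptance probability in the Fourier (character) basis of the two tables, use the noise operator to damp all high-degree characters, and argue that the surviving low-degree Fourier weight of $A_u$ and $A_v$ must concentrate on characters that are \emph{consistent} through the projection $\pi_e$. Sampling a label from the squared Fourier weights then produces a randomized labeling satisfying a $\mathrm{poly}(\eps,\delta)$ fraction of the constraints of $\mathcal{L}$; choosing $\gamma$ below this threshold contradicts the hardness of Label Cover and forces every assignment to falsify at least a $(1-\delta)$ fraction of the equations, giving the second alternative of item $(ii)$ and the \clNP-hardness of item $(iii)$.

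Finally, to obtain the exact regularity asserted in item $(i)$ I would pass from the weighted test distribution to an unweighted multiset of equations, duplicating each equation in proportion to its probability and then balancing so that every variable occurs a fixed number of times. This occurrence count is governed by the size $2^{|\Sigma|}$ of the long-code tables, and since $|\Sigma|$ must be pushed up (equivalently, the repetition depth increased to make $\gamma$ small enough) in order to accommodate the fixed noise rate $\delta$ in the soundness calculation, the number of occurrences per variable grows as $\Omega(2^{1/\delta})$, matching the stated dependence.
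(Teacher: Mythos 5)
The first thing to say is that the paper contains no proof of this statement at all: Theorem~\ref{thm:max-e3-lin} is imported as a black box from H{\aa}stad~\cite{H01} (in the bounded-occurrence form also used in \cite{BK99} and \cite{EK06}) and is only ever \emph{used}, to generate the Hybrid instances of Theorem~\ref{ssphybridsatz}. So there is no ``paper's own proof'' to compare yours against; what you have written is a reconstruction of H{\aa}stad's original argument. As such, your sketch is architecturally faithful to it: Label Cover via the PCP theorem plus parallel repetition, folded long codes, the three-query test $A_u(f)\oplus A_v(g)\oplus A_v(f\circ\pi_e\oplus g\oplus\mu)=0$, completeness $1-\delta$ per test, Fourier-analytic soundness with the noise operator killing high-degree characters, and label decoding from squared Fourier mass. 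The role of folding in producing equations with both right-hand sides $0$ and $1$ and the negated/unnegated balance is also correctly identified.

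Two points would need repair before this is a proof rather than a sketch. First, a quantitative slip that is not merely cosmetic: you conclude that in the soundness case every assignment ``falsifies at least a $(1-\delta)$ fraction of the equations.'' Read as a fraction this is impossible: complementing all variables flips the truth value of every equation with exactly three variables, so for any E3-LIN system some assignment satisfies at least half of the equations. The statement's bounds are counts out of $2m$ equations, i.e.\ soundness means at most $(1+\delta)m$ satisfied (a $\tfrac{1+\delta}{2}$ fraction), and completeness means at most $\delta m$ failures out of $2m$; so your test parameters need the corresponding rescaling (noise rate and advantage of order $\delta/2$, and the Fourier argument run with $\eps$ of that order) before items $(ii)$ and $(iii)$ come out as stated. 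Second, item $(i)$ requires more than you acknowledge in your final paragraph: the test distribution is non-uniform (the noise vector $\mu$ is biased), so producing an \emph{unweighted} system in which every variable occurs exactly the same constant number of times, exactly half negated, requires replicating equations in proportion to their probabilities with controlled precision and then equalizing occurrence counts. This regularization, together with the label-set blow-up from parallel repetition, is where the occurrence constant and its $\Omega(2^{1/\delta})$ growth are actually cashed out; your sketch gestures at the table size $2^{|\Sigma|}$ but leaves the step that makes $(i)$ literally true entirely implicit.
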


For every variable $x$ of the original instance $\mathcal{E}_3$ of the MAX-E3-LIN problem, 
Berman and Karpinski introduced a corresponding set  
of variables $V_x$. If the variable $x$ occurs $t_x$ times in $\mathcal{E}_3$, then, $V_x$ contains
$7t_x$ variables $x_1,\ldots, x_{7t_x}$. 
The variables contained in
$Con(V_x)=\{x_i \mid i\in \{7\nu\mid \nu \in [t_x]\}  \}$ are called \emph{contact variables}, whereas
the variables in $C(V_x)=V_x\backslash Con(V_x)$ are called \emph{checker variables}.

All variables in $V_x$ are connected 
by equations of the form $x_i\oplus x_{i+1}=0$ with $i\in [7t_x-1]$ and $x_1\oplus x_{7t_x}=0$.
In addition to it, there exists equations of the form $x_i \oplus x_j =0$ with $\{i,j\}\in M^x$,
where  $M^x$ defines a perfect matching on the set
of checker variables. 
In the remainder, we refer 
to this construction as the circle $\cC^x$ containing the variables $x\in V_x$.

 Let $\mathcal{E}_3$ be an instance of the MAX--E3--LIN problem and $\cH$ be its corresponding instance of the
Hybrid problem. We denote by $V(\mathcal{E}_3)$ the set of variables which occur in the instance $\mathcal{E}_3$. 
Then, $\cH$ can be represented graphically by $|V(\mathcal{E}_3)|$ circles $\cC^x$ 
with $x\in V(\mathcal{E}_3)$ containing
the variables $V(\cC^x)=\{x_1, \ldots,x_{t_x}\}$ as vertices. 

The edges are identified
by the equations  included in  $\cH$. The equations with exactly three variables are represented by
hyperedges $e$ with cardinality $|e|=3$. 
The equations 
 $x_i\oplus x_{i+1}=0$ induce a cycle containing 
 the vertices $\{x_1, \ldots,x_{t_x}\}$ and the matching equations $x_i\oplus x_{j}=0$ with $\{i,j\}\in M^x$
  induce a perfect matching on the set of checker variables. 
  An example of an instance of the Hybrid problem is depicted in Figure~\ref{fig:hybridssp}.
   
  In summary, we notice that
 there are four type of equations in 
the Hybrid problem $(i)$ the circle equations $x_i\oplus x_{i+1}=0$ with $i\in [7t_x-1]$,  
$(ii)$ circle border equations $x_1\oplus x_{7t_x}$, $(iii)$ matching equations
$x_i\oplus x_j=0$ with $\{i,j\}\in M^x$, and $(iv)$ equations with three variables 
of the form $x\oplus y \oplus z=\{0,1\}$.\\
In the remainder, we may assume that equations with three variables are of the form 
 $x \oplus y \oplus z = 0$ or $\bar{x} \oplus y \oplus z = 0$  due to
 the transformation $\bar{x} \oplus y \oplus z = 0 \equiv x \oplus y \oplus z = 1$.

\subsection{ Approximation Hardness of TSP Problems}
\subsubsection{Reducing the Hybrid Problem to the $(1,2)$--(A)TSP Problem }
\begin{figure}[h]
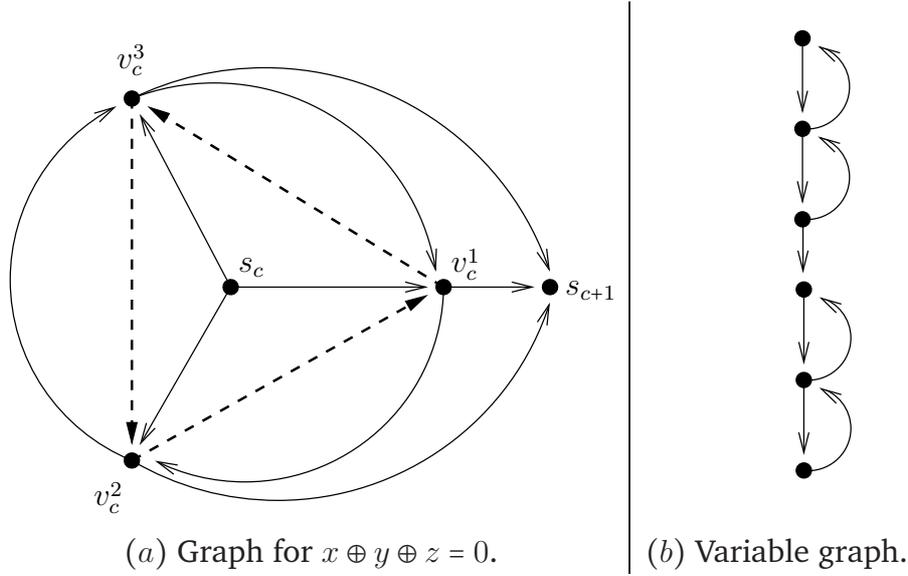

\begin{center}
\begin{tabular}[c]{c|c}
\input{figures/fig12atspgadvar3.pspdftex}& \input{figures/fig12atspek1.pspdftex}\\
 $(a)$ Graph for $x \oplus y \oplus z =0$. & $(b)$ Variable graph.
 \end{tabular}
 \end{center}
\caption{Gadgets used in \cite{EK06}.}
\label{fig:atspprobgadgets}
\end{figure}
Engebretsen and Karpinski~\cite{EK06} constructed an approximation preserving 
reduction from the Hybrid problem
to the $(1,2)$--ATSP problem to prove explicit approximation lower bounds for the 
latter problem. They introduced graphs (gadgets), which simulate variables, equations with
two variables and equations with three variables.  In particular, the graphs corresponding  
 to equations of the form $x \oplus y \oplus z =0 $ and to variables are displayed in Figure~\ref{fig:atspprobgadgets}    
$(a)$ and $(b)$, respectively.

For the graph corresponding to an equation with three variables, they proved the following 
statement.

\begin{proposition}[\cite{EK06}]\label{pro:gadget3atsp}   
There is a Hamiltonian path from $s_c$ to $s_{c+1}$ in Figure~\ref{fig:atspprobgadgets} $(a)$ if and
only if an even number of ticked edges is traversed. 
\end{proposition}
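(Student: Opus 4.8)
The plan is to treat the gadget in Figure~\ref{fig:atspprobgadgets}~$(a)$ as a fixed finite graph and to argue purely combinatorially about which edge subsets can form a Hamiltonian $s_c$--$s_{c+1}$ path. First I would fix notation: label the three ticked edges $e_x$, $e_y$, $e_z$ (one per variable $x,y,z$ entering the equation) and record, for every internal vertex, its full incident edge set. Recall that in any Hamiltonian path the two endpoints $s_c$ and $s_{c+1}$ are each incident to exactly one used edge, while every other vertex is incident to exactly two used edges; these local degree conditions, together with the global requirement that the used edges form a \emph{single connected} path rather than a union of paths and cycles, are the only constraints I will exploit.

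Second, I would establish the parity (the ``only if'' direction) via a counting invariant. The natural device is to partition the gadget into three symmetric ``arms,'' one per variable, each of which the path either enters-and-leaves through its ticked edge or bypasses entirely. A mod-$2$ handshake over the cut separating these arms should then force the number of ticked edges actually traversed to be even: concretely, I would show that toggling any single ticked edge flips the parity of the number of path-endpoints falling on one fixed side of a bipartition of the vertex set, which is incompatible with having exactly the two prescribed endpoints $s_c,s_{c+1}$ unless an even number of ticked edges is used.

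Third, for the ``if'' direction I would exhibit, for each of the four even-parity patterns (no ticked edge used, or exactly two of $\{e_x,e_y,e_z\}$ used), an explicit Hamiltonian $s_c$--$s_{c+1}$ path, and conversely verify that none of the four odd patterns (one ticked edge, or all three) can be completed. By the symmetry of the gadget in the three variables this reduces to checking essentially two representative cases on each side, after which the remaining cases follow by relabeling $x,y,z$.

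The step I expect to be the genuine obstacle is ruling out, in the degree-feasible but odd-parity cases, a covering of all vertices by a single path: the local degree constraints alone permit disjoint unions of paths and cycles, so the argument must certify connectivity and forbid spurious short cycles. I would handle this by locating a cut vertex or bridge whose removal decomposes the gadget and forces the path to cross it a prescribed number of times, which simultaneously delivers the parity obstruction and excludes the unwanted cycles. The delicate part is that this connectivity-and-parity bookkeeping must be carried out against the exact adjacency structure of Figure~\ref{fig:atspprobgadgets}~$(a)$, which is where essentially all of the real work resides.
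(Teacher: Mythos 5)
The first thing to note is that the paper does not prove this proposition at all: it is imported verbatim from \cite{EK06}, where it is established by direct, exhaustive verification on the concrete directed gadget. So the relevant comparison is between your plan and that finite case check.

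Your proposal has a genuine gap: it is a strategy in which every substantive step is deferred, and the deferred steps are exactly where the proposition lives. Concretely: (i) you never fix the adjacency structure of the gadget in Figure~\ref{fig:atspprobgadgets}~$(a)$, so neither the ``only if'' invariant nor the ``if'' constructions are verified for this graph --- and the statement is not true of a generic graph with three distinguished edges, so nothing can be concluded without the concrete structure; (ii) the mod-$2$ invariant is asserted (``should then force''), not established: you exhibit no bipartition of the vertex set and give no reason that toggling a single ticked edge flips the parity of endpoints on one side, so the ``only if'' direction is conjecture; (iii) the gadget is \emph{directed} (it belongs to the $(1,2)$--ATSP reduction), so your local conditions --- every internal vertex ``incident to exactly two used edges'' --- are the wrong ones: a directed Hamiltonian path needs in-degree and out-degree exactly one at internal vertices, and it is precisely the arc orientations that enforce the parity behaviour; an undirected handshake argument can admit traversals that are impossible in the directed gadget; (iv) you correctly identify that the hard part is excluding degree-feasible but disconnected path-plus-cycle configurations, and then propose to ``locate a cut vertex or bridge'' without exhibiting one. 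What remains after your outline is exactly the exhaustive analysis of the gadget that constitutes the proof in \cite{EK06}; the outline neither performs it nor replaces it.
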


\begin{figure}[h]
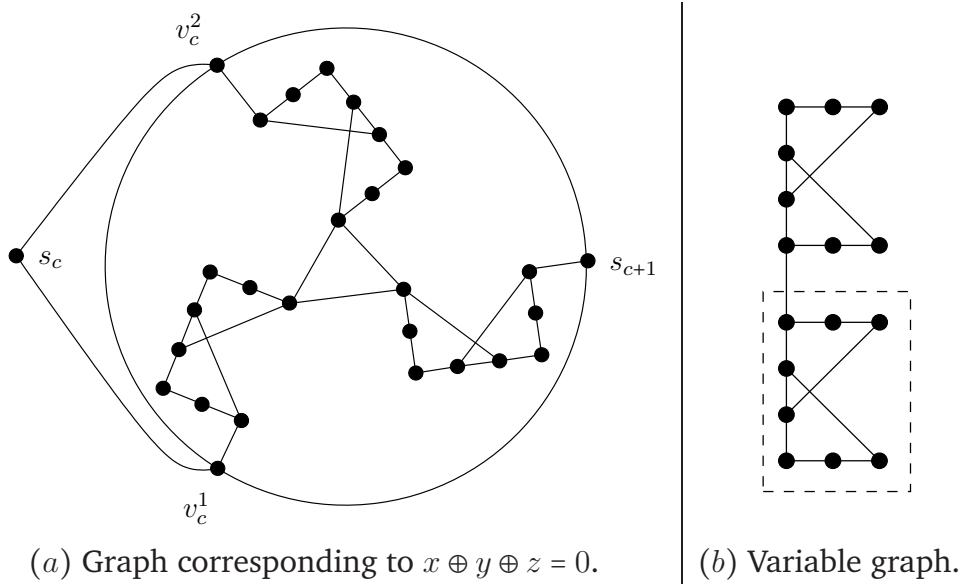

\begin{center}
\begin{tabular}[c]{c|c}
\input{figures/fig12tsp2.pspdftex}& \input{figures/fig12tspvargad.pspdftex}\\
 $(a)$ Graph corresponding to $x\oplus y \oplus z= 0$. & $(b)$ Variable graph.
 \end{tabular}
 \end{center}
\caption{Gadgets used in \cite{EK06} to prove approximation hardness of the $(1,2)$--TSP problem.}
\label{fig:tspprobgadgets}
\end{figure}

A similar reduction was constructed in order to prove explicit approximation lower bounds for the
$(1,2)$--TSP problem. The corresponding graphs are depicted in Figure~\ref{fig:tspprobgadgets}. 
The graph contained in the dashed box in Figure~\ref{fig:tspprobgadgets} $(b)$ 
will  play a crucial role in our reduction and  we refer to it as \emph{parity graph}.
In particular, our variable gadget consists only of a parity graph.

In the reduction of the $(1,2)$--TSP problem,
 the following statement was proved   for the 
 graph corresponding to equations of the form $x\oplus y \oplus z =0$.

\begin{proposition}[\cite{EK06}]\label{pro:gadget3tsp} 
There is a simple path from    $s_c$ to  $s_{c+1}$ in Figure~\ref{fig:tspprobgadgets} $(a)$
 containing the vertices 
$v\in \{v^1_c, v^2_c\}$  if and
only if an even number of parity graphs is  traversed. 
\end{proposition}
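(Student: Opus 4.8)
The plan is to treat the equation gadget in Figure~\ref{fig:tspprobgadgets} $(a)$ as an assembly of three \emph{parity graphs}, one for each of the variables $x,y,z$, glued to the two internal vertices $v^1_c,v^2_c$ and to the terminals $s_c,s_{c+1}$. First I would record the two facts that make the statement meaningful: (i) in the global instance the vertices $v^1_c$ and $v^2_c$ belong only to this gadget, so any tour must cover them by a sub-path that enters and leaves the gadget through $s_c$ and $s_{c+1}$; and (ii) each parity graph behaves as a two-state black box --- the part of the tour meeting it either uses its external connection (\emph{traversed}, encoding the value $1$ of the corresponding variable) or closes itself off internally (\emph{not traversed}, value $0$). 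The whole proposition then reduces to a statement about how these three binary states must combine.

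For the forward direction I would assume a simple path $P$ from $s_c$ to $s_{c+1}$ that contains $v^1_c$ and $v^2_c$, and extract a parity invariant. Orienting $P$ from $s_c$ to $s_{c+1}$, I would count, for each parity graph, the number of times $P$ crosses its boundary into the shared core of the gadget. A not-traversed parity graph contributes an even number of such crossings and a traversed one an odd number; since $s_c$ and $s_{c+1}$ are the only degree-one vertices of $P$ while every internal vertex (in particular $v^1_c,v^2_c$) has degree two, a handshaking argument on the core forces the total number of crossings to be even, hence the number of traversed parity graphs is even. This is exactly the undirected counterpart of the ``even number of ticked edges'' mechanism of Proposition~\ref{pro:gadget3atsp}.

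For the converse I would construct an explicit covering path for each of the four even-parity states $000,110,101,011$. By the symmetry of the three variables it suffices to handle the representatives $000$ and $110$, exhibiting in each case a concrete $s_c$--$s_{c+1}$ path that visits every vertex of the gadget exactly once, in particular $v^1_c$ and $v^2_c$, with the prescribed traversal pattern on the three parity graphs. Checking that each listed sequence uses only $1$-edges, repeats no vertex, and respects the two-state convention on the parity graphs completes this direction.

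The main obstacle I expect is the case analysis in the undirected setting. Unlike the directed gadget of Proposition~\ref{pro:gadget3atsp}, here every edge may be traversed in either direction, which multiplies the number of local routings through each parity graph and through the core; the delicate point is to show that the odd-parity states admit \emph{no} simple covering path whatsoever --- not merely that the natural ones fail --- while simultaneously keeping all four even states realizable. Isolating the clean two-state description of the parity graph and the boundary-crossing invariant is what reduces this potential explosion of cases to the compact parity argument above.
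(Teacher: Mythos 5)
First, a point of comparison: the paper itself contains no proof of this proposition --- it is stated as a quoted result of Engebretsen and Karpinski~\cite{EK06}, whose own argument is a direct case analysis of the concrete gadget. So there is no ``paper's proof'' for your proposal to diverge from; it has to stand on its own, and as it stands the \emph{only if} direction contains a genuine gap.

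The flaw is the claimed crossing invariant. You assert that a traversed parity graph contributes an odd number of crossings between itself and the core, a non-traversed one an even number, and that handshaking forces the total to be even. Under the undirected count, every traversed parity graph contributes exactly two boundary edges (one to enter, one to leave, since $s_c$ and $s_{c+1}$ lie in the core), so the total is $2k$ when $k$ parity graphs are traversed --- automatically even, so the handshaking conclusion is vacuous and constrains $k$ not at all. Under your directed count (``crossings into the core''), a traversed graph does contribute one, but then the total equals $k$ itself, and nothing in a degree-sum argument bounds its parity: a simple path that starts and ends in the core with all internal vertices of degree two is perfectly consistent with $k=1$ or $k=3$. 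The evenness asserted in Proposition~\ref{pro:gadget3tsp} is a property of the \emph{specific wiring} of Figure~\ref{fig:tspprobgadgets}~$(a)$ --- which parity-graph endpoints attach to $s_c$, $v^1_c$, $v^2_c$, $s_{c+1}$ --- and cannot follow from any argument that, like yours, never uses those adjacencies: if the three parity graphs were instead wired in series on the route from $s_c$ to $s_{c+1}$ (with $v^1_c,v^2_c$ on that route), every covering path would traverse all three of them, an odd number. The converse direction has the same character: you promise explicit paths for the patterns $000$ and $110$ and invoke a symmetry among the three variables that the gadget (with only two checking vertices $v^1_c,v^2_c$ serving three parity graphs) is not stated to possess and that you do not establish. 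What is missing in both directions is precisely the content of the proof in \cite{EK06}: the finite case analysis over the possible pairs of path edges incident to $v^1_c$ and $v^2_c$, which is where the parity constraint actually originates.
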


The former mentioned reductions combined with Theorem~\ref{ssphybridsatz} yield the following explicit 
approximation lower bounds.

\begin{theorem}[\cite{EK06}]\label{thm:ek6:12atsp}
It is \clNP-hard to approximate 
the $(1,2)$--ATSP and the $(1,2)$--TSP problem with an approximation ratio less than 
$321/320~(1.00312)$
and $741/740~(1.00135)$, respectively.
\end{theorem}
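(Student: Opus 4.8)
The plan is to give a gap-preserving reduction from the Hybrid instances of Theorem~\ref{ssphybridsatz} to the two TSP variants, and then to read off the numerical ratios from the instance parameters $42\nu$, $60\nu$, $2\nu$ together with the sizes of the gadgets of Figures~\ref{fig:atspprobgadgets} and~\ref{fig:tspprobgadgets}. The gap in item $(ii)$ of Theorem~\ref{ssphybridsatz}---either at most $\epsilon\nu$ or at least $(1-\epsilon)\nu$ equations unsatisfied---will translate into a gap in the optimal tour length, and item $(iii)$ will furnish the \clNP-hardness of telling the two regimes apart.

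First I would assemble the instance from the gadgets. For the $(1,2)$--ATSP problem I use the variable graph of Figure~\ref{fig:atspprobgadgets}$(b)$ for each variable of the Hybrid instance $\cH$ and the three-variable gadget of Figure~\ref{fig:atspprobgadgets}$(a)$ for each equation $x\oplus y\oplus z=\{0,1\}$; the two-variable equations (circle, border, and matching) are realised by linking the contact vertices of the participating variable graphs so that a consistent traversal costs only $1$-arcs. The analogous construction for the $(1,2)$--TSP problem uses the gadgets of Figure~\ref{fig:tspprobgadgets}, whose variable part is the parity graph. In both cases the endpoints $s_c,s_{c+1}$ of successive equation gadgets are chained so that the construction closes up into a single Hamiltonian structure on a vertex set of size $N$.

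The heart of the argument is the correspondence between tours and assignments. I would show that a tour using only $1$-arcs (resp. $1$-edges) exists if and only if $\cH$ is fully satisfiable, and more generally that the minimum number of $2$-arcs (resp. $2$-edges) any tour must use equals the number of equations the induced assignment leaves unsatisfied. The direction ``assignment $\Rightarrow$ tour'' is routine: a satisfying assignment dictates a canonical $1$-arc routing through every gadget. For ``tour $\Rightarrow$ assignment'' the parity statements are decisive: Proposition~\ref{pro:gadget3atsp} (resp. Proposition~\ref{pro:gadget3tsp}) guarantees that a Hamiltonian traversal of the three-variable gadget along $1$-arcs is possible exactly when the parity of the ticked edges respects the right-hand side of the equation, so each violated three-variable equation forces at least one $2$-arc, and likewise each violated two-variable equation forces a $2$-arc at its link. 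One must also verify the converse local optimality---that a tour cannot avoid paying for an unsatisfied equation by rerouting elsewhere---which follows because any deviation from the canonical routing strictly increases the count of $2$-arcs.

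Granting this correspondence, in the completeness case there is a tour of length at most $N+\epsilon\nu$, and in the soundness case every tour has length at least $N+(1-\epsilon)\nu$, so the ratio tends to $(N+\nu)/N$ as $\epsilon\to 0$. The remaining, and in my view main, obstacle is the exact bookkeeping: one must count the vertices contributed by each variable gadget, each equation gadget, and each link across all four equation types, so that $N/\nu$ comes out to precisely the value yielding $321/320$ for the asymmetric gadgets and $741/740$ for the symmetric ones. This is where the specific gadget sizes of~\cite{EK06} and the counts $42\nu$, $60\nu$, $2\nu$ enter; since any miscount shifts the ratio, the delicate part is establishing that the canonical tour has length exactly $N$ and that the penalty per violated equation is exactly one unit. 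With these two facts in hand, the two claimed hardness ratios follow at once.
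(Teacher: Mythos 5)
Your proposal takes essentially the same route as the paper, which does not prove this statement itself but imports it from \cite{EK06} after sketching exactly the ingredients you use: a gadget reduction from the Hybrid instances of Theorem~\ref{ssphybridsatz}, with Propositions~\ref{pro:gadget3atsp} and~\ref{pro:gadget3tsp} serving as the parity lemmas and the gadget-size bookkeeping producing the constants $321/320$ and $741/740$. The one caveat is that the tour-to-assignment direction (your ``converse local optimality'') is where the real work lies in \cite{EK06} --- an extensive case analysis transforming arbitrary tours into consistent ones --- and your sketch defers it just as the paper does.
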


\subsubsection{Explicit Approximation Lower Bounds for the MAX--ATSP problem}

By replacing all edges with weight
two of an instance of the $(1,2)$--ATSP problem by edges of weight zero, 
we obtain an instance of the MAX--$(0,1)$--ATSP problem, which  relates the   
$(1,2)$--ATSP problem to the MAX--ATSP problem in the following sense.
\begin{theorem}[\cite{V92}]\label{thm:atsp-maxatsp}
An $( 1/\alpha)$--approximation algorithm for the MAX--$(0,1)$--ATSP problem implies
an $(2-\alpha)$--approximation algorithm for the $(1,2)$--ATSP problem. 
\end{theorem}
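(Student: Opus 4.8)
The plan is to exploit the affine relationship between the two objective functions that is induced by the transformation sketched just before the statement. Let $(V,d)$ be an instance of the $(1,2)$--ATSP problem with $|V|=n$, and let $w$ be the weight function of the associated MAX--$(0,1)$--ATSP instance obtained by setting $w(a)=2-d(a)$, so that each $1$-arc receives weight one and each $2$-arc receives weight zero; thus $w$ indeed takes values in $\{0,1\}$. The first step is to record that for \emph{any} tour $\sigma$ in $V$ the two objectives are linked by a single identity: since $\sigma$ consists of exactly $n$ arcs, $\ell(\sigma)=\sum_{a\in\sigma}d(a)=\sum_{a\in\sigma}(2-w(a))=2n-w(\sigma)$, where $w(\sigma)=\sum_{a\in\sigma}w(a)$. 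Hence a tour of large weight in the MAX--$(0,1)$--ATSP instance is precisely a tour of small length in the $(1,2)$--ATSP instance.

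Next I would transfer this identity to the optima. Writing $\mathrm{OPT}$ for the minimum tour length and $\mathrm{OPT}^\ast$ for the maximum tour weight, applying the identity to the respective optimal tours yields $\mathrm{OPT}=2n-\mathrm{OPT}^\ast$. Now run the assumed $(1/\alpha)$--approximation algorithm on the MAX--$(0,1)$--ATSP instance (for $\alpha\in(0,1]$); it returns a tour $\sigma$ with $w(\sigma)\geq\alpha\,\mathrm{OPT}^\ast$. Reading $\sigma$ back as a solution of the $(1,2)$--ATSP instance and using the identity together with $\mathrm{OPT}^\ast=2n-\mathrm{OPT}$ gives
\[
\ell(\sigma)=2n-w(\sigma)\leq 2n-\alpha\,\mathrm{OPT}^\ast=2n-\alpha(2n-\mathrm{OPT})=2(1-\alpha)n+\alpha\,\mathrm{OPT}.
\]

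The final step, and the only one requiring an inequality beyond the bijection, is to absorb the additive term $2(1-\alpha)n$ into a multiplicative factor. Here I would use the trivial but essential lower bound $\mathrm{OPT}\geq n$, valid because every tour traverses exactly $n$ arcs and each arc has $d$-weight at least one. Since $1-\alpha\geq 0$, this gives $2(1-\alpha)n\leq 2(1-\alpha)\,\mathrm{OPT}$, and substituting into the displayed bound yields $\ell(\sigma)\leq 2(1-\alpha)\mathrm{OPT}+\alpha\,\mathrm{OPT}=(2-\alpha)\,\mathrm{OPT}$, as required. The main (and essentially the only) subtlety is this last step: the additive slack $2(1-\alpha)n$ is not controllable on its own, and it is precisely the bounded-metric estimate $\mathrm{OPT}\geq n$ that converts the multiplicative guarantee for the maximization problem into the claimed multiplicative guarantee for the minimization problem.
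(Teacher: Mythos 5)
Your proof is correct, and it is exactly the argument the paper has in mind: the paper does not reprove the theorem (it cites Vishwanathan~\cite{V92}) but sketches precisely your transformation --- replacing weight-two arcs by weight-zero arcs --- immediately before the statement. Your write-up simply fills in the standard details: the identity $\ell(\sigma)=2n-w(\sigma)$, the transfer to the optima, and the crucial bound $\mathrm{OPT}\geq n$ that converts the additive slack $2(1-\alpha)n$ into the multiplicative factor $2-\alpha$.
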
   
\noindent
This reduction transforms  every hardness result addressing the $(1,2)$--ATSP problem
 into a hardness result for the MAX--$(0,1)$--ATSP problem.
In particular, Theorem~\ref{thm:ek6:12atsp} implies the best known  explicit 
approximation lower bound for the MAX--$(0,1)$--ATSP problem.
\begin{corollary}
It is \clNP-hard to approximate 
the MAX--$(0,1)$--ATSP problem within any better than  $320/319~(1.00314)$.
\end{corollary}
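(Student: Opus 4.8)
The plan is to obtain the statement as a direct corollary of the two preceding results by composing Vishwanathan's approximation-preserving reduction (Theorem~\ref{thm:atsp-maxatsp}) with the explicit hardness bound for the $(1,2)$--ATSP problem (Theorem~\ref{thm:ek6:12atsp}). Since MAX--$(0,1)$--ATSP is a maximization problem while $(1,2)$--ATSP is a minimization problem, the argument is cleanest in contrapositive form: I would assume, toward a contradiction, that there is a polynomial-time algorithm approximating the MAX--$(0,1)$--ATSP problem within some ratio $1/\alpha$ with $1/\alpha < 320/319$, and show that this contradicts the \clNP-hardness of approximating the $(1,2)$--ATSP problem within $321/320$.

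First I would feed this hypothetical algorithm into the reduction of Theorem~\ref{thm:atsp-maxatsp}, which turns any $(1/\alpha)$--approximation for the MAX--$(0,1)$--ATSP problem into a $(2-\alpha)$--approximation for the $(1,2)$--ATSP problem, and this transformation runs in polynomial time. It then remains only to track the arithmetic of the ratios. From $1/\alpha < 320/319$ I obtain $\alpha > 319/320$, and therefore $2-\alpha < 2 - 319/320 = 321/320$. Thus the composed algorithm would approximate the $(1,2)$--ATSP problem within a ratio strictly less than $321/320$ in polynomial time.

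This directly contradicts Theorem~\ref{thm:ek6:12atsp}, under the standard assumption \clP $\neq$ \clNP, which yields the claim that approximating the MAX--$(0,1)$--ATSP problem within any ratio better than $320/319$ is \clNP-hard. The only genuinely delicate point — and the place where I would be most careful — is the direction in which the inequalities flip when passing between the maximization ratio $1/\alpha$ and the minimization ratio $2-\alpha$: I must verify that ``better than $320/319$'' for the maximization problem corresponds exactly to ``better than $321/320$'' for the minimization problem, so that the two explicit constants match up cleanly. There is no combinatorial or structural difficulty beyond this bookkeeping, since both the reduction and the source hardness result are already established.
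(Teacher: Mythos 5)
Your proof is correct and follows exactly the route the paper intends: the corollary is stated there as an immediate consequence of composing Theorem~\ref{thm:atsp-maxatsp} with Theorem~\ref{thm:ek6:12atsp}, and your arithmetic ($1/\alpha < 320/319 \Rightarrow \alpha > 319/320 \Rightarrow 2-\alpha < 321/320$) is precisely the bookkeeping that makes the constants match. Nothing is missing.
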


\section{Our Contribution}

We now formulate our main results.
\begin{theorem}\label{thm:main12atsp}
Suppose we are given an instance $\cH$ of the Hybrid problem  with $n$ circles,
$m_2$ equations with two variables and  $m_3$ equations with exactly
three variables with the properties described in Theorem~\ref{ssphybridsatz}.\\
\begin{enumerate}
\item[$(i)$]  Then, it is possible to
construct in polynomial time an instance $D_{\cH}$ of
the $(1,2)$--ATSP problem with the following properties: 
\begin{enumerate}
\item[$(a)$] If there exists an assignment $\phi$ to the variables of $\cH$ which leaves at most  $u$ equations
unsatisfied, then, there exist a tour $\sigma_{\phi}$ in $D_{\cH}$  with  length
at most $\ell(\sigma_{\phi})=3m_2+13 m_3+n+1+u$.   
\item[$(b)$] From every tour $\sigma$ in  $D_{\cH}$  with length 
$\ell(\sigma)=3m_2+13 m_3+n+1+u$,
we can construct in polynomial time  
an assignment $\psi_{\sigma}$ to the variables
of $\cH$ that leaves at most $u$ equations in $\cH$ unsatisfied.
\end{enumerate}
\item[$(ii)$]  Furthermore, it is possible to
construct in polynomial time an instance $(V_{\cH}, d_{\cH})$ of
the $(1,4)$--ATSP problem with the following properties: 
\begin{enumerate}
\item[$(a)$] If there exists an assignment $\phi$ to the variables of $\cH$ which leaves at most  $u$ equations
unsatisfied, then, there exist a tour $\sigma_{\phi}$ in $(V_{\cH}, d_{\cH})$  with  length
at most $\ell(\sigma_{\phi})=4m_2+20 m_3+2n+2u+2$.   
\item[$(b)$] From every tour $\sigma$ in  $(V_{\cH}, d_{\cH})$  with length 
$\ell(\sigma)=4m_2+20 m_3+2n+2u+2$,
we can construct in polynomial time  
an assignment $\psi_{\sigma}$ to the variables
of $\cH$ that leaves at most $u$ equations in $\cH$ unsatisfied.
\end{enumerate}
\end{enumerate}
\end{theorem}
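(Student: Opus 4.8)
The plan is to give an explicit gadget reduction, handling parts $(i)$ and $(ii)$ in parallel since they share one combinatorial skeleton and differ only in the edge weights and hence in the additive constants. First I would associate to each circle $\cC^x$ of $\cH$ a cyclic chain of variable gadgets, one per vertex $x_i$, where each variable gadget is a directed copy of the parity graph of Figure~\ref{fig:tspprobgadgets}$(b)$. The design requirement is that any tour be forced to traverse each variable gadget in exactly one of two canonical ways, and we declare $x_i$ to be true or false according to which of the two traversals occurs. I would then realize the three edge types of the Hybrid instance by connector arcs: the circle equations $x_i\oplus x_{i+1}=0$ together with the border equation chain the gadgets of one circle into a cycle, the matching equations $x_i\oplus x_j=0$ with $\{i,j\}\in M^x$ cross-link the corresponding checker gadgets, and each three-variable equation $x\oplus y\oplus z\in\{0,1\}$ is implemented by a directed equation gadget whose ticked edges are wired to the three participating contact gadgets and which obeys the parity property of Proposition~\ref{pro:gadget3atsp}.

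The second step is the cost bookkeeping that fixes the constants. I would weight each two-variable equation gadget so that an equation-satisfying traversal costs $3$ in part $(i)$ (resp. $4$ in part $(ii)$) and an inconsistent traversal costs exactly one unit more (resp. two more, matching the $2u$ term). Likewise each three-variable gadget is weighted so that a parity-respecting Hamiltonian path through it costs $13$ (resp. $20$): by Proposition~\ref{pro:gadget3atsp} such a cheap path exists precisely when an even number of ticked edges is traversed, i.e. precisely when the linear equation holds, and violating it forces one extra unit (resp. two). Summing the baselines $3m_2+13m_3$ (resp. $4m_2+20m_3$), a per-circle closing cost contributing $+n$ (resp. $+2n$), and a single stitching cost $+1$ (resp. $+2$) that joins the circles into one global tour, gives exactly the target lengths $3m_2+13m_3+n+1+u$ and $4m_2+20m_3+2n+2u+2$.

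For the completeness direction $(a)$ I would construct $\sigma_\phi$ directly from an assignment $\phi$ leaving $u$ equations unsatisfied: traverse every variable gadget in the canonical mode prescribed by $\phi$, route each satisfied equation gadget along its cheap path, and pay the single extra unit (resp. two units) inside each of the $u$ unsatisfied equation gadgets. This is manifestly a tour of exactly the claimed length, establishing the upper bound in $(a)$.

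The soundness direction $(b)$ is the technical core and the step I expect to be the main obstacle, because an arbitrary tour $\sigma$ need not respect gadget boundaries. I would therefore prove a normalization lemma: any tour can be rerouted without increasing its length into one that enters and leaves each gadget only through its designated ports and traverses every variable gadget in one of the two canonical modes. The argument is local and relies on the weight structure---every boundary-crossing arc or non-canonical internal detour is at least as expensive as a canonical repair, so it can be short-cut while preserving Hamiltonicity (using weight-$2$ arcs in part $(i)$, and the full $\{1,\dots,4\}$ range in part $(ii)$). A normalized tour reads off a well-defined assignment $\psi_\sigma$, and the same gadget inequalities used in the bookkeeping show that the number of equations left unsatisfied by $\psi_\sigma$ is at most the slack, namely $u=\ell(\sigma)-(3m_2+13m_3+n+1)$ in part $(i)$ and $u=\tfrac12\bigl(\ell(\sigma)-(4m_2+20m_3+2n+2)\bigr)$ in part $(ii)$, which closes both claims.
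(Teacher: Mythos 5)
Your proposal follows essentially the same route as the paper: the same parity-graph variable gadgets chained circle by circle into an inner loop, the same Engebretsen--Karpinski three-variable gadget wired into an outer loop via Proposition~\ref{pro:gadget3atsp}, the same direct construction of $\sigma_\phi$ for direction $(a)$, the same normalization-to-canonical-traversals plus local-repair argument for direction $(b)$, and part $(ii)$ obtained exactly as in the paper by reweighting (weight-$1$ arcs inside parity graphs, weight-$2$ arcs elsewhere, shortest-path metric capped at $4$). The normalization lemma you correctly flag as the technical core is precisely what the paper establishes by the exhaustive case analysis of Proposition~\ref{prop:consistent} and Propositions~\ref{12atspsigtopsimatching}--\ref{12atspsigtopsicirclebord}.
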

\noindent
The former theorem can be used to derive an explicit approximation lower bound for
the $(1,2)$--ATSP problem by reducing instances of the Hybrid problem
of the form described in Theorem~\ref{ssphybridsatz} to the $(1,2)$--ATSP problem.
\begin{corollary}\label{coro:explatsp}
It is  \clNP-hard to approximate  the $(1,2)$--ATSP problem
with an approximation factor less than $207/206~(1.00485)$.
\end{corollary}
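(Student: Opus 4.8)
The plan is to feed the gap instances of the Hybrid problem supplied by Theorem~\ref{ssphybridsatz} into the reduction of Theorem~\ref{thm:main12atsp}$(i)$, read off the induced gap in the optimal $(1,2)$--ATSP tour length, and then push the parameters of the Hybrid instance to their limit.

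First I would fix $\epsilon>0$ and take a Hybrid instance $\cH$ as in Theorem~\ref{ssphybridsatz}, so that $m_2=60\nu$, $m_3=2\nu$, and $\cH$ has $n$ circles. Applying Theorem~\ref{thm:main12atsp}$(i)$ yields a $(1,2)$--ATSP instance $D_{\cH}$ whose optimum is pinned down exactly by the two halves of that theorem. Let $u^{\ast}$ denote the minimum number of equations of $\cH$ left unsatisfied by any assignment. Part $(a)$, applied to an optimal assignment, produces a tour of length at most $3m_2+13m_3+n+1+u^{\ast}$, so $\mathrm{OPT}(D_{\cH})$ is at most this quantity. Conversely, writing $\mathrm{OPT}(D_{\cH})=3m_2+13m_3+n+1+u$, part $(b)$ recovers an assignment leaving at most $u$ equations unsatisfied, whence $u^{\ast}\le u$. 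The two inequalities combine into
\[
\mathrm{OPT}(D_{\cH})=3m_2+13m_3+n+1+u^{\ast}=206\nu+n+1+u^{\ast},
\]
since $3m_2+13m_3=180\nu+26\nu=206\nu$.

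Next I would substitute the two cases of Theorem~\ref{ssphybridsatz}$(ii)$. In the completeness case $u^{\ast}\le\epsilon\nu$, so $\mathrm{OPT}(D_{\cH})\le 206\nu+n+1+\epsilon\nu$; in the soundness case $u^{\ast}\ge(1-\epsilon)\nu$, so $\mathrm{OPT}(D_{\cH})\ge 206\nu+n+1+(1-\epsilon)\nu$. Because deciding which case holds is \clNP-hard and $D_{\cH}$ is constructed in polynomial time, any polynomial-time algorithm approximating the $(1,2)$--ATSP problem within a factor strictly smaller than
\[
\frac{206\nu+n+1+(1-\epsilon)\nu}{206\nu+n+1+\epsilon\nu}
\]
would separate the two cases, implying \clP{}~$=$~\clNP.

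The crux is to show that the additive term $n+1$ does not degrade this ratio. Here $n=|V(\mathcal{E}_3)|$ is the number of variables of the underlying MAX--E3--LIN instance $\mathcal{E}_3$; since its $m_3=2\nu$ three-variable equations account for $6\nu$ variable occurrences and each variable of $\mathcal{E}_3$ occurs a number of times growing as $T=\Omega(2^{1/\delta})$ by Theorem~\ref{thm:max-e3-lin}$(i)$, we obtain $n=O(\nu/T)$. Choosing the H{\aa}stad parameter $\delta$ small therefore drives $\epsilon\to 0$ and $n/\nu=O(1/T)\to 0$ simultaneously, while keeping the reduction polynomial for every fixed $\delta>0$. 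In this limit the displayed ratio tends to $207/206$ and remains strictly below it (its value equals $\tfrac{(207-\epsilon)\nu+n+1}{(206+\epsilon)\nu+n+1}$), so for any target factor $r<207/206$ one can fix $\delta$ small enough that the gap exceeds $r$; hence approximating the $(1,2)$--ATSP problem within any factor less than $207/206$ is \clNP-hard.
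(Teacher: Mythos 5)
Your proposal is correct and follows essentially the same route as the paper's own proof: you feed the gap instances of Theorem~\ref{ssphybridsatz} into Theorem~\ref{thm:main12atsp}$(i)$, obtain optimal tour lengths $(206+\epsilon)\nu+n+1$ versus $(207-\epsilon)\nu+n+1$ in the two cases, and neutralize the additive $n+1$ term by observing that the number of circles is at most $6\nu$ divided by the minimum occurrence count $\Omega(2^{1/\delta})$ from Theorem~\ref{thm:max-e3-lin}, which is exactly the paper's $7/k$ correction. Your explicit identity $\mathrm{OPT}(D_{\cH})=206\nu+n+1+u^{\ast}$ is merely a cleaner packaging of the same two inequalities the paper invokes.
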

\begin{proof}
Let $\mathcal{E}_3$ be an instance of the MAX--E3--LIN problem.
We define $k$ to  
be the minimum number of occurences of a variable in $\mathcal{E}_3$.
According to Theorem~\ref{thm:max-e3-lin}, we may  
choose   $\delta >0$ such that  
$\frac{207-\delta}{206+\delta +\frac {7}k}\geq
\frac{207}{206}-\epsilon$ holds. Given an instance $\mathcal{E}_3$ of the MAX--E3--LIN problem
with $\delta'\in (0,\delta)$, we
 generate  the corresponding instance 
 $\cH$ of the Hybrid problem. Then, 
we construct the corresponding instance $D_{\cH}$ of the $(1,2)$-ATSP problem
 with the properties described in Theorem~\ref{thm:main12atsp}.
We conclude according to Theorem~\ref{ssphybridsatz} that there exist a tour 
in  $D_{\cH}$ 
with length at most 
$$3\cdot60\nu +13\cdot2\nu +\delta'\nu + n+1 \leq(206+\delta'+\frac {n+1}{\nu} )\nu  \leq(206+\delta'+\frac { 6+1}k )\nu $$ 
or the length of a tour in  $D_{\cH}$  is bounded from below by  
$$3\cdot60\nu +13\cdot2\nu +(1-\delta')\nu + n +1\geq (206+(1-\delta') )\nu  \geq (207-\delta')\nu  .$$
From Theorem~\ref{ssphybridsatz}, we know that the two cases above are \clNP-hard
to distinguish.
Hence, for every $\epsilon>0$, it is  \clNP-hard to find a solution to
 the  $(1,2)$-ATSP problem with an approximation ratio
 $\frac{207-\delta'}{206+\delta' +\frac {7}k}\geq
\frac{207}{206}-\epsilon$.
\end{proof}
Analogously, we derive the following statement.
\begin{corollary}
It is  \clNP-hard to approximate  the $(1,4)$--ATSP problem
with an approximation factor less than $141/140~(1.00714)$.
\end{corollary}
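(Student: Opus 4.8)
The plan is to mirror the proof of Corollary~\ref{coro:explatsp} verbatim in structure, merely swapping part~$(i)$ of Theorem~\ref{thm:main12atsp} for part~$(ii)$ and tracking the new constants. Concretely, given an instance $\mathcal{E}_3$ of the MAX--E3--LIN problem I would let $k$ denote the minimum number of occurrences of a variable in $\mathcal{E}_3$, invoke Theorem~\ref{thm:max-e3-lin} to pick $\delta'\in(0,\delta)$ with $k=\Omega(2^{1/\delta})$, pass to the corresponding Hybrid instance $\cH$ with the parameters of Theorem~\ref{ssphybridsatz} (so $m_2=60\nu$, $m_3=2\nu$ and $42\nu$ variables), and finally apply Theorem~\ref{thm:main12atsp}$(ii)$ to obtain the $(1,4)$--ATSP instance $(V_{\cH},d_{\cH})$ whose optimal tour length is $4m_2+20m_3+2n+2u+2$, where $n$ is the number of circles and $u$ the number of unsatisfied equations.

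The heart of the argument is the gap computation. In the completeness case Theorem~\ref{ssphybridsatz}$(ii)$ gives an assignment with $u\le\delta'\nu$, so by Theorem~\ref{thm:main12atsp}$(ii)(a)$ there is a tour of length at most
$$4\cdot 60\nu+20\cdot 2\nu+2n+2\delta'\nu+2\le\Big(280+2\delta'+\tfrac{2n+2}{\nu}\Big)\nu\le\big(280+2\delta'+\tfrac{14}{k}\big)\nu .$$
In the soundness case every assignment leaves $u\ge(1-\delta')\nu$ equations unsatisfied, so by Theorem~\ref{thm:main12atsp}$(ii)(b)$ every tour has length at least
$$4\cdot 60\nu+20\cdot 2\nu+2n+2(1-\delta')\nu+2\ge\big(282-2\delta'\big)\nu .$$
Since by Theorem~\ref{ssphybridsatz}$(iii)$ these two cases are \clNP-hard to distinguish, a ratio strictly below
$$\frac{282-2\delta'}{\,280+2\delta'+14/k\,}=\frac{141-\delta'}{\,140+\delta'+7/k\,}$$
cannot be achieved in polynomial time unless $\clP=\clNP$; letting $\delta'\to 0$ and $k\to\infty$ this quantity tends to $141/140$, which is the claimed bound.

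The only point requiring care — and the step I expect to be the main obstacle, though it is still routine — is controlling the additive term $2n+2$ relative to $\nu$, exactly as the $\tfrac{7}{k}$ term is controlled in Corollary~\ref{coro:explatsp}. Here the $2\nu$ three-variable equations of $\cH$ are precisely the $2m=2\nu$ equations of $\mathcal{E}_3$, which together involve $6\nu$ variable occurrences; since each variable occurs at least $k$ times, $\mathcal{E}_3$ has at most $6\nu/k$ variables and hence $n\le 6\nu/k$ circles, while $\nu\ge k$ gives $2/\nu\le 2/k$. Combining these yields $\tfrac{2n+2}{\nu}\le\tfrac{12}{k}+\tfrac{2}{k}=\tfrac{14}{k}$, which is exactly the estimate used above. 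With this bound in hand the remaining arithmetic is immediate and the statement follows.
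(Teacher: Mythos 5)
Your proposal is correct and is essentially the paper's own argument: the paper proves this corollary simply by saying it follows ``analogously'' to Corollary~\ref{coro:explatsp}, and your computation is exactly that analogue, substituting the $(1,4)$--ATSP length formula $4m_2+20m_3+2n+2u+2$ of Theorem~\ref{thm:main12atsp}$(ii)$ into the gap of Theorem~\ref{ssphybridsatz} to get the ratio $\frac{141-\delta'}{140+\delta'+7/k}\to 141/140$. The only blemish is the justification $\nu\ge k$ (the occurrence count only directly gives $\nu\ge k/2$), but this affects merely the constant in the vanishing $O(1/k)$ term, and the paper's own proof of Corollary~\ref{coro:explatsp} makes the same harmless simplification.
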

\noindent 
For the symmetric version of the problems, we construct reductions from the Hybrid 
problem with similar properties.
\begin{theorem}\label{thm:main12tsp}
Suppose we are given an instance $\cH$ of the Hybrid problem  with $n$ circles,
$m_2$ equations with two variables and  $m_3$ equations with exactly
three variables with the properties described in Theorem~\ref{ssphybridsatz}.
\begin{enumerate}
\item[$(i)$] 
 Then, it is possible to
construct in polynomial time an instance $G_{\cH}$ of
the $(1,2)$--TSP problem with the following properties: 
\begin{enumerate}
\item[$(a)$] If there exists an assignment $\phi$ to the variables of $\cH$ which leaves at most  $u$ equations
unsatisfied, then, there exist a tour $\sigma_{\phi}$ in $G_{\cH}$  with  length
at most $\ell(\sigma_{\phi})=8m_2+27 m_3+3n+1+u$.   
\item[$(b)$] From every tour $\sigma$ in  $G_{\cH}$  with length 
$\ell(\sigma)=8m_2+27 m_3+3n+1+u$,
we can construct in polynomial time  
an assignment $\psi_{\sigma}$ to the variables
of $\cH$ that leaves at most $u$ equations in $\cH$ unsatisfied.
\end{enumerate}
\item[$(ii)$] 
Furthermore, it is possible to
construct in polynomial time an instance $(V_{\cH}, d_{\cH})$ of
the $(1,4)$--TSP problem with the following properties: 
\begin{enumerate}
\item[$(a)$] If there exists an assignment $\phi$ to the variables of $\cH$ which leaves at most  $u$ equations
unsatisfied, then, there exist a tour $\sigma_{\phi}$ in $(V_{\cH}, d_{\cH})$  with  length
at most $\ell(\sigma_{\phi})=10m_2+36 m_3+6n+2+2u$.   
\item[$(b)$] From every tour $\sigma$ in  $(V_{\cH}, d_{\cH})$  with length 
$\ell(\sigma)=10m_2+36 m_3+6n+2+2u$,
we can construct in polynomial time  
an assignment $\psi_{\sigma}$ to the variables
of $\cH$ that leaves at most $u$ equations in $\cH$ unsatisfied.
\end{enumerate}
\end{enumerate}
\end{theorem}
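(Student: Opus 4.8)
The plan is to build $G_{\cH}$ by a gadget assembly that mirrors the graphical representation of $\cH$ from Figure~\ref{fig:hybridssp}: each Hybrid variable is realized by a copy of the \emph{parity graph} from the dashed box of Figure~\ref{fig:tspprobgadgets}$(b)$, the circle, border and matching two-variable equations are realized by connector gadgets wiring consecutive parity graphs into the cycles $\cC^x$ and across the checker matchings $M^x$, and each three-variable equation $x\oplus y\oplus z=0$ is realized by the gadget of Figure~\ref{fig:tspprobgadgets}$(a)$. The weight-$1$ edges encode the ``intended'' traversals, while weight-$2$ edges are the expensive connections a tour is forced to use precisely when an equation is violated. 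The contact variables of each circle are the ports through which the variable gadgets attach to the three-variable equation gadgets, so that the orientation in which a parity graph is traversed encodes the Boolean value assigned to the corresponding variable.

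For the completeness direction $(i)(a)$ I would start from an assignment $\phi$ leaving $u$ equations unsatisfied and read off a tour directly. Each parity graph is traversed in one of its two weight-$1$ modes according to the value $\phi$ gives it; each satisfied two-variable gadget is crossed along its weight-$1$ path; and each satisfied three-variable gadget admits a weight-$1$ simple path from $s_c$ to $s_{c+1}$ through $\{v^1_c,v^2_c\}$ exactly because Proposition~\ref{pro:gadget3tsp} ties the existence of such a path to an even number of traversed parity graphs, i.e.\ to the equation being satisfied. For each of the $u$ violated equations the tour substitutes one weight-$2$ edge for a weight-$1$ edge, adding exactly $1$ to the length. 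Summing the per-gadget contributions $8m_2$, $27m_3$ and $3n$ together with the single global closing edge yields the claimed bound $\ell(\sigma_\phi)\le 8m_2+27m_3+3n+1+u$.

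The soundness direction $(i)(b)$ is the crux. Given a tour $\sigma$ of length $8m_2+27m_3+3n+1+u$, I would put it into a canonical form by a sequence of length-nonincreasing local reroutings, one per gadget, after which every parity graph is traversed consistently (all occurrences of a variable receive the same value) and every equation gadget is crossed along its intended weight-$1$ path unless the equation is violated. Reading the traversal orientation off each normalized parity graph produces the assignment $\psi_\sigma$, and since the normal form spends at least one extra unit of length per violated equation, the number of violations is at most the excess $u$. The main obstacle is exactly this normalization: an optimal-length tour may enter or leave a gadget at non-standard vertices, or split a gadget into several disjoint subpaths, and one must show by a case analysis on the entry/exit pattern that any such deviation either already costs at least as much as a clean violation or can be locally rerouted without lengthening $\sigma$. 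Here the combinatorial rigidity of the parity graph is essential: it must force each variable gadget into one of only two cheap traversal patterns, so that no tour can ``cheat'' by assigning a variable inconsistently more cheaply than paying for the violations this induces.

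Finally, part $(ii)$ follows the same blueprint, but with all gadgets redesigned to exploit edge weights up to $4$. This admits more economical connections while changing every coefficient, and in particular scales the per-violation penalty to $2$, which is why the target length reads $10m_2+36m_3+6n+2+2u$. I would re-run both the completeness construction and the normalization argument with the weight-$4$ gadgets in place of the weight-$2$ ones, replacing the single unit-cost penalty edge by the corresponding cost-$2$ detour, and re-verify the length bookkeeping term by term; the same two obstacles, parity rigidity of the variable gadget and the entry/exit case analysis in normalization, reappear and are handled in the same way.
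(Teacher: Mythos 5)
Your plan for part $(i)$ is essentially the paper's own proof: the same gadget assembly (a parity graph per variable, wiring of the parity graphs along the circles and matchings as in Figure~\ref{fig:12tsp:3}, and the gadget $G^3_c$ of Figure~\ref{fig:12tsp:2} together with Proposition~\ref{pro:gadget3tsp} for the three-variable equations), the same completeness bookkeeping giving $8m_2+27m_3+3n+1+u$, and the same soundness strategy of normalizing an arbitrary tour into a consistent one by length-nonincreasing local reroutings (the paper's Figure~\ref{figcases12tsp}) and then reading $\psi_\sigma$ off the traversal directions of the parity graphs.

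The gap is in part $(ii)$, which you treat as a routine re-parametrization: ``re-run both the completeness construction and the normalization argument with the weight-$4$ gadgets,'' with the parity-rigidity and normalization obstacles ``handled in the same way.'' This is precisely what the paper says cannot be done. First, a $(1,4)$--TSP instance must be a metric with distances in $\{1,\ldots,4\}$, so one cannot simply assign weights up to $4$ to gadget edges; nothing would then guarantee the triangle inequality. The paper instead keeps a two-weight graph $H_{\cH}$ (weight $1$ on parity edges, weight $2$ on all other edges) and defines $d_{\cH}$ as the shortest-path metric of $H_{\cH}$ truncated at $4$, and this truncation is both what makes the instance legal and what changes the game. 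Second, and more importantly, in this truncated metric the parity gadget of Figure~\ref{fig:12tsp:1} loses its rigidity: since every pair of vertices is now at distance at most $4$, a tour can enter and leave a parity graph through shortcut distances, and the consistency-restoring reroutings from part $(i)$ no longer exist. The paper states this explicitly and therefore introduces a different parity gadget (Figure~\ref{fig:14tsp11}), re-proving consistency for it by a fresh case analysis (Figure~\ref{figcases14tsp}). So the actual technical content of part $(ii)$ is exactly the step your plan declares to be ``the same''; as written, your argument for $(ii)$ would fail at the normalization step, and the claimed coefficients $10m_2+36m_3+6n+2+2u$ cannot be verified without the new gadget and the truncated-metric definition.
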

\noindent
Analogously, we combine the former theorem with the explicit approximation lower bound for
 the Hybrid problem
of the form described in Theorem~\ref{ssphybridsatz} yielding the following
approximation hardness result.
\begin{corollary}
It is  \clNP-hard to approximate  the $(1,2)$--TSP and 
the $(1,4)$--TSP problem
with an approximation factor less than $535/534~( 1.00187)$
and $337/336~( 1.00297)$, respectively.
\end{corollary}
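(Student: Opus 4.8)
The plan is to derive the final corollary from Theorem~\ref{thm:main12tsp} in exactly the same way that Corollary~\ref{coro:explatsp} was derived from Theorem~\ref{thm:main12atsp}, by plugging in the specific parameters of the Hybrid instances guaranteed by Theorem~\ref{ssphybridsatz}. Recall that those instances have $n=42\nu$ variables (hence $42\nu$ circles), $m_2=60\nu$ two-variable equations, and $m_3=2\nu$ three-variable equations, and that by item $(ii)$ of Theorem~\ref{ssphybridsatz} it is \clNP-hard to distinguish the case $u\le\epsilon\nu$ from the case $u\ge(1-\epsilon)\nu$. The strategy is: substitute these values into the tour-length formula of Theorem~\ref{thm:main12tsp}, compute the completeness (small-$u$) and soundness (large-$u$) tour lengths, and take their ratio to read off the inapproximability factor.

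First I would treat the $(1,2)$--TSP case using item $(i)$ of Theorem~\ref{thm:main12tsp}, where a tour of length $8m_2+27m_3+3n+1+u$ exists. Substituting $m_2=60\nu$, $m_3=2\nu$, $n=42\nu$ gives a constant term of $8\cdot60\nu+27\cdot2\nu+3\cdot42\nu=(480+54+126)\nu=660\nu$, so the tour length is $660\nu+1+u$. In the completeness case ($u\le\epsilon\nu$) the length is at most roughly $(660+\epsilon)\nu$ once the additive $1$ is absorbed using $n+1\le(6+\tfrac1k)\nu$-type bounds as in the proof of Corollary~\ref{coro:explatsp}; in the soundness case ($u\ge(1-\epsilon)\nu$) it is at least $(661-\epsilon)\nu$. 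The ratio of soundness to completeness tends to $661/660$ as $\epsilon\to0$. However, the claimed bound is $535/534$, not $661/660$, so the key step is to observe that the hardness gap is driven only by the \emph{unsatisfied-equation} term $u$ against the remaining slack, not the whole tour length: the correct normalization subtracts the portion of the tour that is forced regardless of the assignment, leaving a gap of $(1-2\epsilon)\nu$ over a baseline of $534\nu$, which is where $535/534$ comes from. I would therefore redo the arithmetic isolating the genuinely variable contribution, exactly mirroring how $206$ and $207$ appear in Corollary~\ref{coro:explatsp} rather than the full $660$.

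Next I would handle the $(1,4)$--TSP case from item $(ii)$, where a tour of length $10m_2+36m_3+6n+2+2u$ exists. The same substitution yields a constant $10\cdot60\nu+36\cdot2\nu+6\cdot42\nu=(600+72+252)\nu=924\nu$, giving length $924\nu+2+2u$, and the factor-$2$ on $u$ doubles the gap while the baseline scales accordingly, producing the claimed $337/336$. As in the asymmetric proof, for each case I would choose $\delta'\in(0,\delta)$ small enough that the achieved ratio exceeds $\tfrac{535}{534}-\epsilon$ (resp.\ $\tfrac{337}{336}-\epsilon$) for every prescribed $\epsilon>0$, absorbing the additive constants $n+1$ and $2n+2$ into an $O(1/k)$ error term via the occurrence bound $k$ from Theorem~\ref{thm:max-e3-lin}. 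The \clNP-hardness of distinguishing the two cases then transfers directly through the polynomial-time reduction of Theorem~\ref{thm:main12tsp}.

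The main obstacle I anticipate is purely bookkeeping rather than conceptual: getting the normalization right so that the hardness ratio is computed against the correct baseline (the assignment-independent part of the tour) and correctly tracking the factor $2$ in the $(1,4)$ construction, while ensuring the additive constants vanish in the limit. Once the completeness and soundness lengths are written as $(\text{baseline}+\epsilon\nu)$ and $(\text{baseline}+(1-\epsilon)\nu)$ with the right baseline, the two inapproximability factors follow immediately, and the argument is a faithful transcription of the proof of Corollary~\ref{coro:explatsp}.
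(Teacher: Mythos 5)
There is a genuine gap, and it sits exactly at the point your proposal flags as "the key step." In Theorem~\ref{thm:main12tsp} (as in Theorem~\ref{thm:main12atsp}), $n$ denotes the number of \emph{circles} of the Hybrid instance, not the number of its variables. The number of circles equals $|V(\mathcal{E}_3)|$, the number of variables of the underlying MAX--E3--LIN instance; since every such variable occurs at least $k$ times and the total number of occurrences is $3m_3=6\nu$, one has $n\le 6\nu/k$. This is precisely how the paper's proof of Corollary~\ref{coro:explatsp} treats this term: it bounds $(n+1)/\nu$ by $7/k$ and absorbs it into the error, which vanishes because $k$ grows as $\delta\to 0$. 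Your substitution $n=42\nu$ (the number of Hybrid variables) is therefore incorrect, and it is what produces the spurious baselines $660\nu$ and $924\nu$.

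Your attempted repair --- subtracting "the portion of the tour that is forced regardless of the assignment" and measuring the gap against the remaining $534\nu$ --- is not a valid move. For multiplicative inapproximability the factor is the ratio of the actual tour lengths in the NO and YES cases; one cannot subtract a common additive term from both to boost that ratio. If the lengths genuinely were $660\nu+u$, the reduction would prove only a $661/660$ lower bound, not $535/534$. The correct argument needs no normalization at all: with $n\le 6\nu/k$, the completeness length is at most $(8\cdot 60+27\cdot 2)\nu+\delta'\nu+3n+1\le\bigl(534+\delta'+O(1/k)\bigr)\nu$ and the soundness length is at least $(535-\delta')\nu$, giving $535/534$ in the limit; analogously, in the $(1,4)$ case the baseline is $(10\cdot 60+36\cdot 2)\nu=672\nu$ with gap $2u$, so the lengths are $\bigl(672+2\delta'+O(1/k)\bigr)\nu$ versus $(674-2\delta')\nu$, giving $674/672=337/336$. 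Apart from this misidentification of $n$ (and the invalid normalization it forced), the rest of your plan --- substituting the Berman--Karpinski parameters, choosing $\delta'$ small, and letting the $O(1/k)$ terms vanish --- matches the paper's intended argument.
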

\noindent
From Theorem~\ref{thm:atsp-maxatsp} and Corollary~\ref{coro:explatsp}, we obtain the following 
explicit approximation lower bound.
\begin{corollary}
It is  \clNP-hard to approximate  the MAX--$(0,1)$--ATSP problem   
with an approximation factor less than $206/205~( 1.00487)$.
\end{corollary}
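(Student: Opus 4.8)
The plan is to obtain the bound by chaining together two results already available in the excerpt: Vishwanathan's approximation-preserving reduction (Theorem~\ref{thm:atsp-maxatsp}) and the inapproximability of the $(1,2)$--ATSP problem (Corollary~\ref{coro:explatsp}). The argument is a contrapositive one: a hypothetically good approximation for the MAX--$(0,1)$--ATSP problem would, through the reduction, yield a forbidden approximation for the $(1,2)$--ATSP problem. No new gadget or combinatorial construction is needed; the entire content is the transport of hardness across the reduction together with a matching of constants.

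Concretely, I would first suppose toward a contradiction that for some $\alpha$ with $1/\alpha < 206/205$ there is a polynomial-time $(1/\alpha)$--approximation algorithm for the MAX--$(0,1)$--ATSP problem. Applying Theorem~\ref{thm:atsp-maxatsp} to this algorithm produces a polynomial-time $(2-\alpha)$--approximation algorithm for the $(1,2)$--ATSP problem. The decisive step is then purely arithmetic: from $1/\alpha < 206/205$ we get $\alpha > 205/206$, and hence $2-\alpha < 2 - \tfrac{205}{206} = \tfrac{207}{206}$. This would give a $(2-\alpha)$--approximation for the $(1,2)$--ATSP problem with ratio strictly below $207/206$, in direct contradiction to Corollary~\ref{coro:explatsp}. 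Therefore no such algorithm exists unless \clP${}={}$\clNP, which is exactly the claimed hardness.

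I do not expect a genuine obstacle here, since everything follows from cited results; the only point requiring care is the bookkeeping of the approximation ratios across the maximization/minimization boundary. One must check that the factor $1/\alpha$ for the maximization problem maps precisely to the factor $2-\alpha$ for the minimization problem, and that the threshold $206/205$ is the unique value for which $2-\alpha$ meets the known $(1,2)$--ATSP threshold $207/206$. In fact this matching shows that the constant $206/205$ in the statement is forced: it is the image of the $(1,2)$--ATSP lower bound $207/206$ under the substitution $r \mapsto 1/(2 - 1/r)$ induced by Theorem~\ref{thm:atsp-maxatsp}, so any improvement of Corollary~\ref{coro:explatsp} would automatically sharpen this corollary as well.
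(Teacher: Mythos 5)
Your proposal is correct and is exactly the paper's argument: the paper derives this corollary by combining Theorem~\ref{thm:atsp-maxatsp} with Corollary~\ref{coro:explatsp}, and your contradiction argument with the computation $2-\alpha < 2-\tfrac{205}{206} = \tfrac{207}{206}$ is just the explicit bookkeeping the paper leaves implicit. Nothing is missing.
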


\section{Approximation Hardness of the  $(1,2)$--ATSP problem}\label{sec:12atsp}
Before we proceed to the proof of Theorem~\ref{thm:main12atsp} $(i)$, we describe the 
reduction from a high-level view and try to build some intuition.
\subsection{Main Ideas}
As mentioned above, we prove our hardness results by a reduction from the Hybrid 
problem. Let $\mathcal{E}_3$ be an instance of the MAX-E3-LIN problem
and $\cH$  the corresponding instance  of the Hybrid problem.
Every variable $x^l$ in the original instance $\mathcal{E}_3$ introduces an
associated circle $\cC_l$ in the instance $\cH$ as illustrated in Figure~\ref{fig:hybridssp}.

\begin{figure}[h]
\begin{center}
\input{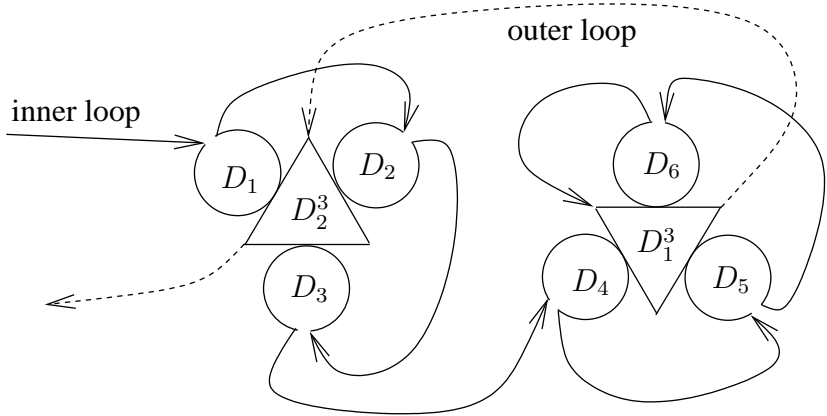}
\end{center}
\caption{An illustration of the instance $D_{\cH}$ and a tour in $D_{\cH}$.}
\label{figwholeconstr}
\end{figure} 

The main idea of our reduction is to make use of the special structure of the circles
in $\cH$. Every circle $\cC_l$ in $\cH$ corresponds to a graph $D_l$ in the  
instance $D_{\cH}$ of the $(1,2)$-ATSP problem. Moreover, $D_l$ is a subgraph
of $D_{\cH}$, which builds almost a cycle. An assignment to
the variable $x^l$ will have a natural interpretation in this reduction. The parity of $x_l$ 
corresponds to the direction
of movement in $D_l$ of the underlying tour.

The circle graphs of $D_{\cH}$ are connected
and build together the \emph{inner loop} of $D_{\cH}$. Every variable $x^l_i$ in a circle 
$\cC_l$ possesses an associated parity graph $P^l_i$ (Figure~\ref{fig:12atsp:1}) in $D_l$ as a subgraph.
The two natural ways to traverse a parity graph will be called $0/1$-traversals   
and correspond to the parity of the variable $x^l_i$. Some of the parity graphs in $D_l$ 
are also contained in graphs $D^3_c$ (Figure~\ref{fig:atspprobgadgets}$(a)$ and 
Figure~\ref{fig:12atsp:3} for a more detailed view) 
corresponding to equations with three variables of the
form $g^3_c \equiv x\oplus y \oplus z =0 $.

These graphs are connected and build the 
\emph{outer loop} of $D_{\cH}$. The whole construction is illustrated in Figure~\ref{figwholeconstr}.
The outer loop of the tour checks whether the $0/1$-traversals of the parity graphs  
correspond to an satisfying assignment of the equations with three variables.
If an underlying equation is not satisfied by the assignment defined via
$0/1$-traversal of the associated parity graph, it will be punished by
using a costly $2$-arc.

\subsection{ Constructing $D_{\cH}$ from a Hybrid Instance $\cH$}\label{sec:constrinst(12)atsp}
Given a instance of the Hybrid problem $\cH$, we are going to construct the corresponding
 instance $D_{\cH}=(V(D_{\cH}),A(D_{\cH}))$ of the $(1,2)$-ATSP problem.\\
 \\
 For every type of equation in $\cH$, we will introduce a specific  graph or
 a specific way to connect the so far constructed subgraphs. In particular, we will distinguish 
 between graphs corresponding to circle equations, matching equations, circle border equations and  
 equations with three variables. First of all, we introduce graphs corresponding to the variables in 
 $\cH$.
 
 \subsubsection{Variable Graphs}
 Let $\cH$ be an instance of the hybrid problem and $\cC_l$ a circle in $\cH$. 
  For every variable $x^l_i$ in the circle $\cC_l$, 
 we introduce the parity graph $P^l_i$ consisting of the vertices $v^{l1}_i$, $v^{l\bot}_i$ and $v^{l0}_i$
 depicted in Figure~\ref{fig:12atsp:1}.
 \begin{figure}[h!]
\begin{center}
 \input{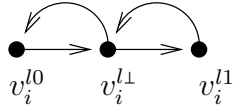}
\end{center}
\caption{ Parity graph $P^l_i$ corresponding to the variable $x^l_i$ in circle $\cC_l$.}
\label{fig:12atsp:1}
\end{figure} 
%
%
\subsubsection{Matching and Circle Equations }
 Let $\cH$ be an instance of the hybrid problem, $\cC_l$ a circle in $\cH$
 and $M_l$ the associated perfect matching. Furthermore, let $x^l_i\oplus x^l_j=0$
 with $e=\{i,j\}\in M_l$ and $i<j$ be a matching equation. Due to the construction of 
 $\cH$, the circle equations $x^l_i\oplus x^l_{i+1}=0$ and $x^l_j\oplus x^l_{j+1}=0$ are both
 contained in $\cC_l$. 
 \begin{figure}[h]
\begin{center}
 \input{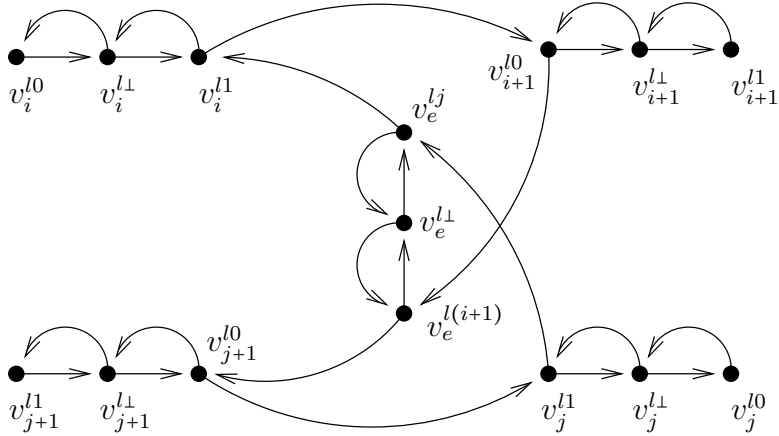}
\end{center}
\caption{ Connecting the parity graph $P^l_e$}
\label{fig:12atsp:2}
\end{figure} 
Then, we introduce the associated 
 parity graph $P^l_e$ consisting of the vertices $v^{lj}_{e}$, $v^{l\bot}_{e}$ and $v^{l(i+1)}_{e}$.  
 In addition to it, we connect the parity graphs $P^l_i$, $P^l_{i+1}$, $P^l_{j}$, $P^l_{j+1}$
 and $P^l_e$ as depicted in Figure~\ref{fig:12atsp:2}.
\noindent
\subsubsection{ Graphs Corresponding to Equations with Three Variables}
 Let $g^3_c\equiv x^l_i\oplus x^s_j \oplus x^k_t =0 $ be an equation with three variables in 
 $\cH$. Then, we introduce the  graph $D^3_c$ (Figure~\ref{fig:atspprobgadgets}) 
 corresponding to the equation $g^3_c$.
 The graph $D^3_c$ includes the vertices $s_{c}$, $v^1_c$, $v^2_c$, $v^3_c$ and $s_{c+1}$.
 \begin{figure}[h]
\begin{center}
 \input{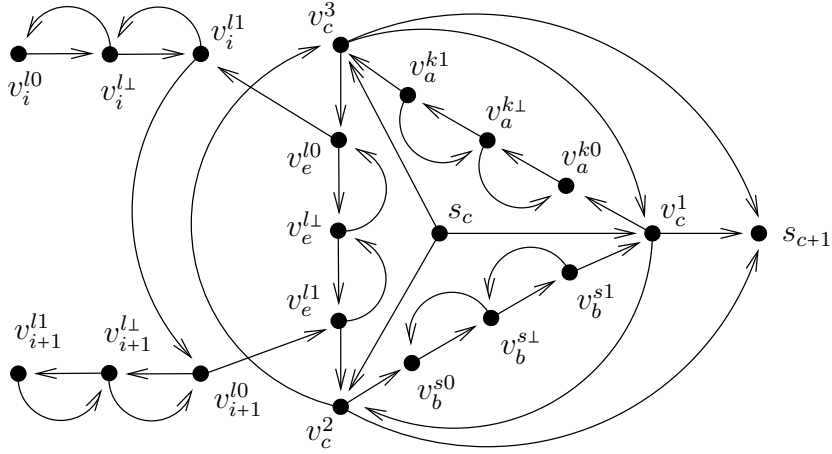} 
\end{center}
\caption{The graph $D^3_c$ corresponding to $g^3_c \equiv  x^l_i\oplus x^s_j \oplus x^u_k =0 $ connected
to graphs corresponding to $x^l_{i} \oplus x^l_{i+1} =0$.}
\label{fig:12atsp:3}
\end{figure} 
Furthermore, it contains the parity graphs $P^l_{e}$, $P^s_{b}$ and $P^k_{a}$ as subgraphs, where
 $e=\{i,i+1\}$, $b=\{j,j+1\}$ and $a=\{t,t+1\}$. Exemplary, we display $D^3_c$ with its connections
 to the graph corresponding to the circle equation $x^l_{i} \oplus x^l_{i+1} =0$ in Figure~\ref{fig:12atsp:3}.
 
 In case of $g^3_c\equiv \bar{x}^l_i\oplus x^s_j \oplus x^u_k =0$, we connect the parity graphs with 
 arcs $(v^{l1}_{i},v^{l0}_{e} )$, $(v^{l0}_{i+1},v^{l1}_{i})$ and $(v^{l1}_{e},v^{l0}_{i+1})$.
\subsubsection{Graphs Corresponding to Circle Border  Equations } 
\begin{figure}[h]
\begin{center}
 \input{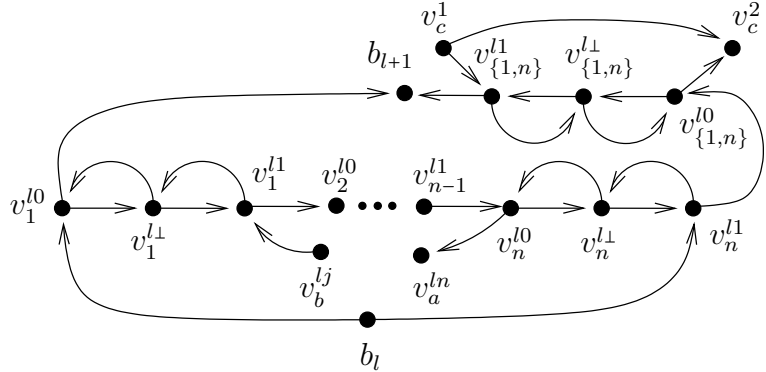}
\end{center}
\caption{The graph  corresponding to $ x^l_1\oplus x^l_n =0 $ }
\label{fig:12atspcb}
\end{figure}  
Let $\cC_l$ and $\cC_{l+1}$ be circles in $\cH$. In addition, let $x^l_1\oplus x^l_n=0$ be the 
 circle border equation of $\cC_l$. Then, we introduce the vertex $b_l$ and  connect it
  to $v^{l0}_1$ and $v^{l1}_n$. Let $b_{l+1}$ be the  vertex corresponding to the circle
 $C_{l+1}$. We draw an arc from $v^{l0}_1$ to $b_{l+1}$. Finally, we connect the vertex 
 $v^{l0}_{\{n,1\}}$ to $b_{l+1}$. Recall that $x_n$ also occurs in the equation $g^3_c$, which is 
 an equation with three variables in $\cH$.
 This construction is illustrated in Figure~\ref{fig:12atspcb}, where only a part of the corresponding graph $D^3_c$
 is depicted. 
 
 Let $\cC_n$ be the last circle in $\cH$. Then, we set $b_{n+1}=s_1$ as $s_1$ is the starting vertex 
 of the graph $D^3_1$ corresponding to the equation $g^3_1$. This is the whole description of the graph $D_{\cH}$.
 
 Next, we are going to describe the associated tour $\sigma_{\phi}$ in $D_{\cH}$ given an assignment to the variables 
 in $\cH$. 
 %
%
 %
 %
%
\subsection{Constructing the Tour $\sigma_{\phi}$ from an Assignment $\phi$   }
Let $\cH$ be an instance of the Hybrid problem and  $D_{\cH}=(V_{\cH},A_{\cH})$ the corresponding
instance of the $(1,2)$-ATSP problem as defined in Section~\ref{sec:constrinst(12)atsp}.
Given an assignment $\phi: V(\cH)\rightarrow \{1,0\}$ to the variables of $\cH$, 
we are going to construct the associated Hamiltonian tour $\sigma_{\phi}$ in $D_{\cH}$. 
In addition to it, we analyze the relation between the length of the tour $\sigma_{\phi}$ and
the number of satisfied equations by $\phi$.\\
\\
Let $\cH$ be an instance of the Hybrid problem consisting of circles $\cC_1, \cC_2, \ldots , \cC_m$ and 
equations with three variables $g^3_j$ with $j \in [m_3]$.  
The associated Hamiltonian tour $\sigma_{\phi}$ in $D_{\cH}$ starts at the vertex $b_1$. 
From a high level view,
$\sigma_{\phi}$ traverses all graphs corresponding to the equations associated with the circle $\cC_1$
ending with the vertex $b_2$. Successively, it passes all graphs for each circle in $\cH$ until
it reaches the vertex $b_m=s_1$ as  $s_1$ is the starting vertex of the graph $D^3_1$. 

At this point, the tour begins to traverse the remaining graphs $D^3_c$
which are simulating the equations with three variables in $\cH$.  
By now, some of the parity graphs appearing in  graphs $D^3_c$ already have been traversed
in the \emph{inner loop} of $\sigma_{\phi}$. 
The \emph{outer loop} checks whether for each graph $D^3_c$, an even number of parity graphs has been
traversed in the inner loop. In every situation, in which $\phi$ does not satisfy the underlying equation,
the tour needs to use a  $2$-arc. This paths, consisting of  $1$-arcs, will be aligned by means of
$2$-arcs in order to build a Hamiltonian tour in $D_{\cH}$. For each circle $\cC_l$, we will 
have to use $2$-arcs in order to obtain 
a Hamiltonian path from $b_l$ to $b_{l+1}$ traversing all graphs associated with $\cC_l$ in some order
except in the case when all variables in the circle have the same parity.
Since we specify only a part of  the tour $\sigma_{\phi}$ we rather refer to a representative tour
from a set of tours having the same length and the same specification.     

In order to analyze the length of the tour in relation to the 
number of satisfied equation, we are going to examine the part of $\sigma_{\phi}$ passing
 the  graphs corresponding to the underlying equation and account the local 
length to the analyzed parts of the tour. Let us begin to describe $\sigma_{\phi}$ passing through 
parity graphs associated to variables 
in $\cH$.
\begin{figure}[h]
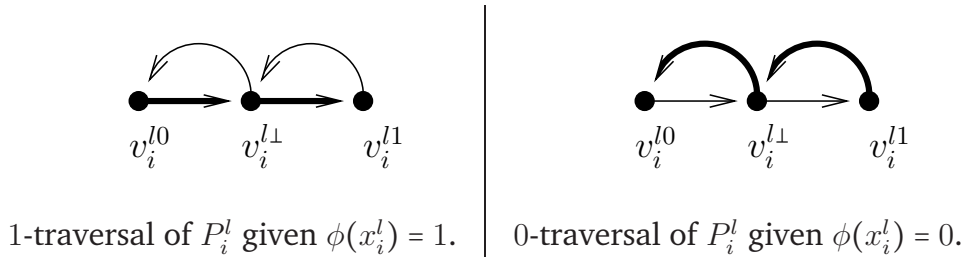

\begin{center}
\begin{tabular}[c]{c|c}
 \input{figures/fig12atsp4a.pspdftex} & \input{figures/fig12atsp4b.pspdftex} \\
 $1$-traversal of $P^l_i$ given $\phi(x^l_i)=1$.   & $0$-traversal of $P^l_i$ given $\phi(x^l_i)=0$.
 \end{tabular}
\end{center}
\caption{Traversal of the graph $P^l_i$ given the assignment $\phi$. }
\label{fig:12atsp:4}
\end{figure}   

\subsubsection{Traversing Parity Graphs}
Let $x^l_i$ be a variable in $\cH$.
Then, the tour $\sigma_{\phi}$ traverses the 
parity graph $P^l_i$ using the path  $v^{l[1-\phi(x^l_i)]}_i \rightarrow v^{l\bot}_i \rightarrow v^{l\phi(x^l_i)}_i$.
In the remainder, we call this part of the tour a \emph{$\phi(x^l_i)$-traversal} of the parity graph. 
In 
Figure~\ref{fig:12atsp:4}, we depicted the corresponding traversals of the graph $P^l_i$ 
given the assignment $\phi(x^l_i)$, 
where the traversed arcs are illustrated by thick arrows. In both cases, we associate the local length 
$2$ with this part of the tour.
\begin{figure}[h!]
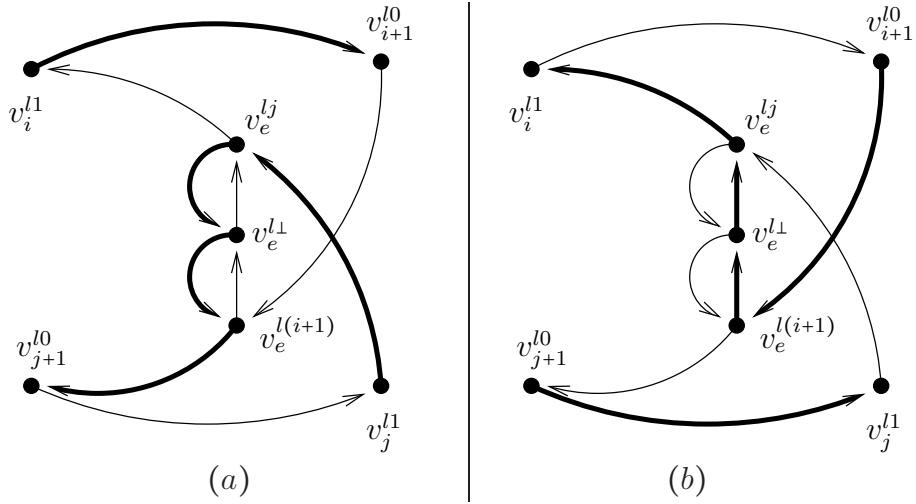

\begin{center}
\begin{tabular}[c]{c|c}
 \input{figures/fig12atsp5.pspdftex}& \input{figures/fig12atsp6.pspdftex}\\
 $(a)$ & $(b)$
 \end{tabular}
\end{center}
\caption{1. Case  $\phi(x_i) \oplus \phi(x_{i+1})=0$, 
$\phi(x_i) \oplus \phi(x_{j})=0$
and $\phi(x_j) \oplus \phi(x_{j+1})=0$. }
\label{fig:12atsp:5}
\end{figure} 
\begin{figure}[h!]
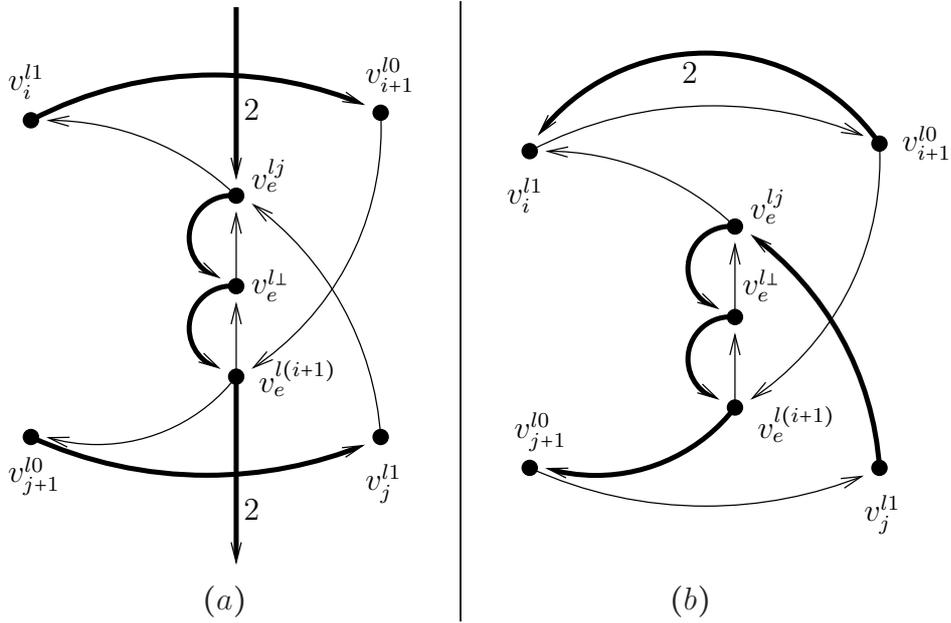

\begin{center}
\begin{tabular}[c]{c|c}
 \input{figures/fig12atsp7.pspdftex}& \input{figures/fig12atsp8.pspdftex}\\
 $(a)$ & $(b)$
 \end{tabular}
\end{center}
\caption{2. Case $\phi(x_i) \oplus \phi(x_{i+1})=0$, 
$\phi(x_i) \oplus \phi(x_{j})=1$
and $\phi(x_j) \oplus \phi(x_{j+1})=0$. }
\label{fig:12atsp:6}
\end{figure} 
\subsubsection{Traversing Graphs Corresponding to  Matching Equations}
Let  $\cC_l$ be a circle in $\cH$ and $x^l_i \oplus x^l_{j}=0$ with $e=\{i,j\}\in M_l$ 
a matching equation.
Given  $x_i \oplus x_{i+1}=0$, $x_i \oplus x_{j}=0$, 
$x_j \oplus x_{j+1}=0$
and the assignment $\phi$, we are going to construct a tour  through the corresponding parity graphs  
in dependence of $\phi$.
 We begin with the case $\phi(x_i) \oplus \phi(x_{i+1})=0$, 
$\phi(x_i) \oplus \phi(x_{j})=0$
and $\phi(x_j) \oplus \phi(x_{j+1})=0$.\\
\\
\noindent
\textbf{1. Case  $\phi(x_i) \oplus \phi(x_{i+1})=0$, 
$\phi(x_i) \oplus \phi(x_{j})=0$
and $\phi(x_j) \oplus \phi(x_{j+1})=0$:}\\
In this case, we traverse the corresponding parity graphs as depicted in 
Figure~\ref{fig:12atsp:5}.
In Figure~\ref{fig:12atsp:5} $(a)$,
we have $\phi(x_i)=\phi(x_{i+1})=\phi(x_j)=\phi(x_{j+1})=1$, whereas in Figure~\ref{fig:12atsp:5} $(b)$, we have   
$\phi(x_i)=\phi(x_{i+1})=\phi(x_j)=\phi(x_{j+1})=0$. 
In both cases, this part of the tour has local length $5$.\\

\textbf{2. Case $\phi(x_i) \oplus \phi(x_{i+1})=0$, 
$\phi(x_i) \oplus \phi(x_{j})=1$
and $\phi(x_j) \oplus \phi(x_{j+1})=0$:}\\
The tour $\sigma_{\phi}$ is pictured in Figure~\ref{fig:12atsp:6} $(a)$ and $(b)$.

In the case $\phi(x_i)=\phi(x_{i+1})=0$ and $\phi(x_j)=\phi(x_{j+1})=1$
depicted in Figure~\ref{fig:12atsp:6} $(a)$, we are forced to enter and 
leave the parity graph $P^l_e$ via $2$-arcs.  So far, we 
associate the local length $6$ with this part of the tour. 

In  Figure~\ref{fig:12atsp:6} $(b)$, we have   
$\phi(x_i)=\phi(x_{i+1})=1$ and $\phi(x_j)=\phi(x_{j+1})=0$. This part of the tour
$\sigma_{\phi}$ contains one $2$-arc yielding the local length $6$.\\
\\
\textbf{3. Case $\phi(x_i) \oplus \phi(x_{i+1})=0$, 
$\phi(x_i) \oplus \phi(x_{j})=0$
and $\phi(x_j) \oplus \phi(x_{j+1})=1$:}\\
In dependence of $\phi$, we traverse the corresponding parity graphs in the way
as depicted in 
Figure~\ref{fig:12atsp:7}. 

\begin{figure}[h]
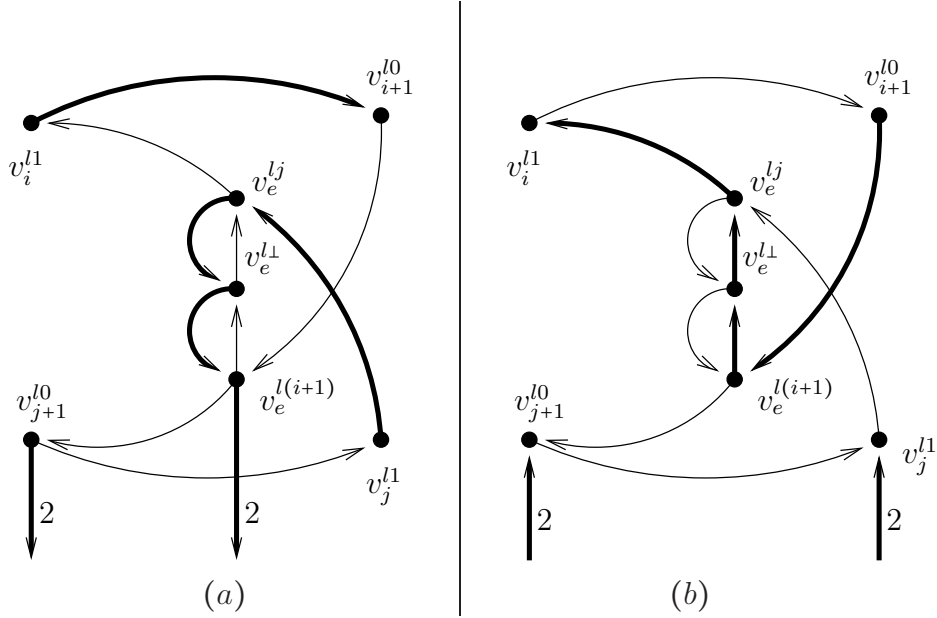

\begin{center}
\begin{tabular}[c]{c|c}
 \input{figures/fig12atsp9.pspdftex}& \input{figures/fig12atsp10.pspdftex}\\
  $(a)$ & $(b)$
 \end{tabular}
\end{center}
\caption{3. Case  $\phi(x_i) \oplus \phi(x_{i+1})=0$, 
$\phi(x_i) \oplus \phi(x_{j})=0$
and $\phi(x_j) \oplus \phi(x_{j+1})=1$. }
\label{fig:12atsp:7}
\end{figure} 
The situation, in which $\phi(x_i)=\phi(x_{i+1})=1$ and $\phi(x_j)=1\neq  \phi(x_{j+1})$ holds, is depicted
in  Figure~\ref{fig:12atsp:7} $(a)$. On the other hand, if we have   
$\phi(x_i)=\phi(x_{i+1})=0$ and  $\phi(x_j)=0\neq  \phi(x_{j+1})$, the tour is pictured in Figure~\ref{fig:12atsp:7} $(b)$.
 In both cases, 
we associate the local  length $6$.

\begin{figure}[h!]
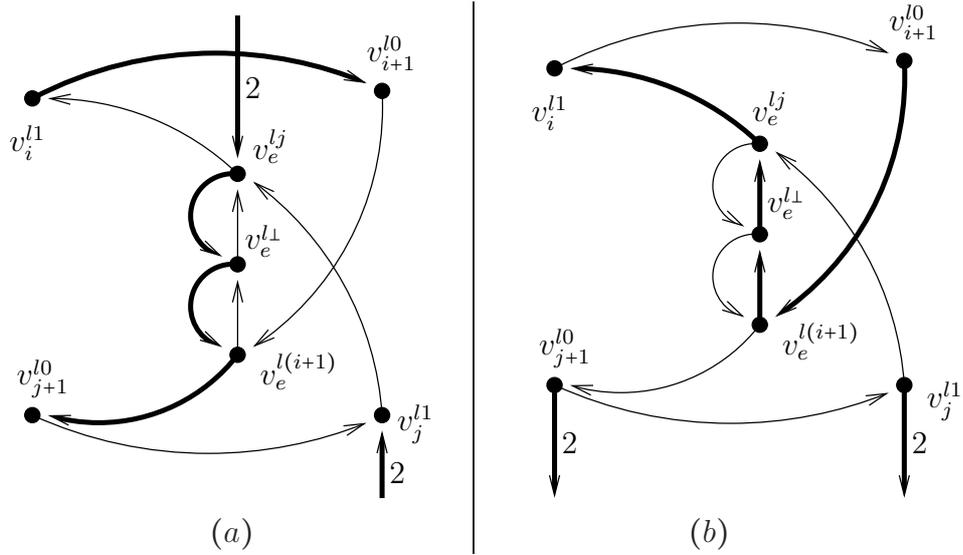

\begin{center}
\begin{tabular}[c]{c|c}
 \input{figures/fig12atsp11.pspdftex}& \input{figures/fig12atsp12.pspdftex}\\
 $(a)$ & $(b)$
 \end{tabular}
\end{center}
\caption{4. Case $\phi(x_i) \oplus \phi(x_{i+1})=0$, 
$\phi(x_i) \oplus \phi(x_{j})=1$
and $\phi(x_j) \oplus \phi(x_{j+1})=1$. }
\label{fig:12atsp:8}
\end{figure}
%
\textbf{4. Case $\phi(x_i) \oplus \phi(x_{i+1})=0$, 
$\phi(x_i) \oplus \phi(x_{j})=1$
and $\phi(x_j) \oplus \phi(x_{j+1})=1$:}\\
The tour $\sigma_{\phi}$ is displayed in  
Figure~\ref{fig:12atsp:8}. 
In  Figure~\ref{fig:12atsp:8} $(a)$,
we are given  $\phi(x_i)=\phi(x_{i+1})=1$ and $\phi(x_j)\neq \phi(x_{j+1})=1$, whereas in 
Figure~\ref{fig:12atsp:8} $(b)$, we have   
$\phi(x_i)=\phi(x_{i+1})=0$ and $\phi(x_j)\neq \phi(x_{j+1})=0$. In both cases, we associate the local
 length $6$ with this 
part of the tour.

\begin{figure}[h!]
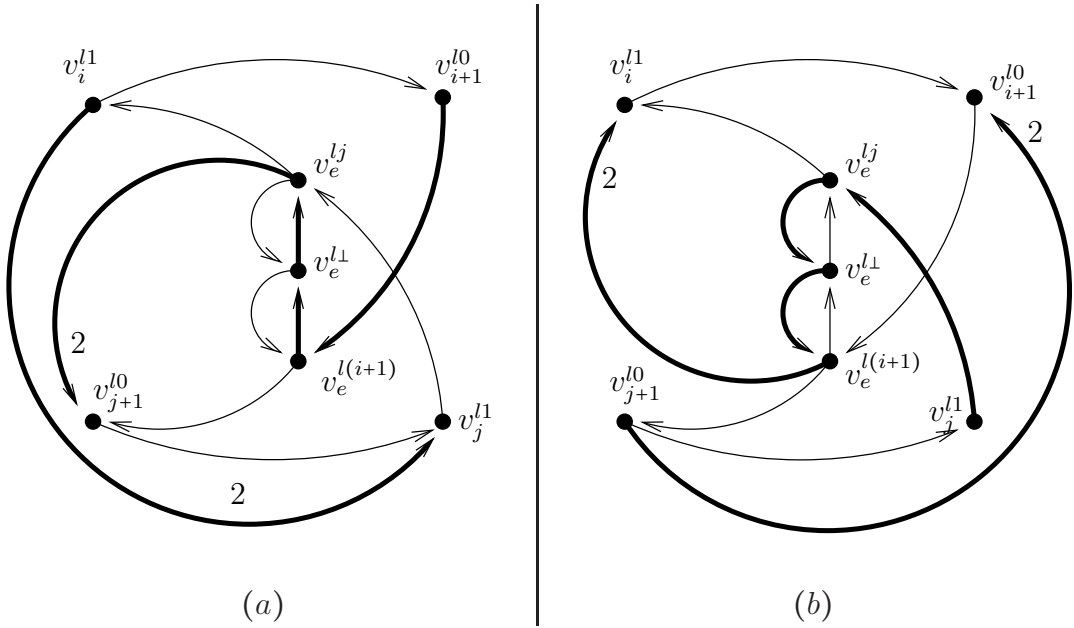

\begin{center}
\begin{tabular}[c]{c|c}
 \input{figures/fig12atsp13.pspdftex}& \input{figures/fig12atsp14.pspdftex}\\
   $(a)$ & $(b)$
 \end{tabular}
\end{center}
\caption{5. Case  $\phi(x_i) \oplus \phi(x_{i+1})=1$, $\phi(x_i) \oplus \phi(x_{j})=1$, 
$\phi(x_j) \oplus \phi(x_{j+1})=1$. }
\label{fig:12atsp:9}
\end{figure} 
\begin{figure}[h!]
\begin{center}
\begin{tabular}[c]{c|c}
 \input{figures/fig12atsp15.pspdftex}& \input{figures/fig12atsp16.pspdftex}\\
  $(a)$ & $(b)$
 \end{tabular}
\end{center}
\caption{6. Case  $\phi(x_i) \oplus \phi(x_{i+1})=1$, $\phi(x_i) \oplus \phi(x_{j})=0$, 
and $\phi(x_j) \oplus \phi(x_{j+1})=1$. }
\label{fig:12atsp:10}
\end{figure} 
\noindent
\textbf{5. Case  $\phi(x_i) \oplus \phi(x_{i+1})=1$, $\phi(x_i) \oplus \phi(x_{j})=1$ and 
$\phi(x_j) \oplus \phi(x_{j+1})=1$:}\\
In this case, we traverse the corresponding parity graphs as depicted in 
Figure~\ref{fig:12atsp:9}.\\
%
In Figure~\ref{fig:12atsp:9} $(a)$,
we notice that $\phi(x_i)\neq \phi(x_{i+1})=0$ and $\phi(x_j) \neq \phi(x_{j+1})=1$, 
whereas in $(b)$, we have   
$\phi(x_i) \neq \phi(x_{i+1})=1$ and  $\phi(x_j)\neq \phi(x_{j+1})=0$. This part of the tour has 
local length $7$.\\
\\
\noindent
\textbf{6. Case  $\phi(x_i) \oplus \phi(x_{i+1})=1$, $\phi(x_i) \oplus \phi(x_{j})=0$ and
$\phi(x_j) \oplus \phi(x_{j+1})=1$:}\\
In this case, we traverse the corresponding parity graphs as depicted in 
Figure~\ref{fig:12atsp:10}. 
In Figure~\ref{fig:12atsp:10} $(a)$,
we have  $\phi(x_i)\neq \phi(x_{i+1})=0$ and $\phi(x_j)\neq \phi(x_{j+1})=0$, whereas in $(b)$, we have   
$\phi(x_i)\neq \phi(x_{i+1})=1$ and  $\phi(x_j)\neq \phi(x_{j+1})=1$. This part of the tour has
local length $7$.

%
%
%
\begin{figure}[h!]
\begin{center}
\begin{tabular}[c]{c|c}
 \input{figures/fig12atsp17.pspdftex}& \input{figures/fig12atsp18.pspdftex}\\
 $(a)$ & $(b)$
 \end{tabular}
\end{center}
\caption{7. Case  $\phi(x_i) \oplus \phi(x_{i+1})=1$, $\phi(x_i) \oplus \phi(x_{j})=0$ and
$\phi(x_j) \oplus \phi(x_{j+1})=0$. }
\label{fig:12atsp:11}
\end{figure} 
\begin{figure}[h!]
\begin{center}
\begin{tabular}[c]{c|c}
 \input{figures/fig12atsp19.pspdftex}& \input{figures/fig12atsp20.pspdftex}\\
   $(a)$ & $(b)$
 \end{tabular}
\end{center}
\caption{8. Case  $\phi(x_i) \oplus \phi(x_{i+1})=1$, $\phi(x_i) \oplus \phi(x_{j})=1$ 
and $\phi(x_j) \oplus \phi(x_{j+1})=0$. }
\label{fig:12atsp:12}
\end{figure} 
\noindent
\textbf{7. Case  $\phi(x_i) \oplus \phi(x_{i+1})=1$, $\phi(x_i) \oplus \phi(x_{j})=0$ and
$\phi(x_j) \oplus \phi(x_{j+1})=0$:}\\
In this case, we traverse the corresponding parity graphs as depicted in 
Figure~\ref{fig:12atsp:11}. 
In  Figure~\ref{fig:12atsp:11} $(a)$,
we have $\phi(x_i) \neq \phi(x_{i+1})=1$ and  $\phi(x_j)=\phi(x_{j+1})=0$, whereas in $(b)$, we have   
$\phi(x_i) \neq \phi(x_{i+1})=0$ and $\phi(x_j)=\phi(x_{j+1})=1$. This part of the tour has
local  length $6$.\\
\\
\textbf{8. Case  $\phi(x_i) \oplus \phi(x_{i+1})=1$, $\phi(x_i) \oplus \phi(x_{j})=1$ 
and $\phi(x_j) \oplus \phi(x_{j+1})=0$:}\\
In the final case, we traverse the corresponding parity graphs as depicted in 
Figure~\ref{fig:12atsp:12}. 
In  Figure~\ref{fig:12atsp:12} $(a)$,
we notice that  $\phi(x_i)\neq \phi(x_{i+1})=0$ and $\phi(x_j)=\phi(x_{j+1})=0$, 
whereas in $(b)$, we have   
$\phi(x_i) \neq \phi(x_{i+1})=1$ and $\phi(x_j)=\phi(x_{j+1})=1$. 
This part of the tour has local length $6$.\\
\\
Our analysis yields the following proposition.
\begin{proposition}\label{prep:phitosigmat}
Let   $x^l_i \oplus x^l_{i+1}=0$, $x^l_i \oplus x^l_{j}=0$ 
and
$x^l_j \oplus x^l_{j+1}=0$ be  equations in $\cH$. Given an 
assignment $\phi$ to the variables in $\cH$,
the associated  tour $\sigma_{\phi}$ has local length at most
$5 +u$,
where $u$ denotes the number of unsatisfied equations in $\{x^l_i \oplus x^l_{i+1}=0, 
x^l_i \oplus x^l_{j}=0, x^l_j \oplus x^l_{j+1}=0\}$ by $\phi$.
\end{proposition}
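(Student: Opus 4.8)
The plan is to treat this proposition as the consolidation of the eight-case analysis already carried out in Figures~\ref{fig:12atsp:5}--\ref{fig:12atsp:12}, and to verify the claimed bound uniformly across those cases. First I would abbreviate the three relevant parities as $a=\phi(x^l_i)\oplus\phi(x^l_{i+1})$, $b=\phi(x^l_i)\oplus\phi(x^l_j)$ and $c=\phi(x^l_j)\oplus\phi(x^l_{j+1})$. Because an equation $x\oplus y=0$ is violated by $\phi$ precisely when $\phi(x)\oplus\phi(y)=1$, the number of unsatisfied equations in the set $\{x^l_i\oplus x^l_{i+1}=0,\ x^l_i\oplus x^l_j=0,\ x^l_j\oplus x^l_{j+1}=0\}$ equals $u=a+b+c$, where each bit equal to $1$ contributes $1$.

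Next I would observe that the portion of $\sigma_{\phi}$ to be measured --- the traversal of the parity graph $P^l_e$ together with the arcs joining it to $P^l_i$, $P^l_{i+1}$, $P^l_j$ and $P^l_{j+1}$ --- is governed solely by the triple $(a,b,c)$; the only remaining freedom, the actual value of $\phi(x^l_i)$, merely selects between the two mirror-symmetric subfigures $(a)$ and $(b)$ of each case and leaves the arc count unchanged. All eight triples in $\{0,1\}^3$ are realizable (fix $\phi(x^l_i)$ arbitrarily and propagate along $a,b,c$), and they match the eight displayed cases one-to-one. I would then read off the local length of the exhibited traversal in each case and group the outcomes by $u$: the satisfied configuration $u=0$ (Figure~\ref{fig:12atsp:5}) is routed entirely by $1$-arcs and has local length $5$; the configurations with $u=1$ (Figures~\ref{fig:12atsp:6}, \ref{fig:12atsp:7}, \ref{fig:12atsp:11}) have local length $6$; those with $u=2$ (Figures~\ref{fig:12atsp:8}, \ref{fig:12atsp:10}, \ref{fig:12atsp:12}) have local length $6$ or $7$; and the configuration $u=3$ (Figure~\ref{fig:12atsp:9}) has local length $7$. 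In each case the local length is at most $5+u$, which is the assertion.

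The conceptual content behind the estimate is that the number of costly $2$-arcs used inside the gadget never exceeds $u$: each violated equation forces at most one additional unit of length over the all-$1$-arc baseline of $5$. Accordingly, I expect the work to be verificational rather than conceptual. The hard part will be confirming that each path drawn in Figures~\ref{fig:12atsp:5}--\ref{fig:12atsp:12} really is a simple path compatible with the traversal directions already fixed for the neighbouring parity graphs $P^l_i$, $P^l_{i+1}$, $P^l_j$ and $P^l_{j+1}$ by the values $\phi(x^l_i)$, $\phi(x^l_{i+1})$, $\phi(x^l_j)$ and $\phi(x^l_{j+1})$, and that a $2$-arc is introduced exactly where the prescribed entry and exit orientations of $P^l_e$ cannot be bridged by a $1$-arc. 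Since the proposition only claims an upper bound for a tour we construct, it suffices to present one valid traversal per case; no minimization over all possible traversals is needed.
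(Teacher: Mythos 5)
Your proposal is correct and follows essentially the same route as the paper: the paper's proof of Proposition~\ref{prep:phitosigmat} is precisely the eight-case analysis of Figures~\ref{fig:12atsp:5}--\ref{fig:12atsp:12}, exhibiting one traversal per parity triple and reading off local lengths $5$, $6$, $6$, $6$, $7$, $7$, $6$, $6$, each bounded by $5+u$. Your grouping of the cases by $u$ and your observation that only an upper bound (hence one exhibited traversal per case) is required match the paper's argument exactly.
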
   
\subsubsection{Traversing Graphs Corresponding to  Equations with Three Variables} 
Let $g^3_c\equiv x^l_i \oplus x^s_j \oplus x^k_t =0$ be an equation with three variables in $\cH$.
Furthermore, let $x^l_i \oplus x^l_{i+1}=0$, $x^s_j \oplus x^s_{j+1}=0$ and $x^k_t \oplus x^k_{t+1}=0$ be 
circle equations in $\cH$. 
For notational simplicity, we introduce $e=\{i,i+1\}$, $a=\{t,t+1\}$ and $b=\{j,j+1\}$. 
In Figure~\ref{fig:12atspa1}, we display 
the construction involving the graphs $D^3_c$, $P^k_a$, $P^s_b$, $P^l_e$, $P^l_i$ and $P^l_{i+1}$. 
Exemplary, we depicted  the connections of the graphs $D^3_c$, $P^l_e$, $P^l_i$ and $P^l_{i+1}$  in this
figure.
We are going to construct
the tour $\sigma_{\phi}$ traversing the corresponding graphs and 
analyze the dependency of the local length of $\sigma_{\phi}$
and the number of satisfied participating equations.

\begin{figure}[h!]
\begin{center}
\input{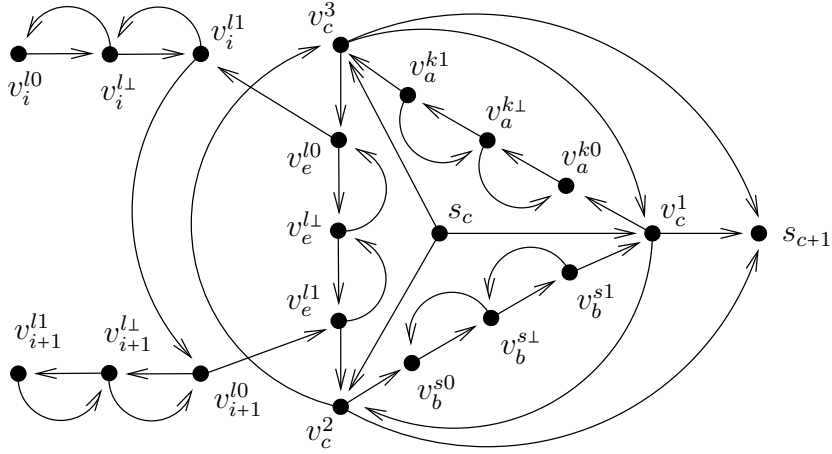}
 \end{center}
\caption{The graph $D^3_c$ with its connections to $P^l_i$ and $P^l_{i+1}$.}
\label{fig:12atspa1}
\end{figure}  

Recall from Proposition~\ref{pro:gadget3atsp} that 
there is a Hamiltonian path from $s_c$ to  $s_{c+1}$ containing the vertices $v^1_c$,
$v^2_c$ and $v^3_c$ in $D^3_{c}$ if and only if an even number of 
parity graphs  $P\in \{P^k_a, P^s_b, P^l_e\}$ is traversed.\\
The outer loop traverses the graph $D^3_c$ starting at $s_c$ and ending at $s_{c+1}$. Furthermore, it
contains the vertices $v^1_c$, $v^2_c$ and $v^3_c$ in some order. If $\sigma_{\phi}$ 
traverses an even number of parity graphs $P\in \{P^k_a, P^s_b, P^l_e\}$ in the inner loop,
it is possible to construct a Hamiltonian path with associated 
local length $3\cdot3+4$. In the other case, we have to use a $2$-arc yielding the 
local length $14$. 

\begin{figure}[h!]
\begin{center}
\input{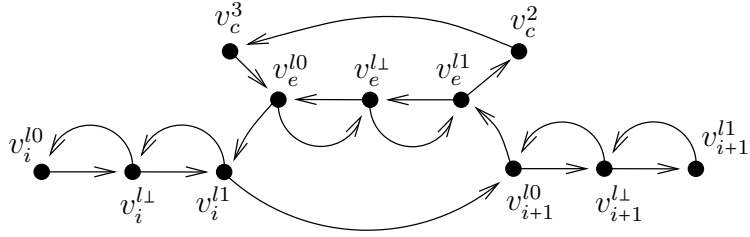}
 \end{center}
\caption{A part of the graphs corresponding to $x^l_{i}\oplus x^l_{i+1}=0$
and $x^l_i \oplus x^s_j \oplus x^k_t =0$. }
\label{fig:12atspa2}
\end{figure} 

Let us analyze the part of $\sigma_{\phi}$ traversing graphs corresponding to 
$x^l_{i}\oplus x^l_{i+1}=0$.   
For this reason, we will examine the situation depicted in Figure~\ref{fig:12atspa2}. 
%
%
%
%
Let us 
begin with the case
$\phi(x^l_{i})\oplus \phi(x^l_{i+1}) =0$.\\
\\
\noindent
\textbf{1. Case $\phi(x^l_{i})\oplus \phi(x^l_{i+1}) =0$ :} \\
If $\phi(x^l_{i})= \phi(x^l_{i+1})=1$ holds, the tour $\sigma_{\phi}$ uses
the arc $(v^{l1}_{i},v^{l0}_{i+1})$. Afterwards, the parity graph $P^l_{e}$
will be traversed when the tour leads through the graph $D^3_c$.
More precisely, it will use the path 
$v^3_c \rightarrow  v^{l0}_e \rightarrow v^{l\bot}_e 	\rightarrow v^{l1}_e 	\rightarrow v^{2}_c$.
In Figure~\ref{fig:12atspa3}, we illustrated this part of the tour.

\begin{figure}[h!]
\begin{center}
\input{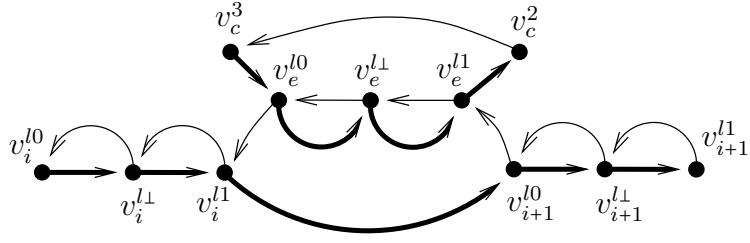}
 \end{center}
\caption{Case $\phi(x^l_{i})= \phi(x^l_{i+1})=1$.}
\label{fig:12atspa3}
\end{figure}

In the other case $\phi(x^l_{i})= \phi(x^l_{i+1})=0$, we use the path  
$v^{l0}_{i+1} \rightarrow  v^{l1}_e \rightarrow v^{l\bot}_e 	\rightarrow v^{l0}_e 	\rightarrow v^{l1}_{i}$.
Afterwards, the tour $\sigma_{\phi}$ contains the arc $(v^2_c,v^3_c)$. 

In both cases, we associate the local length $1$ with this part of the tour. \\
\\ 
\textbf{2. Case $\phi(x^l_{i+1})\oplus \phi(x^l_{i+1}) =1$ :} \\
Assuming $\phi(x^l_{i}) \neq \phi(x^l_{i+1})=1$, the tour $\sigma_{\phi}$
uses a $2$-arc entering $v^{l1}_e$ and the path 
$v^{l1}_e \rightarrow v^{l\bot}_e 	\rightarrow v^{l0}_e 	\rightarrow v^{l1}_{i}$.
Furthermore, we need another $2$-arc in order to reach $v^{l0}_{i+1}$. The situation
is depicted in Figure~\ref{fig:12atspa4}.

\begin{figure}[h]
\begin{center}
\input{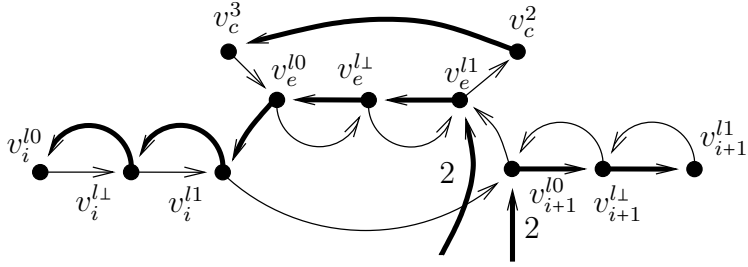}
 \end{center}
\caption{Case $\phi(x^l_{i}) \neq \phi(x^l_{i+1})=1$.}
\label{fig:12atspa4}
\end{figure}

\noindent
In the other case, namely $\phi(x^l_{i}) \neq \phi(x^l_{i+1})=0$, we use $2$-arcs leaving
$v^{l0}_{i+1}$ and $v^{l1}_{i}$. Afterwards, the tour uses the path
$v^3_c \rightarrow  v^{l0}_e \rightarrow v^{l\bot}_e 	\rightarrow v^{l1}_e 	\rightarrow v^{2}_c$
while traversing the graph $D^3_c$.

In both cases, we associate the local length $2$ with this part of the tour.\\
\\
We obtain the following proposition.
\begin{proposition}\label{prep:phitosigeq3}
Let  $x^l_i \oplus x^s_j \oplus x^k_t =0$ be an equation with three variables in $\cH$.
Furthermore, let $x^l_i \oplus x^l_{i+1}=0$, $x^s_j \oplus x^s_{j+1}=0$ and
$x^k_t \oplus x^k_{t+1}=0$ be circle  equations in $\cH$. Given an 
assignment $\phi$ to the variables in $\cH$,
the associated  tour $\sigma_{\phi}$ has local length  at most
$3\cdot3+4 +3+u$, 
where $u$ denotes the number of unsatisfied  equations in 
$\{x^l_i \oplus x^s_j \oplus x^k_t =0,x^l_i \oplus x^l_{i+1}=0,
x^s_j \oplus x^s_{j+1}=0,x^k_t \oplus x^k_{t+1}=0\}$ by $\phi$.
\end{proposition}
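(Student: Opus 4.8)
The plan is to bound the local length by splitting the portion of $\sigma_{\phi}$ charged to these four equations into two accountable pieces whose local lengths add: the segment running through the three-variable gadget $D^3_c$ (the outer-loop path from $s_c$ to $s_{c+1}$ visiting $v^1_c,v^2_c,v^3_c$, together with the handling of the embedded parity graphs $P^l_e,P^s_b,P^k_a$), and the three segments realising the circle connections $x^l_i\oplus x^l_{i+1}=0$, $x^s_j\oplus x^s_{j+1}=0$ and $x^k_t\oplus x^k_{t+1}=0$. First I would check that these pieces are arc-disjoint and that every arc charged to the four equations lies in exactly one of them, so that the local lengths obtained in the preceding case analyses may simply be summed.

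For the three circle connections I would invoke the two-case split already carried out for $x^l_i\oplus x^l_{i+1}=0$ (Figures~\ref{fig:12atspa3} and~\ref{fig:12atspa4}), noting that the connections for $x^s_j\oplus x^s_{j+1}=0$ and $x^k_t\oplus x^k_{t+1}=0$ are wired symmetrically. When such a circle equation is satisfied the connection uses only $1$-arcs and has local length $1$; when it is violated the tour is forced onto a $2$-arc and the local length rises to $2$. Hence each circle connection contributes $1+[\text{it is unsatisfied}]$, so the three of them together contribute $3+u_1$, where $u_1$ is the number of unsatisfied equations among the three circle equations.

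The core of the argument, and the main obstacle, is the contribution of $D^3_c$. Here I would use the parity bookkeeping of Proposition~\ref{pro:gadget3atsp}: the outer loop admits a Hamiltonian path through $v^1_c,v^2_c,v^3_c$ of local length $3\cdot 3+4=13$ exactly when the number of parity graphs among $P^l_e,P^s_b,P^k_a$ consumed in the inner loop has the prescribed parity, and otherwise one $2$-arc must be inserted, giving local length $14$. The key step is to translate this parity condition into the satisfaction of $g^3_c$. The circle-connection case analysis determines, for each of the three variables, whether its associated parity graph is taken by the inner loop as a function of $\phi(x^l_i)$, $\phi(x^s_j)$, $\phi(x^k_t)$; for a negated literal the modified wiring $(v^{l1}_i,v^{l0}_e)$, $(v^{l0}_{i+1},v^{l1}_i)$, $(v^{l1}_e,v^{l0}_{i+1})$ flips that choice. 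Summing the three indicators modulo $2$, I would show that the inner-loop count meets the parity demanded by Proposition~\ref{pro:gadget3atsp} precisely when $\phi(x^l_i)\oplus\phi(x^s_j)\oplus\phi(x^k_t)=0$, i.e.\ precisely when $g^3_c$ is satisfied. Thus the $D^3_c$ part has local length $13+u_3$, where $u_3=[\,g^3_c\text{ is unsatisfied}\,]$.

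Finally I would combine the two pieces. The total local length is at most $(13+u_3)+(3+u_1)=16+u_1+u_3=3\cdot 3+4+3+u$, since $u=u_1+u_3$ counts the unsatisfied equations among all four. The delicate point to verify is the global compatibility between the two case analyses: the parity graph that each circle connection hands to the inner loop must be the very one whose traversal Proposition~\ref{pro:gadget3atsp} counts, so that the modulo-$2$ computation above is legitimate and the inner/outer split is consistent across the whole gadget. Once this compatibility is confirmed, the displayed additive bound follows.
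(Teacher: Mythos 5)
Your proposal follows essentially the same route as the paper: the paper likewise charges each of the three circle-equation connections local length $1$ when satisfied and $2$ when violated (its Cases 1 and 2 for $x^l_i\oplus x^l_{i+1}$, Figures~\ref{fig:12atspa3} and~\ref{fig:12atspa4}), charges the gadget $D^3_c$ local length $3\cdot 3+4=13$ when the parity check of Proposition~\ref{pro:gadget3atsp} passes and $14$ (one forced $2$-arc) when it fails, and observes that the parity of inner-loop traversals of $P^l_e,P^s_b,P^k_a$ — determined variable-by-variable by the circle-connection traversals — matches the satisfaction of $g^3_c$, so the contributions sum to $13+3+u$. Your additional attention to arc-disjointness of the charged pieces is a sound formalization of what the paper leaves implicit.
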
 

\subsubsection{Traversing Graphs Corresponding to  Circle Border Equations}
Let $\cC_l$ be a circle in $\cH$ and $x^l_1 \oplus x^l_n=0$ its circle border equation.
Recall that the variable $x^l_n$ is also included in an equation with three
variables. We are going to describe the part of the tour passing through the graphs
 depicted in Figure~\ref{fig:12atspa5} in dependence of the assigned values to the variables $x^l_1$
 and  $x^l_n$. Let us start with the case $\phi(x^l_1) \oplus \phi(x^l_n) =0$.

\begin{figure}[h!]
\begin{center}
\input{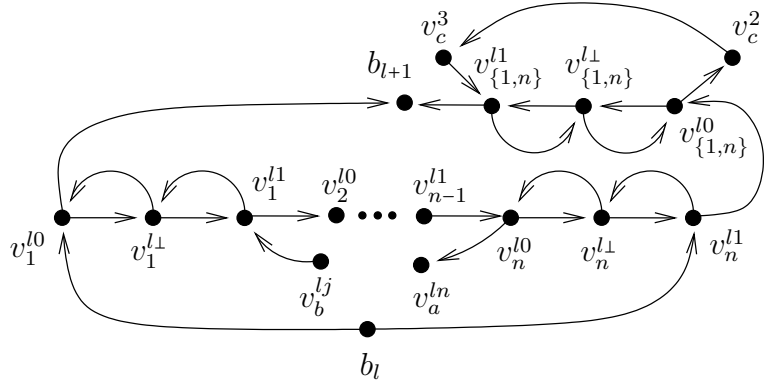}
 \end{center}
\caption{Traversing Graphs Corresponding To  Circle Border Equations.}
\label{fig:12atspa5}
\end{figure}

\noindent 
\textbf{1. Case $\phi(x^l_1) \oplus \phi(x^l_n) =0$ }\\
The starting point of the tour $\sigma_{\phi}$ passing through the graph corresponding to $x^l_1 \oplus x^l_n=0$ 
is the vertex $b_l$. 
Given the values $\phi(x^l_1)=\phi(x^l_n)$, we use in each case the $\phi(x^l_1)$-traversal of  
the parity graphs $P^l_1$ and $P^l_n$ ending in $b^l_{l+1}$. Note that in the case 
$\phi(x^l_1)=\phi(x^l_n)=0$, we use the $1$-traversal of the parity graph $P^l_{\{1,n\}}$.  
Exemplary, we display the 
situation $\phi(x^l_1)=\phi(x^l_n)=1$ in Figure~\ref{fig:12atspa6}. 

\begin{figure}[h]
\begin{center}
\input{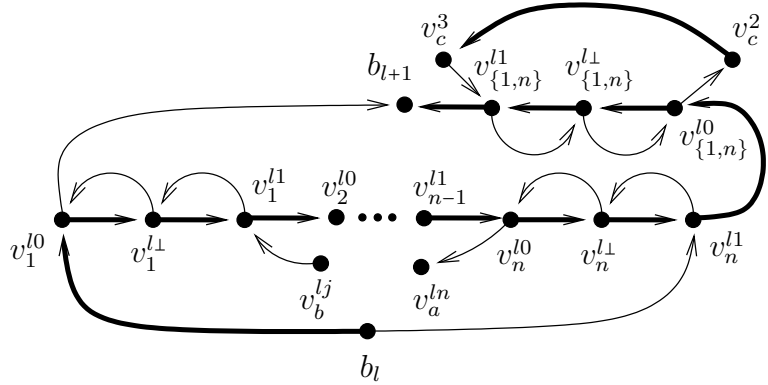}
 \end{center}
\caption{Case $\phi(x^l_1)=\phi(x^l_n)=1$.}
\label{fig:12atspa6}
\end{figure}

In both cases, we associated the local
length $2$ with this part of the tour.\\
\\  
 \textbf{2. Case $\phi(x^l_1) \oplus \phi(x^l_n) =1$ }\\
Given the assignment $\phi(x^l_1)\neq \phi(x^l_n)=0$, we traverse the arc $(b_l,v^{l0}_1)$.
Due to the construction, we have to use  $2$-arcs to enter $b_{l+1}$ and $v^{l1}_n$
as depicted in Figure~\ref{fig:12atspa7}.

\begin{figure}[h]
\begin{center}
\input{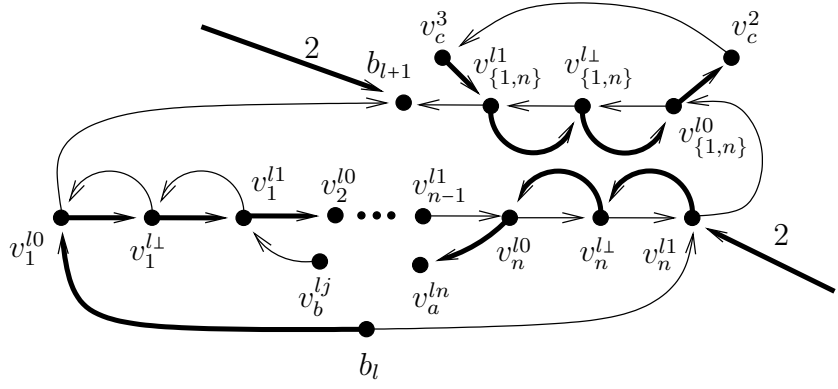}
 \end{center}
\caption{Case $\phi(x^l_1)\neq \phi(x^l_n)=0$.}
\label{fig:12atspa7}
\end{figure}

In the other case, we have to use a $2$-arc in order to leave the vertex $b_l$. In addition,
the tour contains the path  $v^{l1}_n \rightarrow v^{l0}_{\{1,n\}}   
 \rightarrow v^{l\bot }_{\{1,n\}}\rightarrow v^{l1}_{\{1,n\}}\rightarrow b_{l+1}$.
 
Hence, in both cases, we associate the local length $3$ with this part of $\sigma_{\phi}$.\\
\\
We obtain the following proposition.
\begin{proposition}\label{prep:phitosigcirlb}
Let  $x^l_1 \oplus x^l_{n}=0$ be an  circle border equations in $\cH$. Given an 
assignment $\phi$ to the variables in $\cH$,
the associated  tour $\sigma_{\phi}$ has local length  
$2$ if the equation  $x^l_1 \oplus x^l_{n}=0$ is satisfied 
 by $\phi$ and $3$, otherwise.
\end{proposition}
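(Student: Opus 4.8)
The plan is to obtain the claim as a bookkeeping summary of the two cases already laid out for the graphs in Figure~\ref{fig:12atspa5}, organized according to the single Boolean value $\phi(x^l_1)\oplus\phi(x^l_n)$. The relevant \emph{local} part of $\sigma_\phi$ consists of the arcs running from $b_l$ to $b_{l+1}$ through the parity graphs $P^l_1$, $P^l_n$ and $P^l_{\{1,n\}}$, with the understanding that arcs shared with the three-variable gadget $D^3_c$ (through which $x^l_n$ also passes) are charged to the local part analyzed in Proposition~\ref{prep:phitosigeq3} rather than here. First I would fix this attribution convention explicitly, so that every arc touching these graphs is charged to exactly one local part; this is what makes the two target values $2$ and $3$ well defined.

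In the satisfied case $\phi(x^l_1)\oplus\phi(x^l_n)=0$ there are two sub-cases. For $\phi(x^l_1)=\phi(x^l_n)=1$ (Figure~\ref{fig:12atspa6}) I would check that the prescribed $1$-traversals of $P^l_1$ and $P^l_n$, together with the connecting $1$-arcs, form a valid sub-path from $b_l$ to $b_{l+1}$ visiting every vertex of the local part exactly once and using only $1$-arcs. For $\phi(x^l_1)=\phi(x^l_n)=0$ the same routing works after replacing the traversals of $P^l_1$ and $P^l_n$ by their $0$-traversals, the only asymmetry being that the border parity graph $P^l_{\{1,n\}}$ is still entered by its $1$-traversal so that it meshes with the arcs into $b_{l+1}$. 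In either sub-case, charging exactly two of the unit arcs to the local part yields local length $2$.

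In the unsatisfied case $\phi(x^l_1)\oplus\phi(x^l_n)=1$, the parity of the two traversals is incompatible with a label-consistent all-$1$-arc routing, exactly as in the parity analysis underlying Proposition~\ref{pro:gadget3atsp}: one cannot link the $0$-traversal of one parity graph to the $1$-traversal of the other through the available $1$-arcs while entering at $b_l$ and leaving at $b_{l+1}$. Hence one $2$-arc is forced, and the two explicit routings in Figure~\ref{fig:12atspa7} realize this with a single extra unit of cost charged to the local part, giving local length $3$. The main obstacle is precisely the attribution step: since $x^l_n$ is shared with $D^3_c$, one must ensure that the $2$-arcs appearing in Figure~\ref{fig:12atspa7} are split consistently between this proposition and Proposition~\ref{prep:phitosigeq3}, so that the net extra cost charged here is exactly $1$ and no arc is double-counted. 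Once the convention is pinned down, the remaining verification is a finite, routine check over the four label patterns.
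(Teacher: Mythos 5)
Your proposal is correct and follows essentially the same route as the paper: the paper's own proof is precisely this case analysis on $(\phi(x^l_1),\phi(x^l_n))$, using the routings of Figure~\ref{fig:12atspa6} (with the $1$-traversal of $P^l_{\{1,n\}}$ in the all-zero sub-case) to charge two $1$-arcs when the equation is satisfied, and the routings of Figure~\ref{fig:12atspa7} to charge one extra unit for a forced $2$-arc when it is not. The attribution convention you spell out—arcs shared with $D^3_c$ being charged to Proposition~\ref{prep:phitosigeq3}—is exactly the bookkeeping the paper uses implicitly.
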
  
%
%

\subsection{ Constructing the Assignment $\psi_{\sigma}$ from a Tour $\sigma$ }
Let $\cH$ be an instance of the Hybrid problem, $D_{\cH}=(V_{\cH}, A_{\cH})$ the 
associated instance of the $(1,2)$-ATSP problem and $\sigma$ a tour in $D_{\cH}$.
We are going to define the corresponding assignment $\psi_{\sigma}:V(\cH) \rightarrow \{0,1\}$
to the variables in $\cH$. In addition to it, we establish a connection between 
the length of $\sigma$ and the number of satisfied equations by $\psi_{\sigma}$.\\
\\ 
In order to define an assignment, we first introduce the notion of
consistent tours in $D_\cH$.  
\begin{definition}[Consistent Tour]
Let $\cH$ be an instance of the Hybrid problem and  $D_{\cH}$ the associated instance of the
$(1,2)$--ATSP problem.
A tour in $D_{\cH}$ is called consistent if the tour uses only $0/1$-traversals of 
all in $D_{\cH}$ contained parity graphs. 
\end{definition}
\noindent
Due to the following proposition, we may assume that the underlying tour is consistent. 
\begin{proposition}\label{prop:consistent}
Let $\cH$ be an instance of the Hybrid problem and $D_{\cH}$ the associated instance of the
$(1,2)$--ATSP problem. Any tour $\sigma$ in $D_{\cH}$ can be transformed 
in polynomial time into a consistent tour $\sigma'$ with $\ell(\sigma') \leq \ell(\sigma)$.
\end{proposition}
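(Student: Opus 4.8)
The plan is to exploit the fact that for a $(1,2)$--ATSP instance the length of any tour is governed solely by the number of $2$-arcs it uses. Since a tour on $V_{\cH}$ consists of exactly $|V_{\cH}|$ arcs and every arc has weight $1$ or $2$, we have $\ell(\sigma)=|V_{\cH}|+t(\sigma)$, where $t(\sigma)$ denotes the number of $2$-arcs traversed by $\sigma$. Hence $\ell(\sigma')\le\ell(\sigma)$ is equivalent to $t(\sigma')\le t(\sigma)$, and it suffices to construct a consistent tour that uses no more $2$-arcs than $\sigma$. This reformulation reduces the statement to a purely combinatorial claim about $2$-arcs and forms the backbone of the argument.

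Next I would characterise consistency locally. For a parity graph $P^l_i$ with ports $v^{l0}_i,v^{l1}_i$ and middle vertex $v^{l\bot}_i$, the only $1$-arcs incident to $v^{l\bot}_i$ are the arcs joining it to the two ports; every other arc at $v^{l\bot}_i$ is a $2$-arc. As $v^{l\bot}_i$ is visited exactly once, its tour-predecessor and tour-successor are distinct, and both incident tour arcs are $1$-arcs precisely when they connect $v^{l\bot}_i$ to the two ports in one of the two admissible orientations $v^{l0}_i\to v^{l\bot}_i\to v^{l1}_i$ or $v^{l1}_i\to v^{l\bot}_i\to v^{l0}_i$ — that is, exactly when $\sigma$ performs a $0/1$-traversal of $P^l_i$. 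Consequently $\sigma$ is inconsistent if and only if some middle vertex carries an incident $2$-arc, and each such vertex contributes at least one $2$-arc to $t(\sigma)$.

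The core of the proof is a local surgery. Choosing a middle vertex $m=v^{l\bot}_i$ with an incident $2$-arc and ports $a=v^{l0}_i$, $b=v^{l1}_i$, I would rewire $\sigma$ in a bounded neighbourhood of $\{a,m,b\}$ so that $m$ is traversed by one of the $1$-arc paths $a\to m\to b$ or $b\to m\to a$, redirecting the arcs freed at $a$, $b$ and at the former neighbours of $m$. The accounting is that the $0/1$-traversal spends two $1$-arcs at $m$, whereas the old traversal spent at least one $2$-arc there; by selecting the reconnection so that it introduces no more $2$-arcs than it removes (using the $1$-arcs of the surrounding circle and equation gadget to which the ports belong), the move does not increase $t(\sigma)$, turns $P^l_i$ into a $0/1$-traversal, and leaves the traversal type of every other parity graph unchanged. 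Processing the polynomially many parity graphs one after another then yields a consistent tour $\sigma'$ with $t(\sigma')\le t(\sigma)$, i.e. $\ell(\sigma')\le\ell(\sigma)$.

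The main obstacle I anticipate is the surgery itself, and it is twofold. First, because the metric is asymmetric one cannot freely reverse tour segments: every reconnection must respect arc directions, so the admissible moves are far more constrained than in the symmetric case, and I expect a short case distinction according to whether $m$ lies at the end of a run of already-consumed vertices or is isolated, and according to which of its two incident arcs are $2$-arcs. Second, in each case one must verify that the rewiring keeps the tour a single Hamiltonian cycle — no subtour is created — and that the arc charged by the reconnection is genuinely available as a $1$-arc of the adjacent gadget, so that $t$ does not grow. Assembling these explicit replacements, essentially a finite table of directed $2$-/$3$-opt moves tailored to the parity-graph gadget, is where the real work lies; the length bookkeeping is then immediate from $\ell(\sigma)=|V_{\cH}|+t(\sigma)$.
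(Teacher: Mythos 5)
Your proposal follows essentially the same route as the paper's proof: the paper also argues that any non-$0/1$-traversal of a parity graph can be locally modified into one with at most the same number of $2$-arcs, verified by exhaustive case analysis of the directed rewirings (the less obvious cases being displayed in its figure appendix). Your reformulation $\ell(\sigma)=|V_{\cH}|+t(\sigma)$ and the characterization of inconsistency as a $2$-arc incident to a middle vertex $v^{l\bot}_i$ are exactly the bookkeeping underlying that case analysis, and you correctly identify that the remaining work is the finite table of direction-respecting local moves that the paper supplies pictorially.
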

\begin{proof}
For every parity graph contained in $D_{\cH}$, 
it can be seen by considering all possibilities exhaustively that
any tour in $D_{\cH}$ that is not using the corresponding $0/1$-traversals can be modified into a
tour with at most the same number of $2$-arcs. The less
obvious cases are shown in Figure~\ref{figcases12atsp}~(see 10. Figure Appendix).
\end{proof}
\noindent
In the remainder, we assume that the underlying tour $\sigma$ is consistent with all parity graphs
in $D_{\cH}$.
Let us now define the corresponding assignment $\psi_{\sigma}$ given  $\sigma$.
\begin{definition}[ Assignment $\psi_{\sigma}$]\label{def:psi}
Let $\cH$ be an instance of the Hybrid problem, $D_{\cH}=(V_{\cH}, A_{\cH})$ the 
associated instance of the $(1,2)$-ATSP problem. 
Given a tour $\sigma$ in $D_{\cH}$, in which all parity graphs are consistent with respect to
$\sigma$, the corresponding  assignment $\psi_{\sigma}:V(\cH) \rightarrow \{0,1\}$ is defined as follows.
\begin{eqnarray*}
\psi_{\sigma}(x^l_i) & =1 & \textrm{ if $\sigma$ uses a $1$-traversal of $P^l_{i}$  } \\
                     &   = 0 &  \textrm{ otherwise  }
\end{eqnarray*}
\end{definition}
%
%
%
We are going to analyze the local length of $\sigma$ in dependence of  
the number of corresponding satisfied equations by $\psi_{\sigma}$.
In some cases, we will have to modify the underlying tour improving 
in this way on the number of satisfied equations by 
the corresponding assignment  $\psi_{\sigma}$. Let us start with the analysis.
\subsubsection{Transforming $\sigma$ in Graphs Corresponding to  Matching Equations}  
Given the equations $x_i \oplus x_{i+1}=0$, $x_i \oplus x_{j}=0$, 
$x_j \oplus x_{j+1}=0$
and a tour $\sigma$, we are going to construct an assignment in dependence of $\sigma$.
In particular, we analyze the relation between the length of the tour and
the number of satisfied equations by $\psi_{\sigma}$.\\
\\
\begin{figure}[h!]
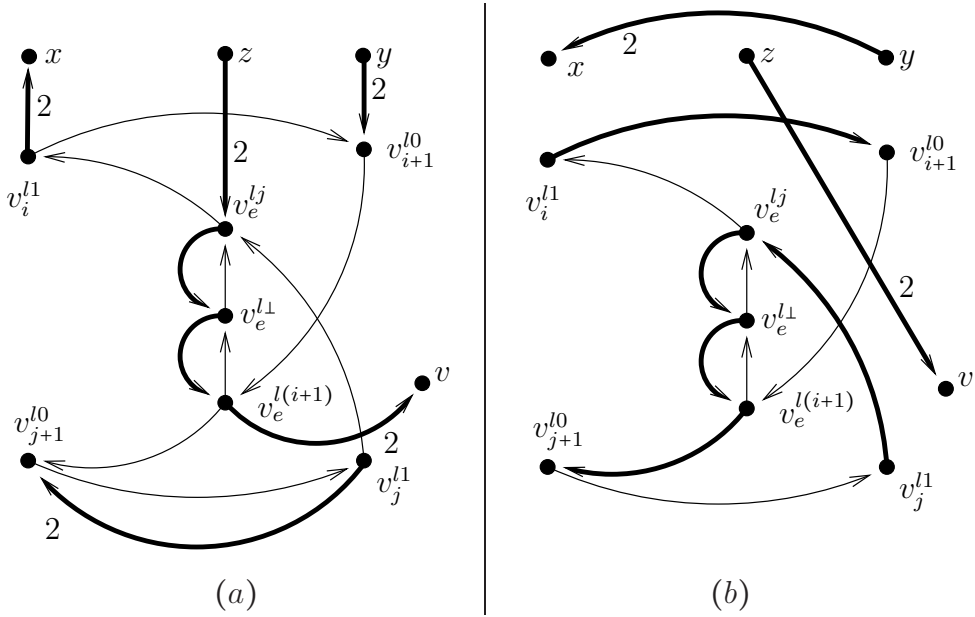

\begin{center}
\begin{tabular}[c]{c|c}
 \input{figures/fig12atsp22.pspdftex}& \input{figures/fig12atsp23.pspdftex}\\
 $(a)$ & $(b)$
 \end{tabular}
\end{center}
\caption{1.Case  $\psi_{\sigma}(x_i) \oplus \psi_{\sigma}(x_{i+1})=0$, 
$\psi_{\sigma}(x_i) \oplus \psi_{\sigma}(x_{j})=0$ \&
$\psi_{\sigma}(x_j) \oplus \psi_{\sigma}(x_{j+1})=0$.}
\label{fig:12atsp:13}
\end{figure} 
\noindent
\textbf{1. Case  $\psi_{\sigma}(x_i) \oplus \psi_{\sigma}(x_{i+1})=0$, 
$\psi_{\sigma}(x_i) \oplus \psi_{\sigma}(x_{j})=0$ and 
$\psi_{\sigma}(x_j) \oplus \psi_{\sigma}(x_{j+1})=0$: }\\
Given $\psi_{\sigma}(x_i)=\psi_{\sigma}(x_{j})=\psi_{\sigma}(x_j)=\psi_{\sigma}(x_{j+1})=1$,
 it is possible to transform the underlying tour   such that
no $2$-arcs 
enter or leave the vertices $v^{l1}_{i}$, $v^{l0}_{i+1}$, $v^{l0}_{j+1}$, $v^{lj}_{e}$, $v^{l(i+1)}_{e}$   and $v^{l1}_{j}$.
Exemplary, we display in Figure~\ref{fig:12atsp:13} such an transformation, where Figure~\ref{fig:12atsp:13} $(a)$
and Figure~\ref{fig:12atsp:13} $(b)$
illustrate the underlying tour $\sigma$ and   the transformed tour $\sigma'$, respectively.
The case $\psi_{\sigma}(x_i)=\psi_{\sigma}(x_{i+1})=\psi_{\sigma}(x_j)=\psi_{\sigma}(x_{j+1})=0$
can be discussed analogously.\\
\begin{figure}[h!]
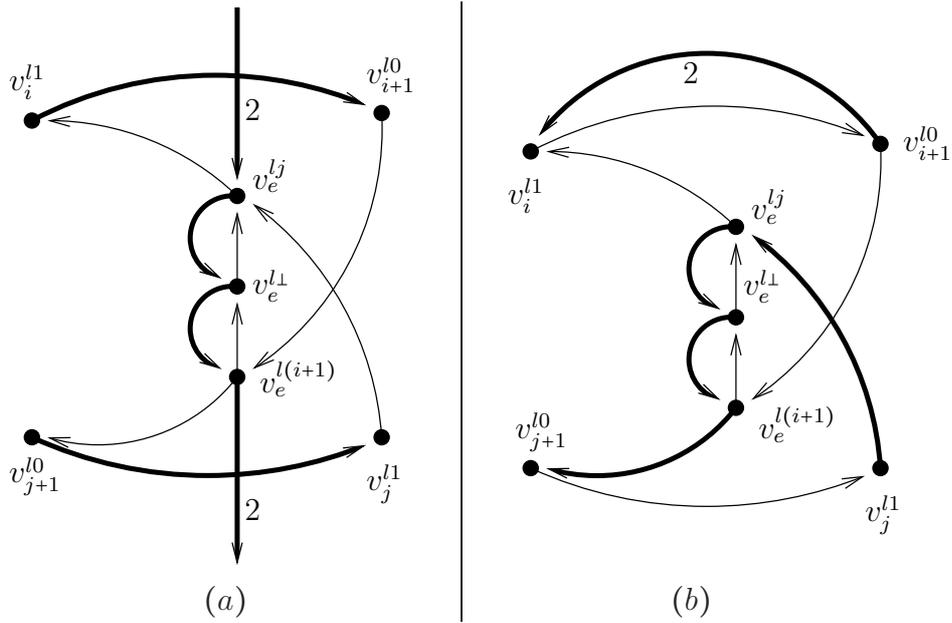

\begin{center}
\begin{tabular}[c]{c|c}
 \input{figures/fig12atsp7.pspdftex}& \input{figures/fig12atsp8.pspdftex}\\
 $(a)$ & $(b)$
 \end{tabular}
\end{center}
\caption{2.Case  $\psi_{\sigma}(x_i) \oplus \psi_{\sigma}(x_{i+1})=0$, 
$\psi_{\sigma}(x_i) \oplus \psi_{\sigma}(x_{j})=1$ \& 
$\psi_{\sigma}(x_j) \oplus \psi_{\sigma}(x_{j+1})=0$.}
\label{fig:12atsp:14}
\end{figure}

In both cases, we obtain the local length $5$ for this part of $\sigma$ while $\psi_{\sigma}$ satisfies all
$3$ equations.\\
\\

\begin{figure}[h!]
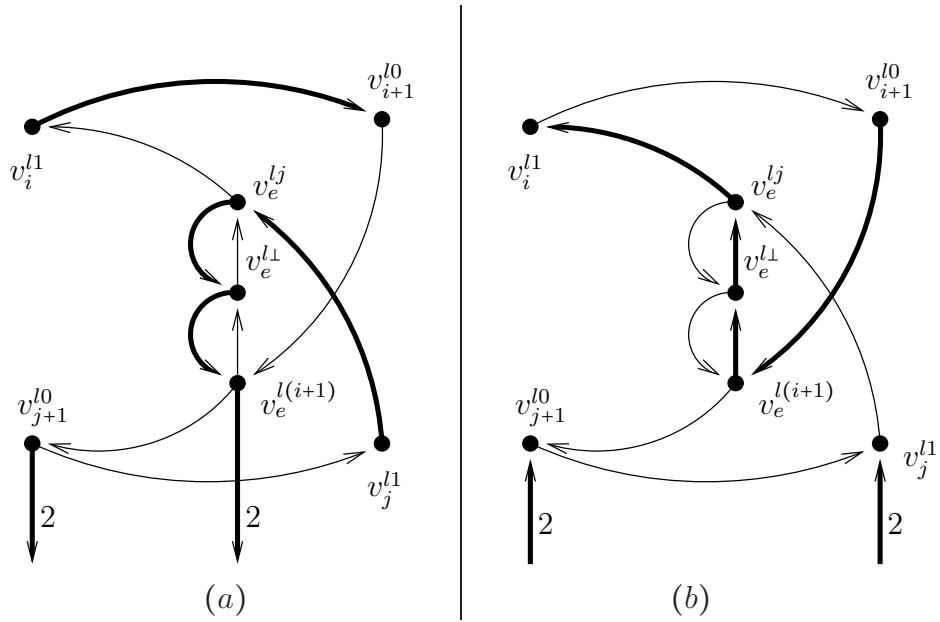

\begin{center}
\begin{tabular}[c]{c|c}
 \input{figures/fig12atsp9.pspdftex}& \input{figures/fig12atsp10.pspdftex}\\
  $(a)$ & $(b)$
 \end{tabular}
\end{center}
\caption{3.Case $\psi_{\sigma}(x_i) \oplus \psi_{\sigma}(x_{i+1})=0$, 
$\psi_{\sigma}(x_i) \oplus \psi_{\sigma}(x_{j})=0$ \& 
$\psi_{\sigma}(x_j) \oplus \psi_{\sigma}(x_{j+1})=1$. }
\label{fig:12atsp:15}
\end{figure}
\begin{figure}[h!]
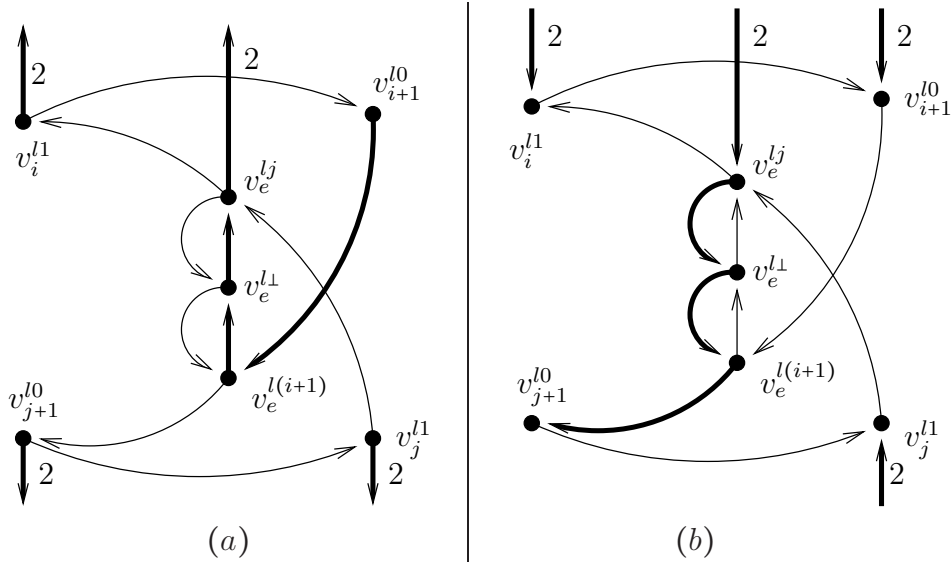

\begin{center}
\begin{tabular}[c]{c|c}
 \input{figures/fig12atsp15.pspdftex}& \input{figures/fig12atsp16.pspdftex}\\
  $(a)$ & $(b)$
 \end{tabular}
\end{center}
\caption{4. Case $\psi_{\sigma}(x_i) \oplus \psi_{\sigma}(x_{i+1})=1$, 
$\psi_{\sigma}(x_i) \oplus \psi_{\sigma}(x_{j})=0$ \&
$\psi_{\sigma}(x_j) \oplus \psi_{\sigma}(x_{j+1})=1$.}
\label{fig:12atsp:20}
\end{figure}

\noindent
\textbf{2. Case  $\psi_{\sigma}(x_i) \oplus \psi_{\sigma}(x_{i+1})=0$, 
$\psi_{\sigma}(x_i) \oplus \psi_{\sigma}(x_{j})=1$ and 
$\psi_{\sigma}(x_j) \oplus \psi_{\sigma}(x_{j+1})=0$: }\\
Assuming $\psi_{\sigma}(x_i)=\psi_{\sigma}(x_{i+1})=1$ and $\psi_{\sigma}(x_j)=\psi_{\sigma}(x_{j+1})=0$,
we are able to transform the tour such that it uses the arcs $(v^{l1}_{i},v^{l0}_{i+1})$ and $(v^{l0}_{j+1},v^{l1}_{j})$. 
Due to the construction and our assumption, the tour cannot traverse the arcs $(v^{l1}_{j},v^{lj}_{e})$,
$(v^{lj}_{e},v^{l1}_{i})$, $(v^{l(i+1)}_{e},v^{l0}_{j+1})$
and $(v^{l0}_{i+1},v^{l(i+1)}_{e})$. Consequently, we have to use  $2$-arcs entering and leaving the parity graph $P^l_e$. 
The situation is displayed in Figure~\ref{fig:12atsp:14} $(a)$. We associate only the cost of one $2$-arc yielding
 the local length $6$, which corresponds to the fact that $\psi_{\sigma}$ leaves the equation $x_i \oplus x_j =0$
 unsatisfied.
  
Note that a similar situation holds in case of $\psi_{\sigma}(x_i)=\psi_{\sigma}(x_{i+1})=0$ 
and $\psi_{\sigma}(x_j)=\psi_{\sigma}(x_{j+1})=1$ (cf. Figure~\ref{fig:12atsp:14} $(b)$).\\
\\
\noindent
\textbf{3. Case  $\psi_{\sigma}(x_i) \oplus \psi_{\sigma}(x_{i+1})=0$, 
$\psi_{\sigma}(x_i) \oplus \psi_{\sigma}(x_{j})=0$ and 
$\psi_{\sigma}(x_j) \oplus \psi_{\sigma}(x_{j+1})=1$: }\\
Let us start with the case
$\psi_{\sigma}(x_i)=\psi_{\sigma}(x_{i+1})=1$ and $\psi_{\sigma}(x_j) \neq \psi_{\sigma}(x_{j+1})=0$.
The situation is displayed in Figure~\ref{fig:12atsp:15} $(a)$. 
Due to the construction, we are able to transform $\sigma$ such that it uses the arc $(v^{l1}_{i},v^{l0}_{i+1})$.
Note that the tour cannot traverse the arcs $(v^{l(i+1)}_{e},v^{l0}_{j+1})$ and  $(v^{l0}_{j+1},v^{l1}_{j})$.
Hence, we are forced to use two $2$-arcs increasing the cost by $2$. 
All in all, we obtain the local length $6$.\\
The case $\psi_{\sigma}(x_i)=\psi_{\sigma}(x_{i+1})=0$ and $\psi_{\sigma}(x_j) \neq \psi_{\sigma}(x_{j+1})=1$
can be analyzed analogously (cf. Figure~\ref{fig:12atsp:15} $(b)$).
A similar argumentation holds for 
$\psi_{\sigma}(x_i) \oplus \psi_{\sigma}(x_{i+1})=1$, 
$\psi_{\sigma}(x_i) \oplus \psi_{\sigma}(x_{j})=0$ and 
$\psi_{\sigma}(x_j) \oplus \psi_{\sigma}(x_{j+1})=0$.\\
\\
%
%
\noindent
\textbf{4. Case  $\psi_{\sigma}(x_i) \oplus \psi_{\sigma}(x_{i+1})=1$, 
$\psi_{\sigma}(x_i) \oplus \psi_{\sigma}(x_{j})=0$ and 
$\psi_{\sigma}(x_j) \oplus \psi_{\sigma}(x_{j+1})=1$: }\\
Given $\psi_{\sigma}(x_i)\neq \psi_{\sigma}(x_{i+1})=0$ and 
$\psi_{\sigma}(x_j) \neq \psi_{\sigma}(x_{j+1})=0$, we are able to
transform the tour such that it uses the arc $(v^{l0}_{i+1},v^{l(i+1)}_{e})$.
This situation is depicted in Figure~\ref{fig:12atsp:20} $(a)$.
Notice that we are forced to use four $2$-arcs in order to connect all vertices.
Consequently, it yields the local length $7$.\\ 
\\
\begin{figure}[h!]
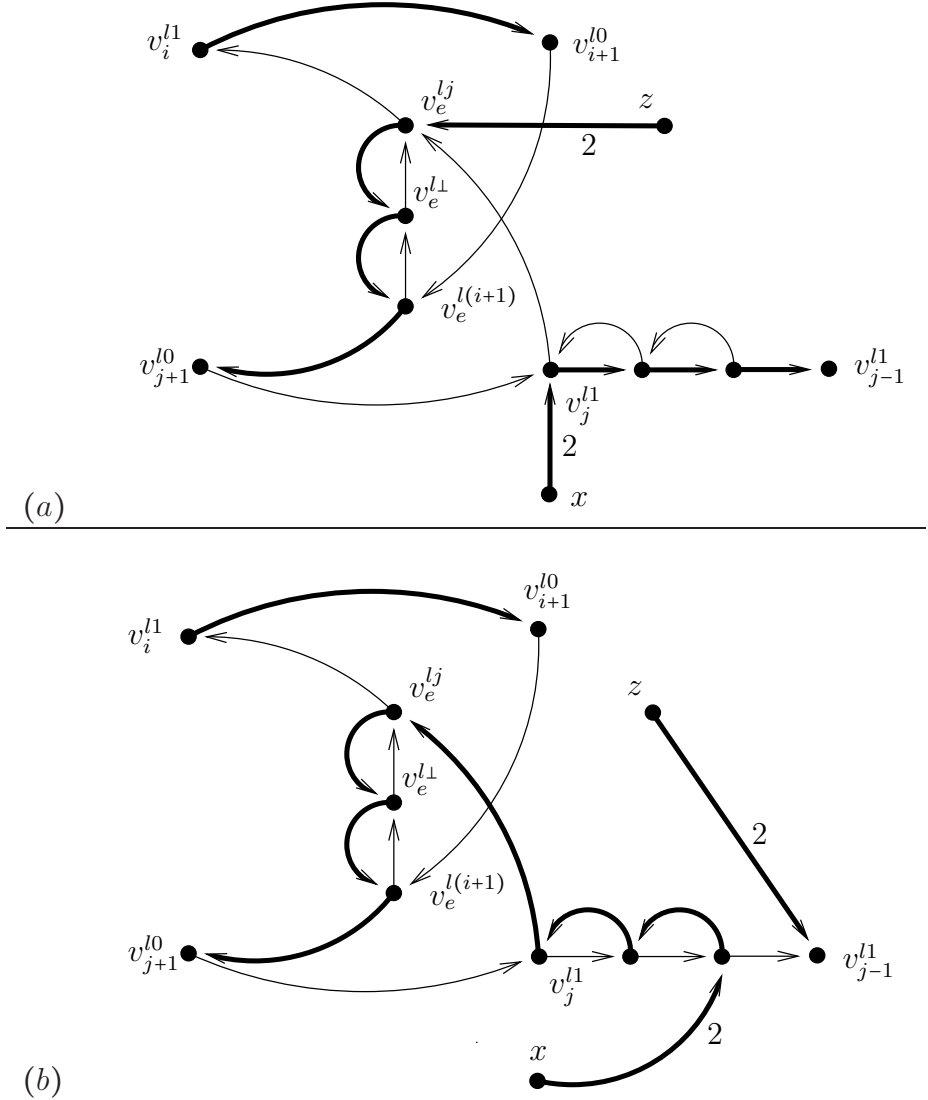

\begin{center}
\begin{tabular}[c]{lc}
 $(a)$ & \input{figures/fig12atsp24.pspdftex} \\
 \hline
 $(b)$ &  \input{figures/fig12atsp25.pspdftex}
 \end{tabular}
\end{center}
\caption{5.Case with $\psi_{\sigma}(x_i)=\psi_{\sigma}(x_{i+1})=1$ and 
$\psi_{\sigma}(x_j) \neq \psi_{\sigma}(x_{j+1})=1$. }
\label{fig:12atsp:16}
\end{figure}
The case, in which $\psi_{\sigma}(x_i)\neq \psi_{\sigma}(x_{i+1})=0$ and 
$\psi_{\sigma}(x_j) \neq \psi_{\sigma}(x_{j+1})=0$ holds, is displayed in 
Figure~\ref{fig:12atsp:20} $(b)$ and can be discussed
analogously.\\
\\
%

\noindent
\textbf{5. Case  $\psi_{\sigma}(x_i) \oplus \psi_{\sigma}(x_{i+1})=0$, 
$\psi_{\sigma}(x_i) \oplus \psi_{\sigma}(x_{j})=1$ and 
$\psi_{\sigma}(x_j) \oplus \psi_{\sigma}(x_{j+1})=1$: }\\
Let the tour $\sigma$ be characterized by 
$\psi_{\sigma}(x_i)=\psi_{\sigma}(x_{i+1})=1$ and 
$\psi_{\sigma}(x_j) \neq \psi_{\sigma}(x_{j+1})=1$.
Then, we  transform $\sigma$ in such a way that we are able to use the arc 
$(v^{l1}_{i},v^{l0}_{i+1})$. The corresponding situation is illustrated 
in Figure~\ref{fig:12atsp:16}~$(a)$. 
In order to change the value of $\psi_{\sigma}(x_j)$, we transform the tour by traversing 
the parity graph $P^l_j$ in the other direction and obtain
$\psi_{\sigma}(x_j)=1$. This transformation induces a tour with at most the same cost. 
On the other hand, the corresponding
assignment $\psi_{\sigma}$ satisfies at least $2-1$ more equations since 
$x^l_j \oplus x^l_{j-1}=0$ might get unsatisfied. In this case, we associate  the local costs
of $6$ with $\sigma$. 
\begin{figure}[h!]
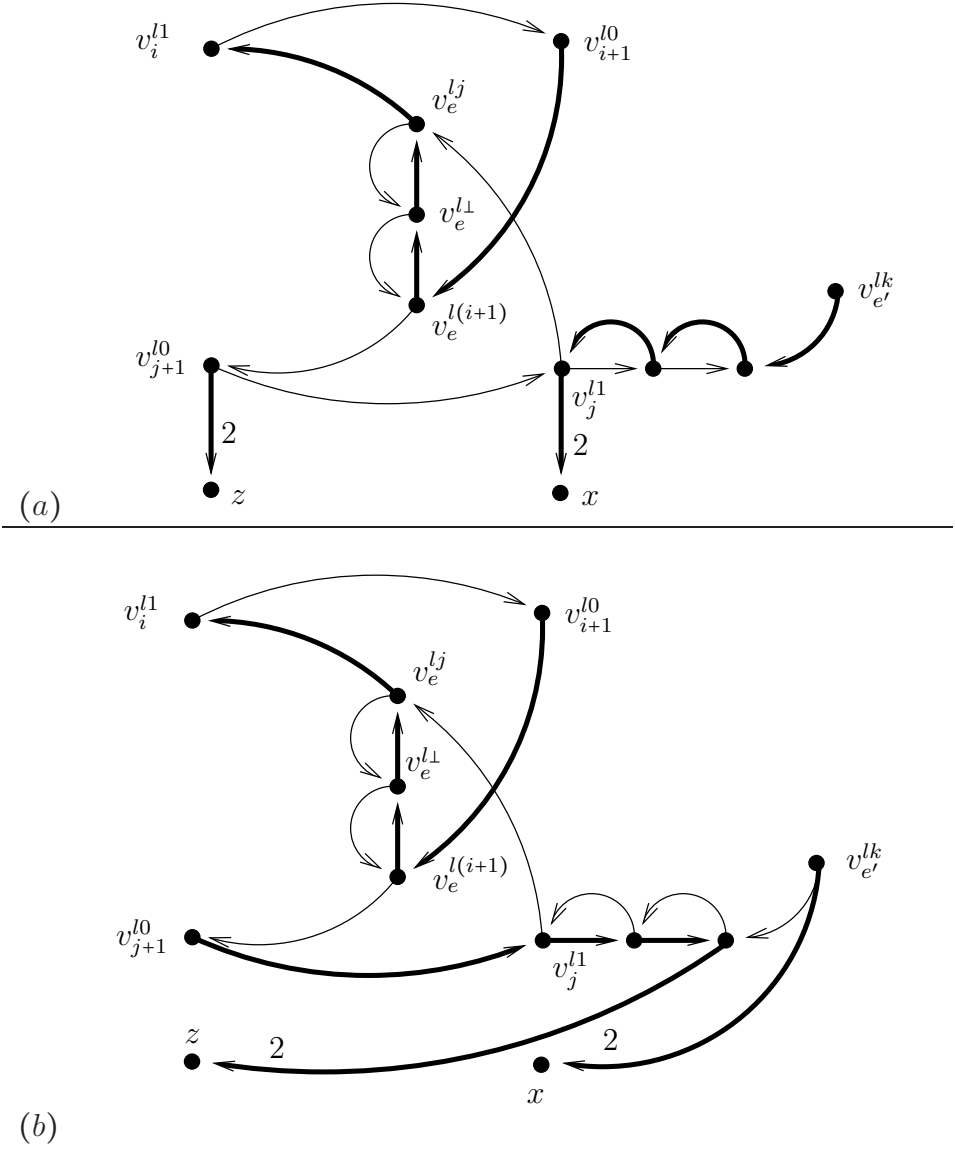

\begin{center}
\begin{tabular}[c]{lc}
 $(a)$ & \input{figures/fig12atsp26.pspdftex} \\
 \hline
 $(b)$ & \input{figures/fig12atsp27.pspdftex}
\end{tabular}
\end{center}
\caption{5.Case with $\psi_{\sigma}(x_i)=\psi_{\sigma}(x_{i+1})=0$ and 
$\psi_{\sigma}(x_j) \neq \psi_{\sigma}(x_{j+1})=0$.  }
\label{fig:12atsp:17}
\end{figure}  

In the other case, in which $\psi_{\sigma}(x_i)=\psi_{\sigma}(x_{i+1})=0$ and 
$\psi_{\sigma}(x_j) \neq \psi_{\sigma}(x_{j+1})=0$ holds, we may argue similarly.
The transformation is depicted in Figure~\ref{fig:12atsp:17} $(a)-(b)$.\\
\\

\noindent
\textbf{6. Case  $\psi_{\sigma}(x_i) \oplus \psi_{\sigma}(x_{i+1})=1$, 
$\psi_{\sigma}(x_i) \oplus \psi_{\sigma}(x_{j})=1$ and 
$\psi_{\sigma}(x_j) \oplus \psi_{\sigma}(x_{j+1})=1$: }\\
Given a tour $\sigma$ with 
$\psi_{\sigma}(x_i)\neq \psi_{\sigma}(x_{i+1})=1$ and 
$\psi_{\sigma}(x_j) \neq \psi_{\sigma}(x_{j+1})=0$,
we transform $\sigma$ such that it traverses the parity graph $P^l_j$
in the opposite direction meaning $\psi_{\sigma}(x_j)=0$ (cf. Figure~\ref{fig:12atsp:18}). 
This transformation enables us to use the arc $(v^{l0}_{j+1},v^{l1}_{j})$.
Furthermore, it yields at least one more satisfied equation in $\cH$.
In order to connect the remaining vertices, we are forced to use at least two $2$-arcs.
In summary, we associate  the local length $7$ with this situation in conformity with
the at most $2$ unsatisfied equations by $\psi_{\sigma}$.\\
\\
 
\begin{figure}[h]
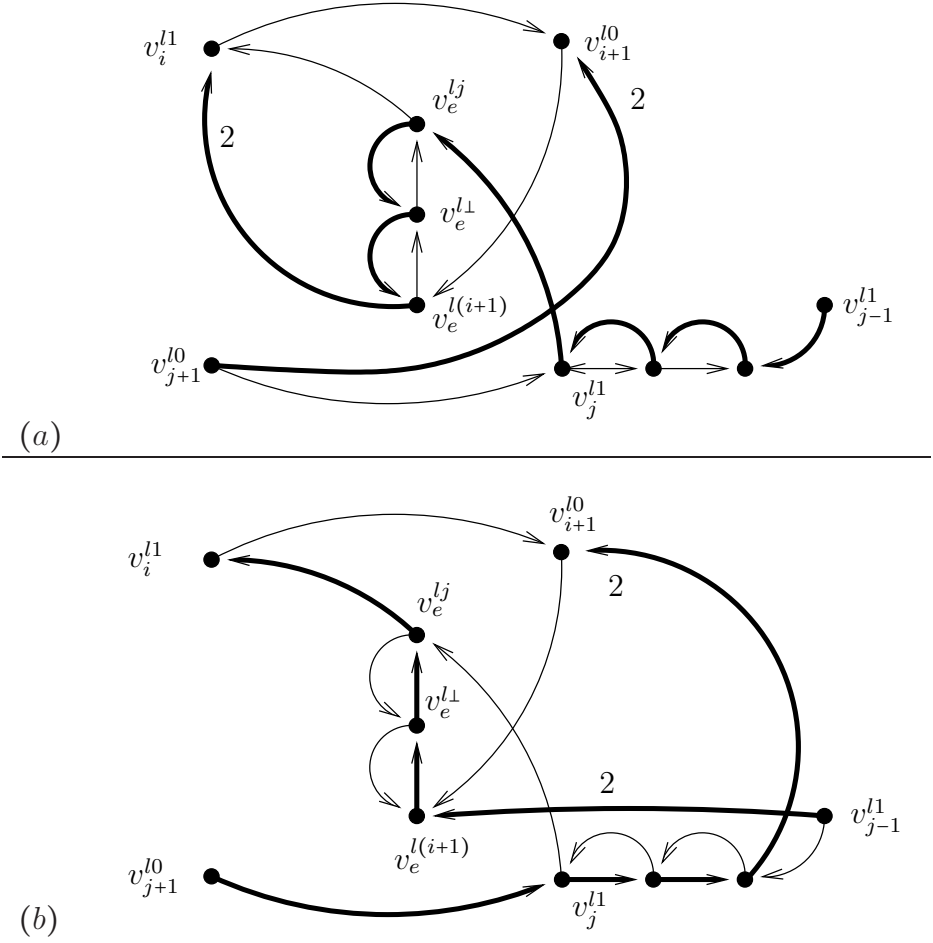

\begin{center}
\begin{tabular}[c]{lc}
 $(a)$ &  \input{figures/fig12atsp30.pspdftex} \\
 \hline
 $(b)$ & \input{figures/fig12atsp31.pspdftex}
\end{tabular}
\end{center}
\caption{6.Case with $\psi_{\sigma}(x_i)\neq \psi_{\sigma}(x_{i+1})=1$ and 
$\psi_{\sigma}(x_j) \neq \psi_{\sigma}(x_{j+1})=0$.   }
\label{fig:12atsp:18}
\end{figure}

If we are given a tour $\sigma$ with  $\psi_{\sigma}(x_i)\neq \psi_{\sigma}(x_{i+1})=0$ and 
$\psi_{\sigma}(x_j) \neq \psi_{\sigma}(x_{j+1})=1$, we obtain the situation displayed
in Figure~\ref{fig:12atsp:19} $(a)$. By applying local transformations without increasing the
length of the underlying tour, we achieve the scenario depicted in Figure~\ref{fig:12atsp:19} $(b)$. 
We argue that the associated local length of the tour is $7$.
 
\begin{figure}[h]
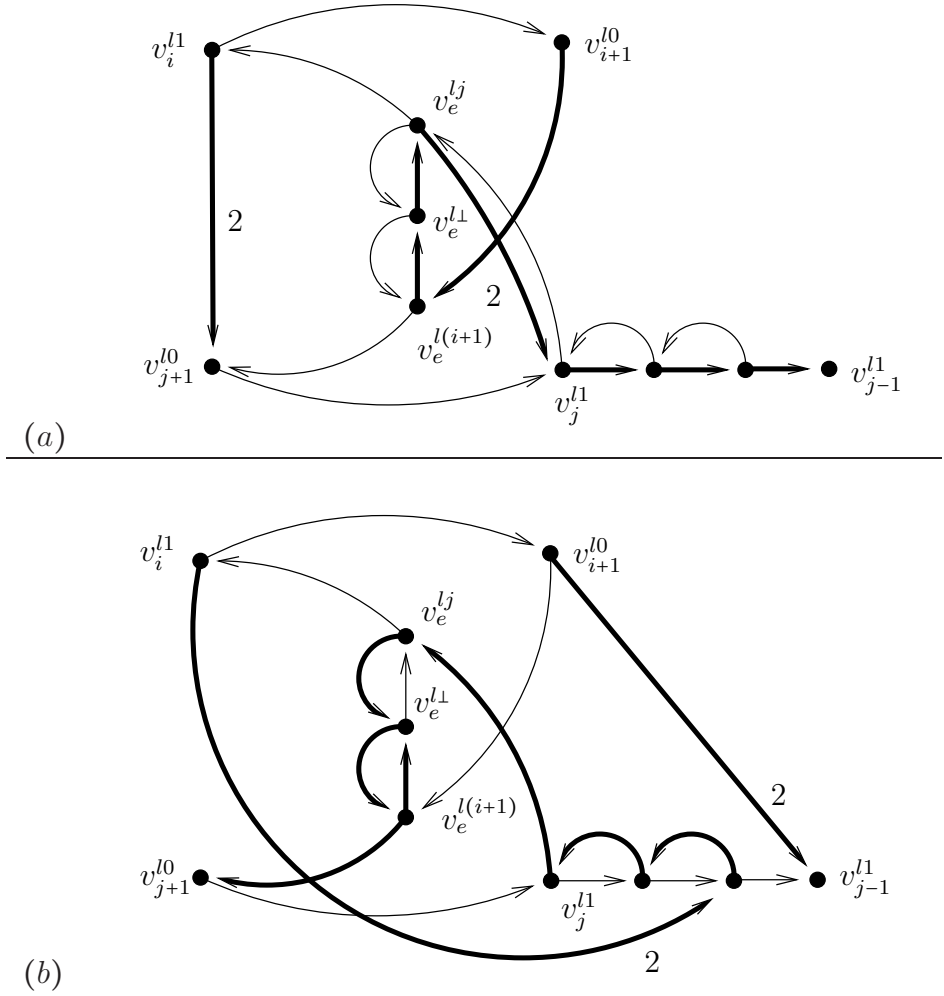

\begin{center}
\begin{tabular}[c]{lc}
$(a)$ & \input{figures/fig12atsp28.pspdftex} \\
\hline
$(b)$ &  \input{figures/fig12atsp29.pspdftex}
\end{tabular}
\end{center}
\caption{6.Case  with  $\psi_{\sigma}(x_i)\neq \psi_{\sigma}(x_{i+1})=0$ and 
$\psi_{\sigma}(x_j) \neq \psi_{\sigma}(x_{j+1})=1$.  }
\label{fig:12atsp:19}
\end{figure} 
\noindent
The case, in which 
$\psi_{\sigma}(x_i) \oplus \psi_{\sigma}(x_{i+1})=1$, 
$\psi_{\sigma}(x_i) \oplus \psi_{\sigma}(x_{j})=1$ and 
$\psi_{\sigma}(x_j) \oplus \psi_{\sigma}(x_{j+1})=0$ holds, can be 
discussed analogously.\\
\\
%
We obtain the following proposition.
\begin{proposition}\label{12atspsigtopsimatching}
Let   $x^l_i \oplus x^l_{i+1}=0$, $x^l_i \oplus x^l_{j}=0$ and 
$x^l_j \oplus x^l_{j+1}=0$ be  equations in $\cH$.
Then, it is possible to transform in polynomial time 
the given tour $\sigma$ 
passing through the graphs corresponding to $x^l_i \oplus x^l_{i+1}=0$, $x^l_i \oplus x^l_{j}=0$ 
and $x^l_j \oplus x^l_{j+1}=0$ such that it  has local length 
$5 +u$ and 
 the number of unsatisfied equations in $\{x^l_i \oplus x^l_{i+1}=0, 
x^l_i \oplus x^l_{j}=0,x^l_j \oplus x^l_{j+1}=0\}$ by $\psi_{\sigma}$ is bounded
from above by $u$.
\end{proposition}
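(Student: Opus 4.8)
The plan is to establish the statement by an exhaustive case analysis on the parity pattern that $\psi_\sigma$ induces on the three equations, matching each case to the re-routings already exhibited in Figures~\ref{fig:12atsp:13}--\ref{fig:12atsp:19}; this is the mirror image of Proposition~\ref{prep:phitosigmat}. By Proposition~\ref{prop:consistent} we may assume that $\sigma$ is consistent, so that each of the parity graphs $P^l_i, P^l_{i+1}, P^l_j, P^l_{j+1}, P^l_e$ is traversed by a $0/1$-traversal and $\psi_\sigma$ is well defined through Definition~\ref{def:psi}. Writing $a=\psi_\sigma(x_i)\oplus\psi_\sigma(x_{i+1})$, $b=\psi_\sigma(x_i)\oplus\psi_\sigma(x_j)$ and $c=\psi_\sigma(x_j)\oplus\psi_\sigma(x_{j+1})$, the number of the three equations left unsatisfied by $\psi_\sigma$ equals the number of the bits $a,b,c$ that are $1$, so it suffices to treat the eight patterns $(a,b,c)\in\{0,1\}^3$.

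First I would dispose of the patterns for which a direct routing already uses exactly as many $2$-arcs as there are unsatisfied equations. For $(0,0,0)$ the local transformation of Figure~\ref{fig:12atsp:13} reroutes $\sigma$ so that no $2$-arc touches the six boundary vertices, giving local length $5$ and $u=0$. Each of $(0,1,0)$, $(0,0,1)$ and $(1,0,0)$ forces a single $2$-arc at the parity graph $P^l_e$ (Figures~\ref{fig:12atsp:14} and \ref{fig:12atsp:15}), so the local length is $6=5+1$ and precisely one of the three equations is unsatisfied, i.e.\ $u=1$. For $(1,0,1)$ the routing of Figure~\ref{fig:12atsp:20} is forced to use additional $2$-arcs, but the local accounting attributes to it length $7=5+2$, matching its two unsatisfied equations, so $u=2$. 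In all these cases the bound ``unsatisfied $\le u$'' holds with equality.

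The main obstacle, and the genuinely new step relative to the forward direction, is the remaining group of patterns $(0,1,1)$, $(1,1,0)$ and $(1,1,1)$, where a straightforward routing would require strictly more $2$-arcs than there are locally unsatisfied equations. Here I would invoke the variable-flipping transformation of Figures~\ref{fig:12atsp:16}--\ref{fig:12atsp:19}: traversing $P^l_j$ in the opposite direction flips $\psi_\sigma(x_j)$, which turns the two matching-type mismatches $b$ and $c$ into agreements while simultaneously freeing the interface $1$-arcs (for instance $(v^{l0}_{j+1},v^{l1}_j)$) that the previous direction could not use. Two points must be checked carefully. First, the reversal must not increase $\ell(\sigma)$: this holds because the internal cost of $P^l_j$ is the same in either direction, and every incoming or outgoing $2$-arc that the old direction forced is replaced by an available $1$-arc in the new one, as displayed in the figures. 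Second, the modified arc set must still form a single Hamiltonian tour, which is exactly what the explicit connections in Figures~\ref{fig:12atsp:16}--\ref{fig:12atsp:19} are designed to guarantee. After the reversal, $(0,1,1)$ and $(1,1,0)$ attain local length $6=5+1$ with at most one of the three equations unsatisfied, and $(1,1,1)$ attains local length $7=5+2$ with at most two; hence in every case the local length is $5+u$ while $\psi_\sigma$ leaves at most $u$ of the three equations unsatisfied.

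Finally I would note that inside each pattern the two incarnations (the value-$0$ and value-$1$ sub-cases, e.g.\ Figure~\ref{fig:12atsp:16} versus Figure~\ref{fig:12atsp:17}) are interchanged by the global symmetry swapping $0$- and $1$-traversals, so one representative per pattern suffices, and the analogous remarks already recorded in the text cover $(1,0,0)$ and $(1,1,0)$. Since every transformation is local and computable in polynomial time, this proves the proposition. The one accounting subtlety to keep in view is that the flip in the last group may render the neighbouring circle equation $x^l_{j-1}\oplus x^l_j=0$ unsatisfied; this equation lies outside the three bounded here, and in the global summation it is charged within the matching group that contains $x_{j-1}$, so it is not double-counted.
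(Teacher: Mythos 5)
Your proposal follows the paper's own proof almost verbatim: assume consistency via Proposition~\ref{prop:consistent}, classify the eight parity patterns $(a,b,c)$, discharge $(0,0,0)$, $(0,1,0)$, $(0,0,1)$, $(1,0,0)$, $(1,0,1)$ by the direct accounting of Figures~\ref{fig:12atsp:13}--\ref{fig:12atsp:15} and~\ref{fig:12atsp:20}, and repair the remaining patterns by reversing a parity graph. However, there is a concrete error in your treatment of $(1,1,0)$. You prescribe the same move as for $(0,1,1)$ and $(1,1,1)$, namely reversing $P^l_j$, and assert that this ``turns the two matching-type mismatches $b$ and $c$ into agreements,'' concluding local length $6$ with at most one unsatisfied equation. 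For $(1,1,0)$ this is false: there $c=\psi_{\sigma}(x_j)\oplus\psi_{\sigma}(x_{j+1})=0$ is already an agreement, so flipping $\psi_{\sigma}(x_j)$ sends $(1,1,0)$ to $(1,0,1)$, which still has two unsatisfied equations. This case cannot be waved away, because a consistent tour realizing $(1,1,0)$ at local length $6$ genuinely exists --- it is exactly what the forward construction produces in Figure~\ref{fig:12atsp:12} --- and after your transformation (which must not increase length) you would be left with local length $6=5+1$ but two unsatisfied equations, contradicting the bound being proved. The correct move, and the only reading under which the paper's ``can be discussed analogously'' remark after its Case~6 is sound, is the $i\leftrightarrow j$ mirror of the $(0,1,1)$ treatment: reverse $P^l_i$ (the variable incident to both mismatches $a$ and $b$), sending $(1,1,0)$ to $(0,0,0)$, at the price of possibly unsatisfying the outside equation $x^l_{i-1}\oplus x^l_i=0$ instead of $x^l_{j-1}\oplus x^l_j=0$. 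Your appeal to ``the analogous remarks already recorded in the text'' does not rescue the argument, since the mechanism you describe is precisely what those remarks would have to instantiate, and as described it fails.

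Apart from this one case, the proposal coincides with the paper's proof: the light patterns are handled identically, and your closing remark on the bookkeeping of the possibly broken neighbouring circle equation is the same slack the paper records as ``at least $2-1$ more satisfied equations,'' with the extra unit of local length (the $6$ rather than $5$, respectively $7$ rather than $6$) absorbing the damage done outside the group.
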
   
\subsubsection{Transforming $\sigma$ in Graphs Corresponding to Equations With Three Variables}
Let $g^3_c\equiv x^l_i \oplus x^s_j \oplus x^r_k = 0$ be an equation with three variables  in $\cH$.
Furthermore, let $\cC_l$ be a circle in $\cH$ and $x^l_i \oplus x^l_{i+1}=0$ a circle equation.
For notational simplicity, we set $e=\{i,i+1\}$. We are going to analyze the the number of satisfied equations
by $\psi_{\sigma}$ in relation to 
the local length of $\sigma$ in the graphs $P^l_{i}$, $P^l_{i+1}$, $P^l_{e}$
and $D^3_{c}$. First, we transform the tour traversing the graphs $P^l_{i}$, $P^l_{i+1}$ and $P^l_{e}$ such that
it uses the $\psi_{\sigma}(x^l_i)$-traversal of $P^l_{e}$.
Afterwards, due to the construction of $D^3_c$ and Proposition~\ref{pro:gadget3atsp}, 
the tour can be transformed such that it has  local length
$3\cdot3+4$ if it passes an even number of parity graphs  $P\in \{P^l_{e},P^k_{a},P^s_{b}\}$ 
by using a simple path
through $D^3_c$, otherwise, it yields a local length of $13+1$. 
Let us start with the case $\psi_{\sigma }(x^l_i)=1$ and $ \psi_{\sigma }(x^l_{i+1})=1$.\\
\\

\begin{figure}[h!]
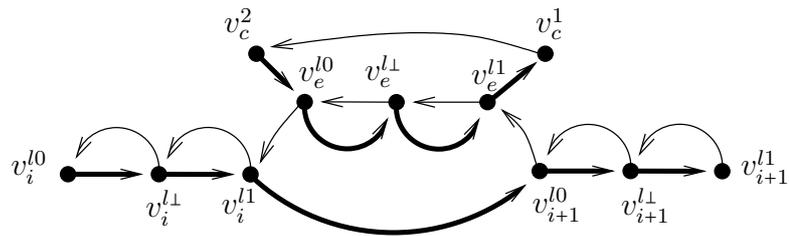

\begin{center}
\begin{tabular}[t]{lc}
$(a)$ &  \input{figures/fig12atsp40.pspdftex}  \\
 \hline  \\
$(b)$ & \input{figures/fig12atsp40b.pspdftex}
\end{tabular}
\end{center}
\caption{1. Case $\psi_{\sigma }(x^l_i)=1$ and $ \psi_{\sigma }(x^l_{i+1})=1$ }
\label{fig:12atsp:25}
\end{figure} 
%
\noindent
\textbf{1. Case $\psi_{\sigma }(x^l_i)=1$ and $ \psi_{\sigma }(x^l_{i+1})=1$:}\\
In Figure~\ref{fig:12atsp:25} $(a)$ and $(b)$, we display the tour passing through $P^l_{i}$,  $P^l_{i+1}$
 and $P^l_{e}$ with  $\psi_{\sigma }(x^l_i)=1$
and $ \psi_{\sigma }(x^l_{i+1})=1$ before and after the transformation, respectively.\\
\\
\begin{figure}[h!]
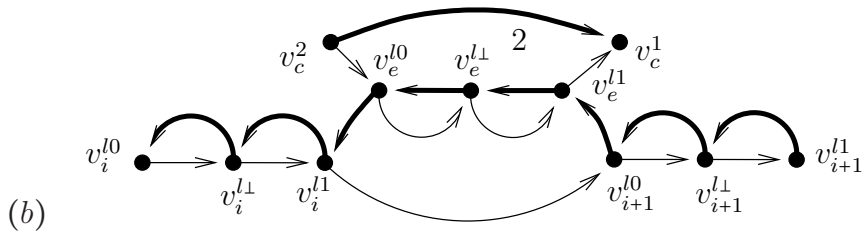

\begin{center}
\begin{tabular}[t]{lc}
$(a)$ &  \input{figures/fig12atsp41b.pspdftex}  \\
 \hline  \\
$(b)$ & \input{figures/fig12atsp41.pspdftex}
\end{tabular}
\end{center}
\caption{2. Case $\psi_{\sigma }(x^l_i)=0$ and  $ \psi_{\sigma }(x^l_{i+1})=0$ }
\label{fig:12atsp:26}
\end{figure}

It is possible to transform the tour $\sigma$ without increasing the length such that it traverses 
the arc $(v^{l1}_i,v^{l0}_{i+1})$.
In the outer loop, the tour may use at least one of the arcs $(v^2_c, v^{l0}_e)$ and $( v^{l1}_e,v^1_c)$
depending on the parity check in $D^3_c$. We associate the local length $1$ with this part of the tour.\\
\\

\begin{figure}[h!]
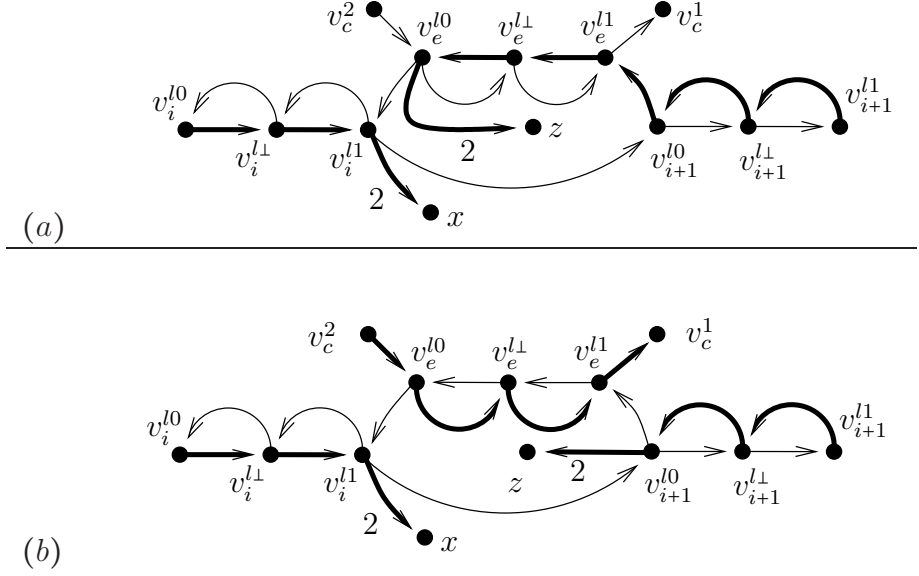

\begin{center}
\begin{tabular}[t]{lc}
$(a)$ &  \input{figures/fig12atsp42a.pspdftex}  \\
 \hline  \\
$(b)$ & \input{figures/fig12atsp42b.pspdftex}
\end{tabular}
\end{center}
\caption{3. Case $\psi_{\sigma }(x^l_i)=1$, $ \psi_{\sigma }(x^l_{i+1})=0$ 
and $\psi_{\sigma }(x^l_i)\oplus \psi_{\sigma }(x^s_j) \oplus \psi_{\sigma }(x^r_k) =0$  }
\label{fig:12atsp:27}
\end{figure}
\begin{figure}[h!]
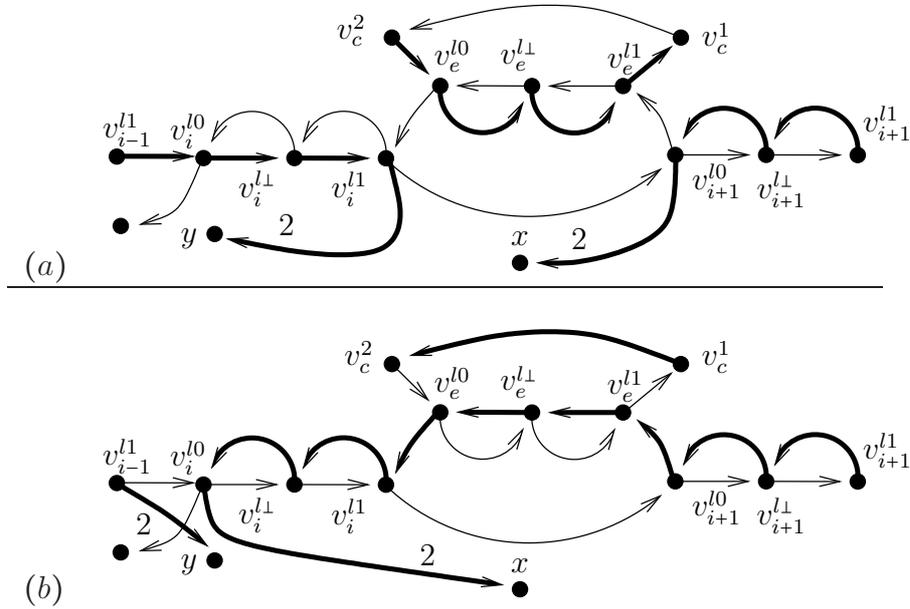

\begin{center}
\begin{tabular}[t]{lc}
$(a)$ &  \input{figures/fig12atsp43.pspdftex}  \\
 \hline  \\
$(b)$ & \input{figures/fig12atsp43b.pspdftex}
\end{tabular}
\end{center}
\caption{3. Case $\psi_{\sigma }(x^l_i)=1$, $ \psi_{\sigma }(x^l_{i+1})=0$ 
and $\psi_{\sigma }(x^l_i)\oplus \psi_{\sigma }(x^s_j) \oplus \psi_{\sigma }(x^r_k) =1$.  }
\label{fig:12atsp:28}
\end{figure}  
\noindent
\textbf{2. Case $\psi_{\sigma }(x^l_i)=0$ and  $ \psi_{\sigma }(x^l_{i+1})=0$ :}\\
%
%
%
In Figure~\ref{fig:12atsp:26}, we display the underlying scenario with   $\psi_{\sigma }(x^l_i)=0$
and $ \psi_{\sigma }(x^l_{i+1})=0$.
The transformed tour uses the $0$-traversal of the parity graph $P^l_e$. The vertices
$v^2_c$ and $v^1_c$ are connected via a $2$-arc. We assign the local length $1$ to this part
of the tour.\\
\\
%
%
 %
%
 %
%
\noindent
\textbf{3. Case $\psi_{\sigma }(x^l_i)=1$ and  $ \psi_{\sigma }(x^l_{i+1})=0$ :}\\
Let us assume that $\psi_{\sigma }(x^l_i)\oplus \psi_{\sigma }(x^s_j) \oplus \psi_{\sigma }(x^r_k) =0$ holds.
Hence, it is possible to transform the tour such that it uses the path 
$v^2_c\rightarrow v^{l0}_e \rightarrow v^{l\bot}_e \rightarrow v^{l1}_e \rightarrow v^1_c$
and thus, the $0$-traversal of the parity graph $P^l_e$ as displayed in  Figure~\ref{fig:12atsp:27}.

\begin{figure}[h!]
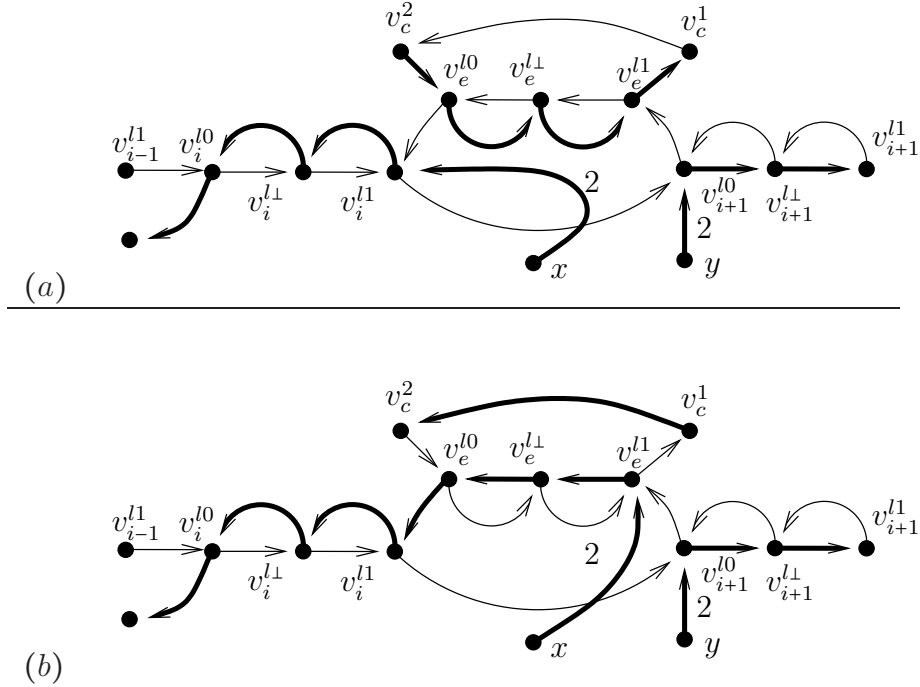

\begin{center}
\begin{tabular}[t]{lc}
$(a)$ &  \input{figures/fig12atsp44.pspdftex}  \\
 \hline  \\
$(b)$ & \input{figures/fig12atsp44b.pspdftex}
\end{tabular}
\end{center}
\caption{4. Case $\psi_{\sigma }(x^l_i)=0$, $ \psi_{\sigma }(x^l_{i+1})=1$
and 
$\psi_{\sigma }(x^l_i)\oplus \psi_{\sigma }(x^s_j) \oplus \psi_{\sigma }(x^r_k) =0$ }
\label{fig:12atsp:29}
\end{figure} 

\noindent
In the other case, namely  $\psi_{\sigma }(x^l_i)\oplus \psi_{\sigma }(x^s_j) 
\oplus \psi_{\sigma }(x^r_k) =1$,
we will change the value of $\psi_{\sigma }(x^l_i)$ achieving in this way at least 
$2-1$ more satisfied equation. Let us examine the scenario in Figure~\ref{fig:12atsp:28}. 
The tour uses the $0$-traversal of the parity graph $P^l_e$, which enables $\sigma$
to pass the parity check in $D^3_c$.

In both cases, we obtain the local length $2$ in conformity with the at most one unsatisfied equation
by $\psi_{\sigma}$.
 
\begin{figure}[h]
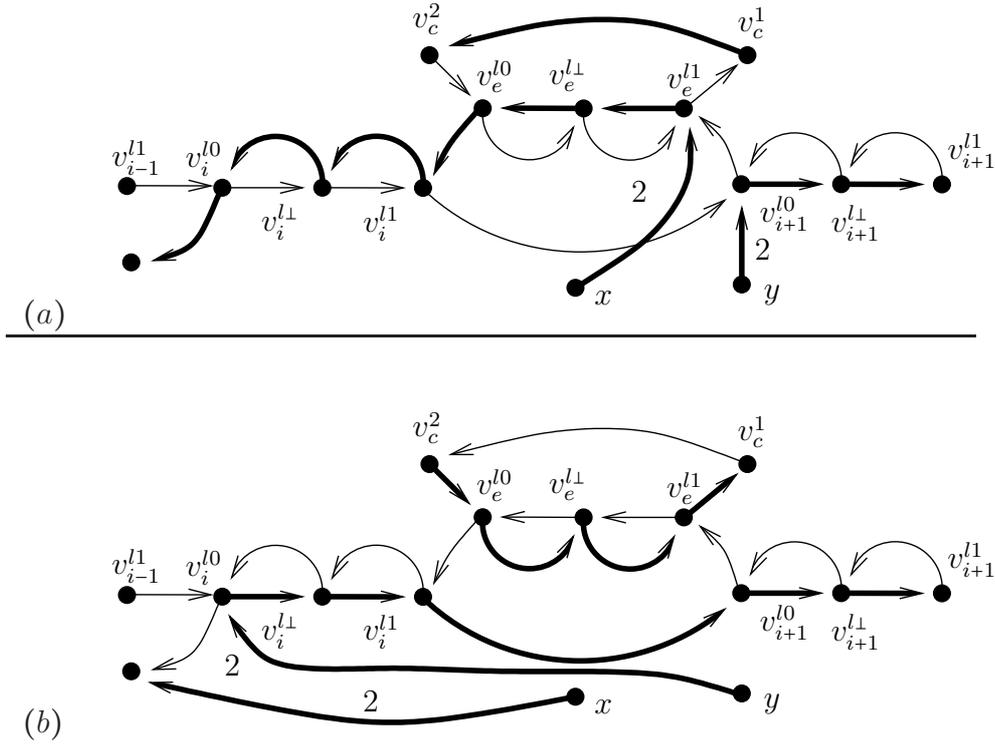

\begin{center}
\begin{tabular}[t]{lc}
$(a)$ &  \input{figures/fig12atsp45.pspdftex}  \\
 \hline  \\
$(b)$ & \input{figures/fig12atsp45b.pspdftex}
\end{tabular}
\end{center}
\caption{4. Case $\psi_{\sigma }(x^l_i)=0$,  $ \psi_{\sigma }(x^l_{i+1})=1$ and 
$\psi_{\sigma }(x^l_i)\oplus \psi_{\sigma }(x^s_j) \oplus \psi_{\sigma }(x^r_k) =1$ }
\label{fig:12atsp:30}
\end{figure} 
 %

\noindent
\textbf{4. Case $\psi_{\sigma }(x^l_i)=0$ and  $ \psi_{\sigma }(x^l_{i+1})=1$ :}\\
Assuming $\psi_{\sigma }(x^l_i)\oplus \psi_{\sigma }(x^s_j) \oplus \psi_{\sigma }(x^r_k) =0$
and the scenario depicted in Figure~\ref{fig:12atsp:29} $(a)$,
the tour will be modified such that the parity graphs $P^l_i$ and $P^l_e$ are traversed in 
the same direction. Since we have $\psi_{\sigma }(x^l_i)\oplus \psi_{\sigma }(x^s_j) \oplus 
\psi_{\sigma }(x^r_k) =0$,
we are able to uncouple the parity graph $P^l_e$ from the tour through $D^3_c$ without increasing its length.
We display the transformed tour in Figure~\ref{fig:12atsp:29} $(b)$.

Assuming $\psi_{\sigma }(x^l_i)\oplus \psi_{\sigma }(x^s_j) \oplus \psi_{\sigma }(x^r_k) =1$
and the scenario depicted in Figure~\ref{fig:12atsp:30} $(a)$,
we transform $\sigma$ such that the parity graph $P^l_e$ is traversed when $\sigma$ is passing through
$D^3_c$ meaning $v^2_c\rightarrow v^{l0}_e \rightarrow v^{l\bot}_e \rightarrow v^{l1}_e \rightarrow v^1_c$
is a part of the tour. In addition, we change the value of $\psi_{\sigma}(x^l_i)$ yielding at least $2-1$
more satisfied equations. The transformed tour is displayed in Figure~\ref{fig:12atsp:30} $(b)$.

In both cases, we associate the local length $2$ with $\sigma$. On the other hand, 
$\psi_{\sigma}$ leaves at most one equation unsatisfied.\\
\\
We obtain the following proposition. 
\begin{proposition}\label{12atspsigtopsieq3}
Let $g^3_c\equiv x^l_i \oplus x^s_j \oplus x^r_k = 0$ be an equation with three variables  in $\cH$.
Furthermore, let  $x^l_i \oplus x^l_{i+1}=0$, 
$x^s_j \oplus x^s_{j+1}=0$ and $x^r_k \oplus x^r_{k+1}=0$ be circle equations in $\cH$.
Then, it is possible to transform in polynomial time 
the given tour $\sigma$
passing through the graph corresponding to $x^l_i \oplus x^l_{i+1}=0$, 
$x^s_j \oplus x^s_{j+1}=0$, $x^r_k \oplus x^r_{k+1}=0$ and $g^3_c$ such that it has local length 
$4+3\cdot3 +3 +u$ and
 the number of unsatisfied equations in $\{x^l_i \oplus x^l_{i+1}=0,x^s_j \oplus x^s_{j+1}=0,
x^r_k \oplus x^r_{k+1}=0, g^3_c\}$ by $\psi_{\sigma}$ is at most $u$.
\end{proposition}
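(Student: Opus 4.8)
The plan is to mirror, in the reverse direction, the case analysis that established Proposition~\ref{prep:phitosigeq3}, now extracting the assignment $\psi_{\sigma}$ from the traversal directions of the parity graphs and charging each violated equation to a forced $2$-arc. By Proposition~\ref{prop:consistent} I may assume $\sigma$ is consistent, so each of $P^l_i,P^l_{i+1},P^l_e$ (and the analogous blocks for the circles of $x^s_j$ and $x^r_k$) is traversed by one of its two $0/1$-traversals, and Definition~\ref{def:psi} yields well-defined values $\psi_{\sigma}(x^l_i),\psi_{\sigma}(x^s_j),\psi_{\sigma}(x^r_k)$. The first step is a local normalization: inside each block I reroute $\sigma$ at no extra cost so that $P^l_e$ is traversed by its $\psi_{\sigma}(x^l_i)$-traversal; this realignment is exactly what Figures~\ref{fig:12atsp:25}--\ref{fig:12atsp:30} exhibit.

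With the three parity graphs inside $D^3_c$ aligned to $\psi_{\sigma}$, Proposition~\ref{pro:gadget3atsp} governs the outer loop: a Hamiltonian path from $s_c$ to $s_{c+1}$ through $D^3_c$ visiting $v^1_c,v^2_c,v^3_c$ exists iff an even number of $P\in\{P^l_e,P^s_b,P^k_a\}$ is traversed in the inner loop, which after alignment is precisely the condition $\psi_{\sigma}(x^l_i)\oplus\psi_{\sigma}(x^s_j)\oplus\psi_{\sigma}(x^r_k)=0$, i.e.\ that $g^3_c$ is satisfied. Hence the $D^3_c$-part of $\sigma$ can be realized with cost $3\cdot 3+4$ when $g^3_c$ is satisfied and $13+1$ otherwise, supplying the term $4+3\cdot 3$ plus one penalty unit per violated $g^3_c$. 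I then split on $(\psi_{\sigma}(x^l_i),\psi_{\sigma}(x^l_{i+1}))$: a satisfied circle equation $x^l_i\oplus x^l_{i+1}=0$ (Cases~1--2, Figures~\ref{fig:12atsp:25}--\ref{fig:12atsp:26}) costs local length $1$, a violated one costs $2$. Summing the base $1$ over the three circle equations gives $+3$, so the intended total $4+3\cdot 3+3+u$ is reached exactly when the excess $u$ counts the forced penalties.

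The crux is the amortized bookkeeping in Cases~3 and~4 when the parity check fails, i.e.\ $\psi_{\sigma}(x^l_i)\oplus\psi_{\sigma}(x^s_j)\oplus\psi_{\sigma}(x^r_k)=1$ (Figures~\ref{fig:12atsp:28} and~\ref{fig:12atsp:30}). Rather than paying the bad $D^3_c$-path of cost $13+1$ while also violating $x^l_i\oplus x^l_{i+1}=0$, I reverse the traversal of $P^l_i$, which flips $\psi_{\sigma}(x^l_i)$ without increasing $\ell(\sigma)$ and restores even parity, so that the cheap $D^3_c$-path of cost $3\cdot 3+4$ becomes available and both $g^3_c$ and $x^l_i\oplus x^l_{i+1}=0$ become satisfied. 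The price of the flip is a single $2$-arc inside the block and the possibility that the neighbouring circle equation $x^l_{i-1}\oplus x^l_i=0$ turns from satisfied to violated; this is the ``$2-1$'' trade, gaining at least two satisfied equations while losing at most one, so that the block retains local length $2$ in conformity with at most one charged violation. Carrying out the symmetric subcases and the analogous reversals for $x^s_j$ and $x^r_k$ completes the four-case enumeration.

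The step I expect to be the main obstacle is making the charge assignment globally consistent, since the flip that repairs $g^3_c$ rewrites $P^l_i$, a parity graph that is also touched by the matching- and border-equation analyses (Propositions~\ref{12atspsigtopsimatching} and~\ref{prep:phitosigcirlb}) through the shared contact variable $x^l_i$. I must verify that each forced $2$-arc is attributed to exactly one equation and that a violation pushed onto a neighbouring circle equation is paid by that neighbour's own penalty budget, so that no $2$-arc is double-counted and no violation is counted twice across gadgets. Once the penalties are shown to partition cleanly over the four equation types, summing the present bound together with Propositions~\ref{12atspsigtopsimatching} and~\ref{prep:phitosigcirlb} over all gadgets yields the global length $3m_2+13m_3+n+1+u$ of Theorem~\ref{thm:main12atsp}~(i)(b) with at most $u$ unsatisfied equations, as required.
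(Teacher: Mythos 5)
Your proposal matches the paper's own proof in all essentials: assume consistency, align $P^l_e$ with the $\psi_{\sigma}(x^l_i)$-traversal, invoke Proposition~\ref{pro:gadget3atsp} to price the outer loop at $3\cdot3+4$ versus $13+1$, split into the four cases on $(\psi_{\sigma}(x^l_i),\psi_{\sigma}(x^l_{i+1}))$, and in the failed-parity subcases flip the traversal of $P^l_i$ to realize exactly the paper's ``$2-1$'' trade, charging the possible new violation to the neighbouring circle equation. Your closing worry about cross-gadget double-counting is the same implicit bookkeeping the paper relies on (note that contact variables such as $x^l_i$ never occur in matching equations, so only adjacent circle equations can absorb the pushed violation), so this is not a departure but the identical argument.
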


\subsubsection{Transforming $\sigma$ in Graphs Corresponding to Circle Border Equations}
Let $C_l$ be a circle in $\cH$ and $x^l_1\oplus x^l_n=0$ its
circle border equation. Furthermore, let $g^3_c\equiv x^l_n \oplus x^s_j \oplus x^r_k = 0 $ be an 
equation with three
variables  contained in $\cH$. We are going to transform a given tour $\sigma$
passing through the graph corresponding to $x^l_1\oplus x^l_n=0$ such that it will have the local 
length $2$ if $x^l_1\oplus x^l_n=0$ is satisfied by $\psi_\sigma$ and $3$, otherwise.  
  In each case, we modify $\sigma$ such that it
uses a $\psi(x^l_n)$-traversal of $P^l_{\{1,n\}}$.  Afterwards, $\sigma$ will be checked  in $D^3_c$ 
whether it passes the parity test. Let us begin with the analysis starting with the case 
$\psi_{\sigma }(x_1)=0$ and $ \psi_{\sigma }(x_n)=0$.

\begin{figure}[h!]
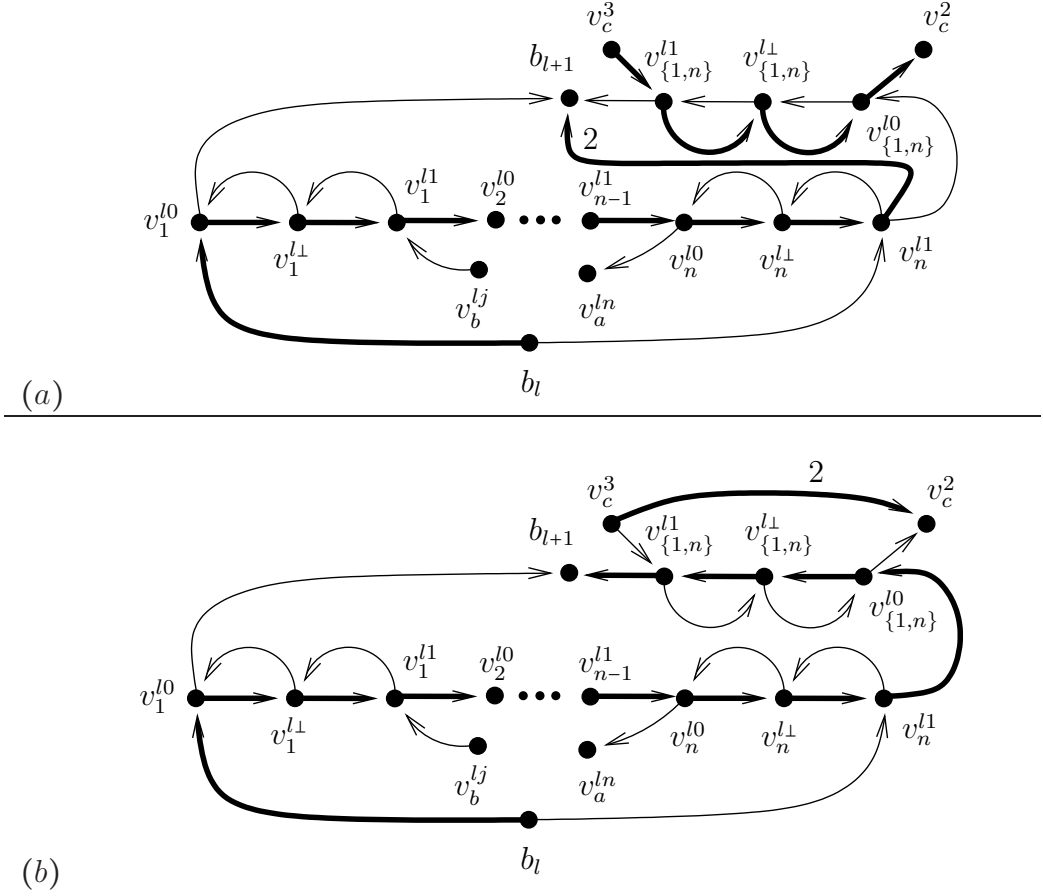

\begin{center}
\begin{tabular}[t]{lc}
$(a)$ &  \input{figures/fig12atsp35.pspdftex}  \\
 \hline  \\
$(b)$ & \input{figures/fig12atsp35b.pspdftex}
\end{tabular}
\end{center}
\caption{Case $\psi_{\sigma }(x_1)=0$ and $ \psi_{\sigma }(x_n)=0$ }
\label{fig:12atsp:21a}
\end{figure}
 
\noindent
\textbf{1. Case $\psi_{\sigma }(x_1)=0$ and $ \psi_{\sigma }(x_n)=0$:}\\
Let us assume that $\psi_{\sigma}$ leaves $g^3_c$ 
unsatisfied meaning $\psi_{\sigma }(x^l_n)\oplus \psi_{\sigma }(x^s_j) \oplus \psi_{\sigma }(x^r_k) =1$. 
In addition to it, we assume that the path $v^3_c \rightarrow v^{l1}_{\{1,n\}}
\rightarrow v^{l\bot}_{\{1,n\}} \rightarrow  v^{l0}_{\{1,n\}}
\rightarrow v^2_c$ is a part of $\sigma$. Notice that $\sigma$ fails the parity check in  $D^3_c$
if $v^3_c \rightarrow v^{l1}_{\{1,n\}}
\rightarrow v^{l\bot}_{\{1,n\}} \rightarrow  v^{l0}_{\{1,n\}}
\rightarrow v^2_c$ is not used by $\sigma$. 
First, 
 we modify the tour such that it includes the arc $(b_l, v^{l0}_1)$. 
 For the same reason, we may assume that $v^{l1}_n$
and $b_{l+1}$ is connected via a $2$-arc.
We obtain the scenario depicted in Figure~\ref{fig:12atsp:21a} $(a)$.
As for the next step, we transform $\sigma$ such that it contains the arcs 
$( v^{l1}_n,v^{l0}_{\{1,n\}} )$ and $(v^{l1}_{\{1,n\}}, b_{l+1})$. Consequently, we use the $1$-traversal of  the 
parity graph $P^l_{\{1,n\}}$ and connect $v^3_c$ and $v^2_c$ via a $2$-arc.
The modified tour is depicted in Figure~\ref{fig:12atsp:21a} $(b)$. 

If $\psi_{\sigma}$ satisfies $g^3_c$ and $\sigma$ contains the path 
$v^3_c \rightarrow v^{l1}_{\{1,n\}}
\rightarrow v^{l\bot}_{\{1,n\}} \rightarrow  v^{l0}_{\{1,n\}}
\rightarrow v^2_c$, 	we modify $\sigma$ in $D^3_c$ such that it passes
the parity test in $D^3_c$ and contains the arc $(v^2_c, v^3_c)$. 

In both cases, we associate the local length $2$
with this part of $\sigma$.\\
\\

\begin{figure}[h!]
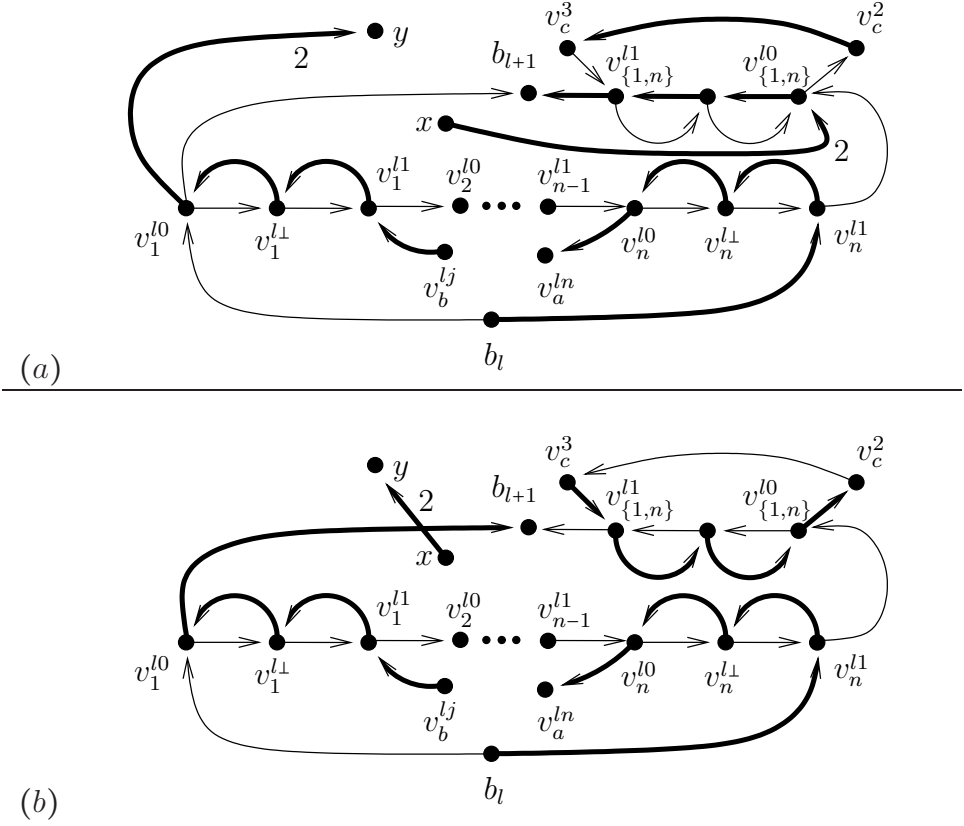

\begin{center}
\begin{tabular}[t]{lc}
$(a)$ &  \input{figures/fig12atsp36.pspdftex}  \\
 \hline  \\
$(b)$ & \input{figures/fig12atsp36b.pspdftex}
\end{tabular}
\end{center}
\caption{2. Case $\psi_{\sigma }(x_1)=1$ and $ \psi_{\sigma }(x_n)=1$ }
\label{fig:12atsp:22}
\end{figure} 

\noindent
\textbf{2. Case $\psi_{\sigma }(x_1)=1$ and $ \psi_{\sigma }(x_n)=1$:}\\
Let us assume that $\psi_{\sigma }(x^l_n)\oplus \psi_{\sigma }(x^s_j) \oplus \psi_{\sigma }(x^r_k) =1$
holds and $\sigma$ contains the arc $(v^2_c, v^3_c)$.  
Given this scenario, we may assume that $(b_l, v^{l1}_n)$ is contained in $\sigma$ due to a simple
modification. Then, we are going to analyze the situation depicted in Figure~\ref{fig:12atsp:22}~$(a)$.
We transform $\sigma$ in the way described in Figure~\ref{fig:12atsp:22}~$(b)$. Afterwards, $\sigma$ will
be  modified
in $D^3_c$ such that it uses a simple  path in $D^3_c$ failing the parity check.

The case, in which  $\psi_{\sigma }(x^l_i)\oplus \psi_{\sigma }(x^l_i) \oplus \psi_{\sigma }(x^l_i) =0$
holds
and $\sigma$ contains the arc $(v^2_c, v^3_c)$, can be discussed similarly since $\sigma$
passes the parity check by including the path $v^3_c \rightarrow v^{l1}_{\{1,n\}}
\rightarrow v^{l\bot}_{\{1,n\}} \rightarrow  v^{l0}_{\{1,n\}}
\rightarrow v^2_c$. 

In both cases, we associate the length $2$
with this part of $\sigma$.\\
\\
\begin{figure}[h!]
\begin{center}
\begin{tabular}[t]{lc}
$(a)$ &  \input{figures/fig12atsp37.pspdftex}  \\
 \hline  \\
$(b)$ & \input{figures/fig12atsp37b.pspdftex} \\
\hline\\
$(c)$ & \input{figures/fig12atsp37c.pspdftex} 
\end{tabular}
\end{center}
\caption{3. Case $\psi_{\sigma }(x_1)=0$ and $ \psi_{\sigma }(x_n)=1$ }
\label{fig:12atsp:23}
\end{figure}   

\noindent
\textbf{3. Case $\psi_{\sigma }(x_1)=0$ and $ \psi_{\sigma }(x_n)=1$:}\\
Let us assume that $\psi_{\sigma }(x^l_n)\oplus \psi_{\sigma }(x^s_j) \oplus \psi_{\sigma }(x^r_k) =1$
holds and $\sigma$ traverses the path $v^3_c \rightarrow v^{l1}_{\{1,n\}}
\rightarrow v^{l\bot}_{\{1,n\}} \rightarrow  v^{l0}_{\{1,n\}}
\rightarrow v^2_c$. 
Then, we transform the tour $\sigma$ such that it contains the arc 
  $(v^{l0}_{1},b_{l+1})$. Note that neither $(b_l, v^{l0}_1)$ nor $(b_l, v^{l1}_n)$
  is included in the tour. Hence, $\sigma$ contains  a $2$-arc to connect $b_l$.
  The same holds for the vertex $v^{l1}_n$. 
  This situation is displayed in Figure~\ref{fig:12atsp:23} $(a)$.

We are going to invert the value of $\psi_{\sigma}(x^l_n)$ such that $\psi_{\sigma}$
  satisfies $g^3_c$ and $x^l_1 \oplus x^l_n =0$. In this way, we gain at least $2-1$
  more satisfied equations. The corresponding transformation is pictured in  
   Figure~\ref{fig:12atsp:23} $(b)$.
   
On the other hand, if we assume that 
$\psi_{\sigma }(x^l_n)\oplus \psi_{\sigma }(x^s_j) \oplus \psi_{\sigma }(x^r_k) =0$
holds and $\sigma$ traverses the path $v^3_c \rightarrow v^{l1}_{\{1,n\}}
\rightarrow v^{l\bot}_{\{1,n\}} \rightarrow  v^{l0}_{\{1,n\}}
\rightarrow v^2_c$, we modify the tour as depicted in Figure~\ref{fig:12atsp:23} $(c)$.
Note that $\sigma$ passes the parity check in $D^3_c$ and therefore, the tour
may use a simple path in $D^3_c$.

We associate the local length $3$ with $\sigma$ in this case.\\
\\

\begin{figure}[h!]
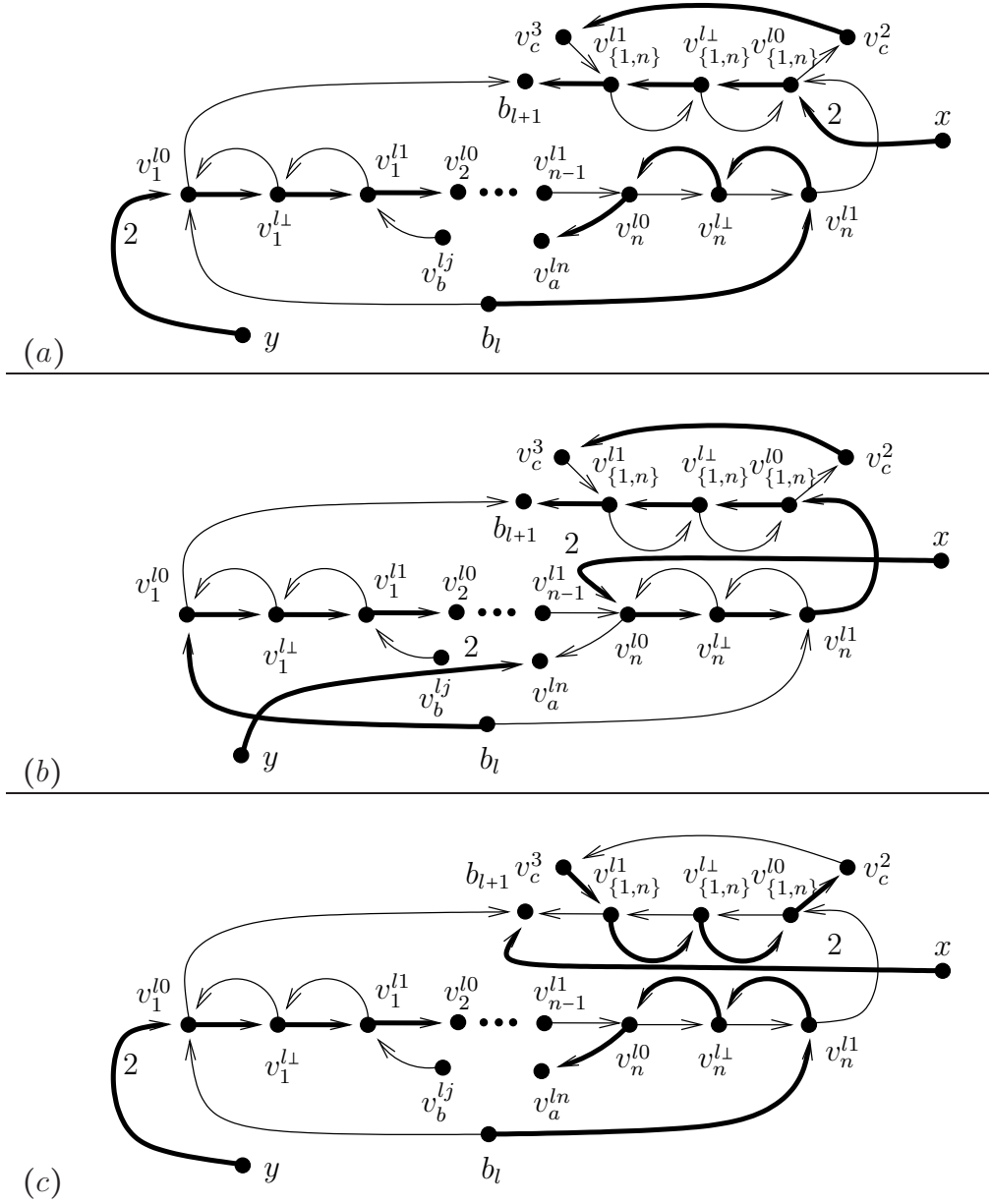

\begin{center}
\begin{tabular}[t]{lc}
$(a)$ &  \input{figures/fig12atsp38.pspdftex}  \\
 \hline  \\
$(b)$ & \input{figures/fig12atsp38b.pspdftex} \\
\hline \\
$(c)$ & \input{figures/fig12atsp38c.pspdftex}
\end{tabular}
\end{center}
\caption{4. Case $\psi_{\sigma }(x_1)=1$ and $ \psi_{\sigma }(x_n)=0$ }
\label{fig:12atsp:24}
\end{figure}  
\noindent
\textbf{4. Case $\psi_{\sigma }(x_1)=1$ and $ \psi_{\sigma }(x_n)=0$:}\\
Let us assume that $\psi_{\sigma }(x^l_n)\oplus \psi_{\sigma }(x^s_j) \oplus \psi_{\sigma }(x^r_k) =1$
holds and $\sigma$ uses the arc  $(v^2_c ,v^3_c)$. 

Then, we transform the tour $\sigma$ such that it contains the arc 
  $(b_{l}, v^{l1}_{n})$. We note that neither $( v^{l0}_1, b_{l+1})$ nor 
  $( v^{l1}_n,v^{l0}_{\{1,n\}} )$
  is included in the tour. For this reason, $\sigma$ must use  a $2$-arc to connect $v^{l0}_{\{1,n\}}$.
  The same holds for the vertex $v^{l0}_1$.
  The corresponding situation is displayed in Figure~\ref{fig:12atsp:24} $(a)$.
  
We modify the tour as displayed in  Figure~\ref{fig:12atsp:24} $(b)$ and obtain at least $2-1$
more satisfied equations. 
  
On the other hand, if we assume that  
$\psi_{\sigma }(x^l_n)\oplus \psi_{\sigma }(x^s_j) \oplus \psi_{\sigma }(x^r_k) =0$
holds, we need to include the path $v^3_c \rightarrow v^{l1}_{\{1,n\}}
\rightarrow v^{l\bot}_{\{1,n\}} \rightarrow  v^{l0}_{\{1,n\}}
\rightarrow v^2_c$ as depicted in Figure~\ref{fig:12atsp:24} $(c)$.\\
In both cases, we associate the local length $3$ with this part of the tour.\\
\\
We obtain the following statement.
\begin{proposition}\label{12atspsigtopsicirclebord}
Let $C_l$ be a circle in $\cH$ and $x^l_1\oplus x^l_n=0$ its
circle border equation. Then, it is possible to transform in polynomial time 
the given tour $\sigma$
passing through the graph corresponding to $x^l_1\oplus x^l_n=0$ such that it has local length $2$
if $x^l_1\oplus x^l_n=0$ is satisfied  by $\psi_{\sigma}$ and $3$, otherwise.
\end{proposition}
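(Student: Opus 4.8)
The plan is to prove the statement by the four-case analysis that precedes it (Figures~\ref{fig:12atsp:21a}--\ref{fig:12atsp:24}), organised by the ordered pair of parities $(\psi_{\sigma}(x^l_1),\psi_{\sigma}(x^l_n))$. First I would invoke Proposition~\ref{prop:consistent} to assume that $\sigma$ is consistent, so that $\psi_{\sigma}(x^l_1)$ and $\psi_{\sigma}(x^l_n)$ are well defined through Definition~\ref{def:psi} and are read off from the $0/1$-traversal directions of the parity graphs $P^l_1$ and $P^l_n$. The arcs charged to the border equation are exactly those incident to $b_l$ and $b_{l+1}$, the entry and exit arcs of $P^l_1$ and $P^l_n$, and the arcs of the shared parity graph $P^l_{\{1,n\}}$, which is traversed while $\sigma$ passes through $D^3_c$. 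Every transformation below acts only on these arcs together with the simple path chosen inside $D^3_c$, hence is computable in polynomial time; what must be checked in each case is that the rerouting remains a valid Hamiltonian traversal and charges the asserted local length.

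In the two satisfied cases $\psi_{\sigma}(x^l_1)=\psi_{\sigma}(x^l_n)$ (Figures~\ref{fig:12atsp:21a} and~\ref{fig:12atsp:22}) I would reroute $\sigma$ so that it enters $b_l$, threads $P^l_1$ and $P^l_n$, and reaches $b_{l+1}$ using only $1$-arcs in the border region, taking the $\psi_{\sigma}(x^l_n)$-traversal of $P^l_{\{1,n\}}$. Since the two parities coincide these $1$-arcs are simultaneously available, so no $2$-arc is charged to the border gadget and the local length is exactly $2$, in agreement with the satisfied case of the forward construction in Proposition~\ref{prep:phitosigcirlb}. The parity obligation of $D^3_c$ is not charged here: by Proposition~\ref{pro:gadget3atsp} it is discharged within the accounting of the three-variable equation, either by a simple path through $D^3_c$ or, if necessary, by a single $2$-arc internal to $D^3_c$. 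The two subcases differ only by the global relabelling $0\leftrightarrow 1$ and are handled identically.

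In the two unsatisfied cases $\psi_{\sigma}(x^l_1)\neq\psi_{\sigma}(x^l_n)$ (Figures~\ref{fig:12atsp:23} and~\ref{fig:12atsp:24}) the parity mismatch forces at least one $2$-arc into the border region, so the local length is $3$. The delicate point, and the main obstacle, is the coupling to the three-variable equation $g^3_c$ through the shared contact variable $x^l_n$. When the current tour already leaves $g^3_c$ unsatisfied I would invert $\psi_{\sigma}(x^l_n)$ by reorienting its parity traversal; this single reorientation repairs both $g^3_c$ and the border equation, and one must verify that the accompanying surgery does not introduce a second charged $2$-arc, so that the net outcome is a gain of at least $2-1$ satisfied equations against a locally charged cost of one $2$-arc. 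When $g^3_c$ is already satisfied the tour instead keeps the full traversal of $P^l_{\{1,n\}}$ inside $D^3_c$ and passes the parity test by a simple path, again incurring exactly one local $2$-arc. I expect the bulk of the effort to lie not in any single diagram but in confirming, uniformly across the subcases, that these local surgeries preserve consistency of every parity graph and never create a second $2$-arc charged to the border; this is exactly what the case figures are designed to certify, and collecting them yields the dichotomy ``local length $2$ if satisfied, $3$ otherwise''.
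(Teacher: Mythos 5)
Your proposal is correct and follows essentially the same route as the paper's own proof: the same case split on $(\psi_{\sigma}(x^l_1),\psi_{\sigma}(x^l_n))$ with sub-cases on whether $g^3_c$ is satisfied, the same convention of charging the $D^3_c$ parity check to the three-variable equation rather than to the border gadget, and the same surgery of inverting $\psi_{\sigma}(x^l_n)$ (with the $2-1$ gain, since the circle equation $x^l_{n-1}\oplus x^l_n=0$ may break) when both the border equation and $g^3_c$ are unsatisfied. The figure-by-figure verifications you defer are exactly what the paper supplies in Figures~\ref{fig:12atsp:21a}--\ref{fig:12atsp:24}, so there is no gap in the approach.
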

\noindent
Thus far, we are ready to give the proof of Theorem~\ref{thm:main12atsp}$(i)$.

\subsection{Proof of Theorem~\ref{thm:main12atsp}$(i)$}
Let $\cH$ be an  instance of the Hybrid problem consisting of $n$ circles $\cC_1, \ldots, \cC_n$,
$m_2$ equations with two variables
and $m_3$ equations with three variables $g^3_c$ with $c\in [m_3]$. Then, we construct 
in polynomial time the corresponding instance $D_{\cH}=(V(D_{\cH}),A(D_{\cH}))$ of the
$(1,2)$-ATSP problem as described in Section~\ref{sec:constrinst(12)atsp}.\\
\\
$(a)$~Let $\phi$ be an assignment  to the variables in $\cH$ leaving $u$ equations 
in $\cH$ unsatisfied.
According to Proposition~\ref{prep:phitosigmat}~--~\ref{prep:phitosigcirlb}, 
it is possible to construct in polynomial time the  tour $\sigma_{\phi}$ with length 
$$\ell(\sigma_{\phi})\leq  3\cdot m_2+ (4+3\cdot 3)\cdot m_3+n+1+u.   $$ 
\noindent
$(b)$ ~Let $\sigma$ be  a tour  in  $D_{\cH}$ with 
length $\ell(\sigma)=3\cdot m_2+ 13\cdot m_3+n+1+u$.
Due to  Proposition~\ref{prop:consistent} we may assume that $\sigma$
 uses only $0/1$-traversals of every parity graph included in $D_{\cH}$.
According to Definition~\ref{def:psi}, we associate the corresponding assignment $\psi_{\sigma}$
with the underlying tour $\sigma$.   
Recall from  Proposition~\ref{12atspsigtopsimatching}~--~\ref{12atspsigtopsicirclebord} that 
 it is possible to convert
$\sigma$ in polynomial time into a tour $\sigma'$ 
 without increasing the length such that
$\psi_{\sigma'}$ leaves at most $u$ equations in $\cH$ unsatisfied. \qed
\section{Approximation Hardness of the $(1,4)$-ATSP Problem   }
In order to prove the claimed hardness results for the $(1,4)$-ATSP problem, we use the same 
construction described in Section~\ref{sec:constrinst(12)atsp} 
with the difference that all arcs in parity graphs have weight $1$,
whereas all other arcs contained in the directed graph $D_{\cH}$ obtain the weight $2$. 
The induced asymmetric metric space $(V_{\cH},d_{\cH})$ is given by $V_{\cH}=V(D_{\cH})$ and  distance
function defined by the shortest path metric in $D_{\cH}$ bounded by the value $4$.
In other words, given $x,y\in V_{\cH}$, the distance between $x$ and $y$ in $V_{\cH}$ is
$$d_{\cH}(x,y)=\min\{\textrm{length of a shortest path from $x$ to $y$ in $D_{\cH}$}, ~4 \}.$$ 
The only difficulty that remains is to prove that tours remain consistent.
Thus, we have to prove that given a tour $\sigma$ in $V_{\cH}$, we are able to transform
$\sigma$ in polynomial time into a tour $\sigma'$, which uses only $0/1$-traversals in parity graphs 
contained in $D_{\cH}$, without increasing $\ell(\sigma)$. This statement 
can be proved by considering all possibilities exhaustively. 
 Some cases are displayed in Figure~\ref{figcases14atsp1} -- Figure~\ref{figcases14atsp3}.\\
\\
We are ready to give the proof of Theorem~\ref{thm:main12atsp}~$(ii)$.

\subsection{Proof of Theorem~\ref{thm:main12atsp}~$(ii)$ }
Given $\cH$ an  instance of the Hybrid problem consisting of $n$ circles $\cC_1, \ldots, \cC_n$,
$m_2$ equations with two variables
and $m_3$ equations with three variables $g^3_c$ with $c\in [m_3]$,
 we construct in polynomial time 
the associated instance $(V_{\cH},d)$ of the
$(1,4)$-ATSP problem.\\ 
Given an assignment $\phi$ to the variables of $\cH$ 
leaving $u$ equations unsatisfied in $\cH$, then there exists a tour 
with  length at most $ m_2\cdot (2+ 2) + m_3\cdot ( 3\cdot 4+2\cdot 4 )+2\cdot u +2(n+1)  $.\\ 
On the other hand, if we are given a tour $\sigma$ in $V_{\cH}$ with length $4m_2+20m_3+2n +2+2\cdot u $, it is 
possible to transform $\sigma$ in polynomial time into a tour $\sigma'$ without increasing the 
length such that  
the associated assignment $\psi_{\sigma'}$  leaves
at most $u$ equations in $\cH$ unsatisfied. \qed   
\section{Approximation Hardness of the $(1,2)$-TSP Problem }
In order to prove Theorem~\ref{thm:main12tsp}~$(i)$, we apply the 
reduction method used in Section~\ref{sec:12atsp} to the $(1,2)$-TSP problem.
As for the parity gadget, we use the graph  depicted in Figure~\ref{fig:12tsp:1} with its corresponding
traversals. The  traversed edges are pictured by thick lines.\\
 \\
\begin{figure}[h!]
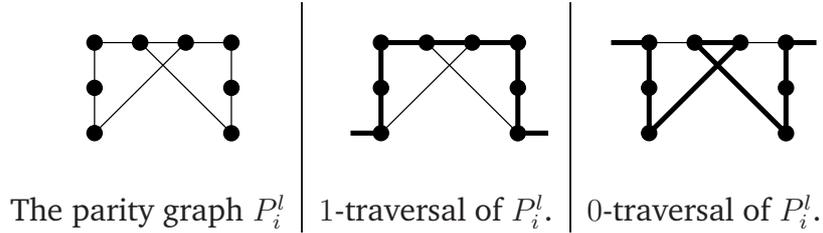

\begin{center}
\begin{tabular}[c]{c|c|c}
\input{figures/fig12parity.pspdftex} & \input{figures/fig12tsp1a.pspdftex} & \input{figures/fig12tsp1b.pspdftex} \\
 The parity graph $P^l_i$ & $1$-traversal of $P^l_i$.   & $0$-traversal of $P^l_i$.
 \end{tabular}
\end{center}
\caption{Traversal of the graph $P^l_i$ given the assignment $\phi$. }
\label{fig:12tsp:1}
\end{figure} 
\begin{figure}[h!]
\begin{center}
\input{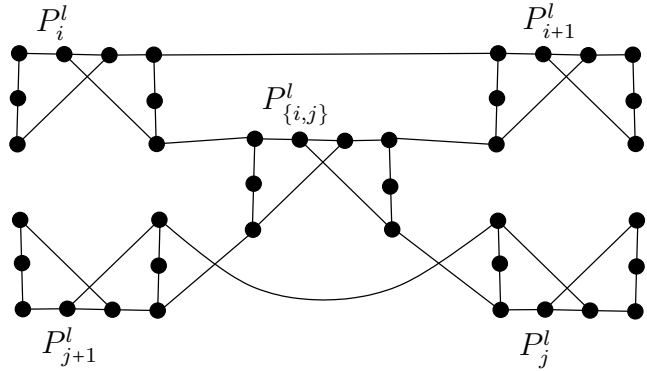}
\end{center}
\caption{Graphs corresponding to  equations $x^l_i \oplus x^l_j =0$,  
$x^l_i \oplus x^l_{i+1} =0$ \& $x^l_j \oplus x^l_{j+1} =0$. }
\label{fig:12tsp:3}
\end{figure} 

Let $\cH$ be an instance of the hybrid problem.
Given a matching equation $x^l_i \oplus x^l_j =0$ in $\cH$ and the corresponding  
circle equations
$x^l_i \oplus x^l_{i+1} =0$ and $x^l_j \oplus x^l_{j+1} =0$, we connect the associated 
parity graphs $P^l_{i}$, $P^l_{i+1}$, $P^l_{\{i,j\}}$, $P^l_{j}$ and $P^l_{j+1}$
 as displayed in Figure~\ref{fig:12tsp:3}.\\
\\
For equations with three variables $g^3_c\equiv x\oplus y \oplus z=0$ 
in $\cH$,  we use the graph $G^3_c$ depicted in 
Figure~\ref{fig:12tsp:2}. Recall from Proposition~\ref{pro:gadget3tsp} that  
there is a simple path from    $s_c$ to  $s_{c+1}$ in Figure~\ref{fig:12tsp:2} 
 containing the vertices 
$v\in \{v^1_c, v^2_c\}$  if and
only if an even number of parity graphs is  traversed.\\
\\ 

\begin{figure}[h!]
\begin{center}
\input{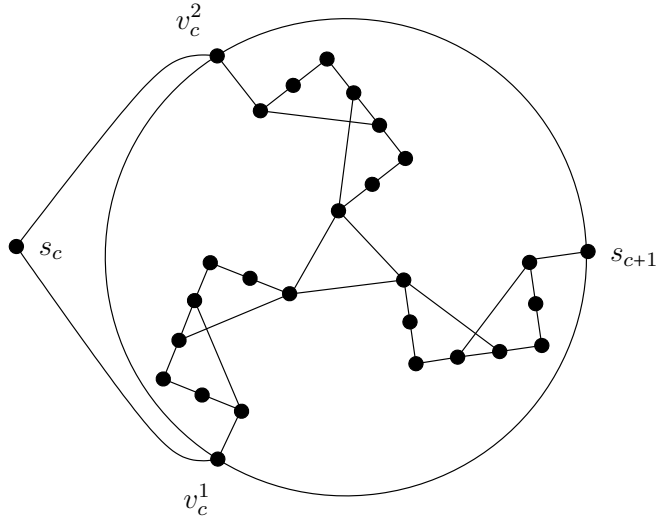}
\end{center}
\caption{ The graph $G^3_c$ corresponding to $x\oplus y \oplus z=0$.  }
\label{fig:12tsp:2}
\end{figure} 

Let $\cC_l$ be a circle in $\cH$ with variables $\{x^l_1, \ldots, x^l_n \}$.
For the circle border equation of $\cC_l$, we introduce 
the path $p_l=b^1_l-b^2_l-b^3_l$. In addition, we connect $b^3_l$ and $b^1_{l+1}$
to the parity graphs $P^l_1$ and $P^l_n$  in a similar way as in the
reduction from the Hybrid problem to the $(1,2)$--ATSP problem.
This is the whole description of the corresponding graph $G_{\cH}=(V(G_{\cH}),E(G_{\cH}))$.\\
\\
We are ready to give the proof of Theorem~\ref{thm:main12tsp}~$(i)$. 
 %
\subsection{Proof of Theorem~\ref{thm:main12tsp}~$(i)$}
Given $\cH$ an  instance of the Hybrid problem consisting of $n$ circles $\cC_1, \ldots, \cC_n$,
$m_2$ equations with two variables
and $m_3$ equations with three variables $g^3_c$ with $c\in [m_3]$,
 we construct in polynomial time 
the associated instance $G_{\cH}$ of the
$(1,2)$--TSP problem. 

Given an assignment $\phi$ to the variables of $\cH$ leaving $u$ equations unsatisfied in $\cH$, 
then, there is a tour 
with  length at most $8 \cdot m_2+ (3\cdot 8+3 )\cdot m_3 + 3n + 1$.

On the other hand, if we are given a tour $\sigma$ in $G_{\cH}$ with length 
$8 \cdot m_2+ (3\cdot 8+3 )\cdot m_3 + 3n + 1 $, it is 
possible to transform $\sigma$ in polynomial time into a tour $\sigma'$ such that it uses $0/1$-traversals
of all contained parity graphs in $G_{\cH}$ without increasing the length. Some cases are displayed in
 Figure~\ref{figcases12tsp}.
Moreover,  
we are able to construct in polynomial time an assignment to the variables of $\cH$, which leaves
at most $u$ equations in $\cH$ unsatisfied. \qed 

\section{Approximation Hardness of the $(1,4)$-TSP Problem}
In order to prove the claimed approximation hardness for the  
$(1,4)$-TSP problem, we cannot use the same parity graphs as in the construction
in the previous section since tours are not necessarily consistent in this metric. For this reason,
we introduce the parity graph depicted in Figure~\ref{fig:14tsp11} with the corresponding traversals. 

\begin{figure}[h]
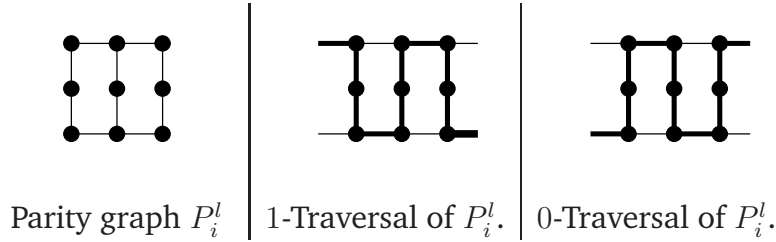

\begin{center}
\begin{tabular}[c]{c|c|c}
 \input{figures/fig14parity.pspdftex} &
  \input{figures/fig14tsp4.pspdftex} & \input{figures/fig14tsp4b.pspdftex} \\
Parity graph  $P^l_i$ & $1$-Traversal of $P^l_i$.   
& $0$-Traversal of $P^l_i$.
 \end{tabular}
\end{center}
\caption{$0/1$-Traversals of the graph $P^l_i$. }
\label{fig:14tsp11}
\end{figure} 
\begin{figure}[h]
\begin{center}
\input{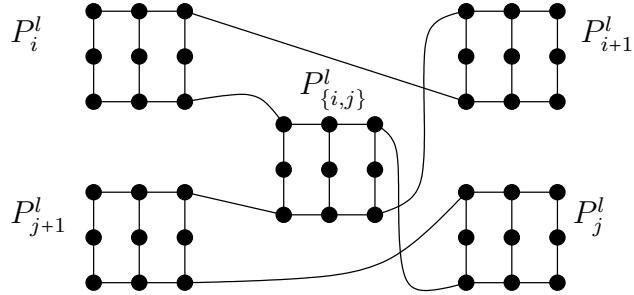}
\end{center}
\caption{ Graphs corresponding to $x^l_i \oplus x^l_j=0$, $x^l_i \oplus x^l_{i+1}=0$
and $x^l_j \oplus x^l_{j+1}=0$. }
\label{fig:14tsp:3}
\end{figure}
 
Given a matching equation $x^l_i \oplus x^l_j=0$ in $\cH$ and the circle equations
$x^l_i \oplus x^l_{i+1}=0$ and $x^l_j \oplus x^l_{j+1}=0$,
we connect the corresponding graphs as displayed in Figure~\ref{fig:14tsp:3}.\\
In order to define the new instance of the $(1,4)$--TSP problem, we replace 
all parity graphs in $G_{\cH}$ by graphs displayed in Figure~\ref{fig:14tsp11}.
In the remainder, we refer to this graph as $H_{\cH}$. All edges contained in 
a parity graph have weight $1$, whereas all other edges have weight $2$.
The remaining distances in the associated metric space $V_{\cH}$ are 
induced by the graphical metric in $H_{\cH}$ bounded by the value $4$
meaning 
 $$d_{\cH}(\{x,y\})=\min\{\textrm{length of a shortest path from $x$ to $y$ in $H_{\cH}$}, ~4 \}.$$  

This is the whole description of the associated instance $(V_{\cH},d_{\cH})$ 
of the $(1,4)$--TSP problem. We are ready to give the proof of Theorem~\ref{thm:main12tsp} $(ii)$.


\subsection{Proof of Theorem~\ref{thm:main12tsp} $(ii)$ }
Given $\cH$ an  instance of the Hybrid problem consisting of $n$ circles $\cC_1, \ldots, \cC_n$,
$m_2$ equations with two variables
and $m_3$ equations with three variables $g^3_c$ with $c\in [m_3]$,
 we construct in polynomial time 
the associated instance $(V_{\cH}, d_{\cH})$ of the
$(1,4)$--TSP problem.

Given an assignment $\phi$ to the variables of $\cH$ leaving $u$ equations unsatisfied in $\cH$, 
 there is a tour in $V_{\cH}$ with length at most  
 $ m_2\cdot (2+ 8) + m_3\cdot ( 3\cdot 10+2\cdot 3 )+ 6n+2  + 2\cdot u   $.
  
 On the other hand, if we are given a tour $\sigma$ in $(V_{\cH}, d_{\cH})$ with 
length $10m_2+36m_3+6n+2  +2\cdot u $, it is 
possible to transform $\sigma$ in polynomial time into a tour $\sigma'$ such that it uses $0/1$-traversals
of all contained parity graphs in $G_{\cH}$ without increasing the length. 
Some cases are displayed in 
Figure~\ref{figcases14tsp}. Then, we are 
able to construct in polynomial time an assignment to the variables of $\cH$, which leaves
at most $u$ equations in $\cH$ unsatisfied. \qed  
 
\section{Further Research}
We have improved (modestly) the best known approximation lower bounds for TSP 
with bounded metrics problems. They almost match the best known bounds
for the general (unbounded) metrics TSP problems. Because of a lack
of the good definability properties of metric TSP, further improvements
seem to be very difficult. A new less \textit{local}  method seems now to be
necessary to achieve much stronger results.

\newpage

\section{Figure Appendix }

\begin{figure}[h]
\begin{center}
\begin{tabular}[t]{c|c}
\input{figures/fig12tcasea.pspdftex} &  \input{figures/fig12tcaseb.pspdftex}  \\
$(a)$ & $(b)$ \\
\hline \\
\input{figures/fig12tcased.pspdftex} &  \input{figures/fig12tcasec.pspdftex}  \\
$(c)$ & $(d)$ \\
\hline \\
\input{figures/fig12tcasee.pspdftex} &  \input{figures/fig12tcasef.pspdftex}  \\
$(e)$ & $(f)$\\
\hline\\
\input{figures/fig12tcaseg.pspdftex} &  \input{figures/fig12tcaseh.pspdftex}  \\
$(g)$ & $(h)$
\end{tabular}
\end{center}
\caption{ Transformations yielding a consistent tour.}
\label{figcases12tsp}
\end{figure}

\newpage

\begin{figure}[h]
\begin{center}
\begin{tabular}[t]{c|c}
\input{figures/fig12acase1a.pspdftex} &  \input{figures/fig12acase1b.pspdftex}  \\
$(a)$ & $(b)$ \\
\hline \\
\input{figures/fig12acase1c.pspdftex} &  \input{figures/fig12acase1d.pspdftex}  \\
$(c)$ & $(d)$ \\
\hline \\
\input{figures/fig12acase1e.pspdftex} &  \input{figures/fig12acase1f.pspdftex}  \\
$(e)$ & $(f)$\\
\hline\\
\input{figures/fig12acase1g.pspdftex} &  \input{figures/fig12acase1h.pspdftex}  \\
$(g)$ & $(h)$
\end{tabular}
\end{center}
\caption{Transformations yielding a consistent tour. }
\label{figcases12atsp}
\end{figure}

\newpage

\begin{figure}
\begin{center}
\begin{tabular}[t]{c|c}
\input{figures/fig14atsp1a.pspdftex} &  \input{figures/fig14atsp1b.pspdftex}  \\
$(a)$ & $(b)$ \\
\hline \\
\input{figures/fig14atsp1c.pspdftex} &  \input{figures/fig14atsp1d.pspdftex}  \\
$(c)$ & $(d)$ \\
\hline \\
\input{figures/fig14atsp43a.pspdftex} &  \input{figures/fig14atsp43b.pspdftex}  \\
$(e)$ & $(f)$\\
\hline\\
\input{figures/fig14atsp41a.pspdftex} &  \input{figures/fig14atsp41b.pspdftex}  \\
$(g)$ & $(h)$
\end{tabular}
\end{center}
\caption{Transformations yielding a consistent tour. }
\label{figcases14atsp1}
\end{figure}

\newpage

\begin{figure}
\begin{center}
\begin{tabular}[t]{c|c}
\input{figures/fig14atsp34a.pspdftex} &  \input{figures/fig14atsp34b.pspdftex}  \\
$(a)$ & $(b)$ \\
\hline \\
\input{figures/fig14atsp33a.pspdftex} &  \input{figures/fig14atsp33b.pspdftex}  \\
$(c)$ & $(d)$ \\
\hline \\
\input{figures/fig14atsp31a.pspdftex} &  \input{figures/fig14atsp31b.pspdftex}  \\
$(e)$ & $(f)$\\
\hline\\
\input{figures/fig14atsp14.pspdftex} &  \input{figures/fig14atsp14b.pspdftex}  \\
$(g)$ & $(h)$
\end{tabular}
\end{center}
\caption{Transformations yielding a consistent tour. }
\label{figcases14atsp2}
\end{figure}

\newpage

\begin{figure}
\begin{center}
\begin{tabular}[t]{c|c}
\input{figures/fig14atsp13a.pspdftex} &  \input{figures/fig14atsp13b.pspdftex}  \\
$(a)$ & $(b)$ \\
\hline \\
\input{figures/fig14atsp33a.pspdftex} &  \input{figures/fig14atsp33b.pspdftex}  \\
$(c)$ & $(d)$ \\
\hline \\
\input{figures/fig14atsp31a.pspdftex} &  \input{figures/fig14atsp31b.pspdftex}  \\
$(e)$ & $(f)$\\
\hline\\
\input{figures/fig14atsp14.pspdftex} &  \input{figures/fig14atsp14b.pspdftex}  \\
$(g)$ & $(h)$
\end{tabular}
\end{center}
\caption{Transformations yielding a consistent tour. }
\label{figcases14atsp3}
\end{figure}

\newpage
\begin{figure}
\begin{center}
\begin{tabular}[t]{c|c}
\input{figures/fig14tsp3a.pspdftex} &  \input{figures/fig14tsp3b.pspdftex}  \\
$(a)$ & $(b)$ \\
\hline \\
\input{figures/fig14tsp3c.pspdftex} &  \input{figures/fig14tsp3d.pspdftex}  \\
$(c)$ & $(d)$ \\
\hline \\
\input{figures/fig14tsp3e.pspdftex} &  \input{figures/fig14tsp3b.pspdftex}  \\
$(e)$ & $(f)$\\
\hline\\
\input{figures/fig14tsp3g.pspdftex} &  \input{figures/fig14tsp3h.pspdftex}  \\
$(g)$ & $(h)$
\end{tabular}
\end{center}
\caption{Transformations yielding a consistent tour. }
\label{figcases14tsp}
\end{figure}

\end{document}